\colorlet{mix}{red!50!black}
\newcommand{\defparrprob}[4]{
	\vspace{1mm}
	\noindent\fbox{
		\begin{minipage}{0.96\textwidth}
			\begin{tabular*}{\textwidth}{@{\extracolsep{\fill}}lr} #1  & {\bf{}} #3 \\ \end{tabular*}
			{\bf{Input:}} #2  \\
			{\bf{Question:}} #4
		\end{minipage}
	}
	\vspace{1mm}
}
\newcommand{\defparrrprob}[4]{
	\vspace{1mm}
	\noindent\fbox{
		\begin{minipage}{0.96\textwidth}
			\begin{tabular*}{\textwidth}{@{\extracolsep{\fill}}lr} #1   \\ \end{tabular*}
			{\bf{Input:}} #2  \\
			{\bf{Output:}} #3
		\end{minipage}
	}
	\vspace{1mm}
}
\newcommand{\yes}{\textsf{YES}\xspace}
\newcommand{\no}{\textsf{NO}\xspace}
\newcommand{\nph}{\textsf{NP}-hard\xspace}
\newcommand{\fpt}{{\sf FPT}\xspace}
\newcommand{\woh}{{\sf W[1]}-hard\xspace}
\newcommand{\lrtfvd}{{\sc $\ell$-Relaxed Transitive-free Vertex Deletion}\xspace}
\newcommand{\mvdext}{{\sc $\ell$-Relaxed Transitive-free Vertex Deletion-Extension}\xspace}
\newcommand{\mvde}{{\sc   $\ell$-RTVD-Ext}\xspace}
\newcommand{\zvd}{{\sc $0$-RTVD}\xspace}
\newcommand{\mzvd}{{\sc Min  $0$-RTVD}\xspace}
\newcommand{\mvd}{{\sc Min  $\ell$-RTVD}\xspace}
\newcommand{\lvd}{{\sc $\ell$-RTVD}\xspace}
\newcommand{\mlrtvd}{{\sc Minimum $\ell$-Relaxed Transitive-free Vertex Deletion}\xspace}
\newtheorem{observation}{Observation}
\newtheorem{theorem}{Theorem}
\newtheorem{definition}[theorem]{Definition}
\newtheorem{lemma}[theorem]{Lemma}
\newtheorem{corollary}[theorem]{Corollary}
\newtheorem{proposition}[theorem]{Proposition}
\newtheorem{fact}{Fact}
\newtheorem{proc}{Procedure}
\newtheorem{clam}{Claim}
\begin{document}

\begin{frontmatter}
\title{Towards Transitive-free Digraphs}

\author[address1]{Ankit Abhinav}
\ead{ankit.abhinav@niser.ac.in}
\author[address2]{Satyabrata Jana}
\ead{satyamtma@gmail.com}
\author[address1]{Abhishek Sahu}
\ead{abhisheksahu@niser.ac.in}

\address[address1]{National Institute of Science Education and Research, An OCC of Homi Bhabha National Institute,	Bhubaneswar 752050, Odisha, India.}
\address[address2]{University of Warwick, UK}

\begin{abstract}
	In a digraph $D$, an arc $e=(x,y) $ in $D$ is considered transitive if there is a path from $x$ to $y$ in $D- e$. A digraph is transitive-free if it does not contain any transitive arc. In the {\sc Transitive-free Vertex Deletion} (TVD) problem, the goal is to find at most $k$ vertices $S$ such that $D-S$ has no transitive arcs. In our work, we study a more general version of the TVD problem, denoted by {\sc $\ell$-Relaxed Transitive-free Vertex Deletion} ({\sc $\ell$-RTVD}), where we look for at most $k$ vertices $S$ such that $D-S$ has no more than $\ell$ transitive arcs. We explore {\sc $\ell$-RTVD} on various well-known graph classes of digraphs such as directed acyclic graphs (DAGs), planar DAGs, $\alpha$-bounded digraphs, tournaments, and their multiple generalizations such as in-tournaments, out-tournaments, local tournaments,  acyclic local tournaments, and obtain the following results. Although the problem admits polynomial-time algorithms in tournaments, $\alpha$-bounded digraphs, and acyclic local tournaments for fixed values of $\ell$, it remains $\textsf{NP}$-Hard even in planar DAGs with maximum degree 6.	In the parameterized realm, for {\sc $\ell$-RTVD} on in-tournaments and out-tournaments, we obtain polynomial kernels parameterized by $k+\ell$ for bounded independence number. But the problem remains fixed-parameter intractable on DAGs when parameterized by $k$.
\end{abstract}

\begin{keyword}
transitive-free digraphs, NP-hard,  FPT,  kernelization, DAG, bounded independence number,
 tournaments,  local tournaments
\end{keyword}

\end{frontmatter}

\section{Introduction}
The act of removing vertices from a directed graph in a manner that ensures specific properties in the resulting graph has a rich history in the field of graph theory. Often, this process serves as a preliminary stage for various graph drawing algorithms, including 3D orthogonal graph drawings \cite{eades2000three}, right-angle-crossing drawings \cite{angelini2011perspectives}, hierarchical and upward drawings \cite{binucci2016computing}. For instance, one may consider Eulerian orientations, where each vertex has an equal in-degree and out-degree, acyclic orientations that yield a directed acyclic graph, or singly connected orientations that guarantee at most one path between any pair of vertices \cite{article}.

Our  paper focuses on investigating a significant property of directed graphs: transitive-freeness. Specifically, we delve into the problem of removing vertices from a graph so that only a limited number of transitive arcs remain in the resulting structure. In a directed graph $D$, an arc $e=(u,v)$ is considered transitive if at least one path exists from $u$ to $v$ in the graph $D-e$ \cite{binucci2023st}. A digraph is said to be {\em transitive-free} if it contains no transitive arcs. In a transitive-free digraph, each arc plays a crucial role in maintaining connectivity that any combination of the remaining arcs cannot achieve. In other words, every arc is essential for preserving connectivity. The optimization version of this problem, referred to as the Transitive-free Vertex Deletion (TVD), aims to find the minimum number of vertices to be deleted from the graph to achieve transitive freeness. We investigate a generalization of this problem known as \lrtfvd (\lvd), where the digraph can have a maximum of $\ell$ transitive arcs.

In many graph layout problems, existing algorithms often rely on an initial orientation computation for the given graph. In such scenarios, reducing the number of transitive arcs in this orientation tends to affect the overall readability of the layout positively. Despite its similarities to the {\sc Singly Connected Vertex Deletion} \cite{article} and {\sc Feedback Vertex Set}  \cite{chen2008fixed} problems, the {\sc Transitive-free Vertex Deletion}(TVD) problem has received relatively little attention. Only recently, Binucci et al. \cite{binucci2023st} investigated a slightly different variant of the problem, which involved orienting edges in an undirected graph to minimize the count of transitive arcs. Their study demonstrated that the problem is \nph for general graphs. In contrast, our paper takes a distinct approach by exploring the vertex deletion version of the problem.

\defparrprob{{\sc $\ell$-Relaxed Transitive-free Vertex Deletion (\lvd)} }{ A digraph $D$, a non-negative integer $k$. }{}{Does there exist a set $S\subseteq V(D)$ such that $|S| \leq k $ and $D - S$ has at most $\ell$ transitive arcs?}

\subsection{Our contribution and methods}

To begin with, in Section \ref{sec-nphard}, we show that \lvd is \nph on directed acyclic graphs (DAGs), even when the underlying undirected graph is planar and has  maximum degree  six. This result is derived through a hardness reduction from {\sc Vertex Cover} on planar graphs with  maximum degree  three \cite{DBLP:journals/jct/Mohar01}.

Subsequently, we derive several polynomial-time algorithms (for constant $\ell$) for various widely recognized subclasses of directed graphs, including tournaments, $\alpha$-bounded digraphs, and acyclic local tournaments. For instance, our algorithm for tournaments (Section \ref{sec:tour}) is constructed by combining Frankl's result on a uniform family of set system \cite{DBLP:journals/jct/FranklF85} with the straightforward observation that any tournament with more than four vertices contains a transitive arc. In the case of $\alpha$-bounded digraphs, we utilize the Gallai-Milgram theorem \cite{gallai1960verallgemeinerung} to establish an upper bound of $\mathcal{O}(\alpha^2)$ on the size of a transitive-free digraph, thereby enabling the development of a polynomial-time algorithm (Section \ref{sec:alphabounded}). Furthermore, employing a dynamic programming approach, we design our third polynomial-time algorithm specifically tailored for acyclic local tournaments (Section \ref{sec:localtour}).

Additionally, we study \lvd in the parameterized framework focusing on various subclasses of digraphs and obtain the following results. In Section \ref{sec-hard}, we show that this problem remains \woh~on directed acyclic graphs (DAGs) when parameterized by the solution size ($k$). To prove this, we utilize a reduction from \textsc{Vertex Multicut} in DAGs, which is already known to be \woh when parameterized by the solution size \cite{multicutindag}.
However, for both in-tournaments and out-tournaments, we  design polynomial (in $k+\ell+\alpha$) size kernels (Section \ref{sec-kernel}), where $\alpha$ is the independence number (equivalently, the graph is $\alpha$-bounded). In designing these kernels, the concept of the {\em cut preserving set} \cite{DBLP:conf/innovations/LochetLM0SZ20} plays a crucial role.

Finally, in Section \ref{Sec-fpt}, we explore the problem for the special case when $\ell=0$ on in-tournaments (and out-tournaments) and show that the problem reduces to the 3-Hitting Set problem, thus admitting an \fpt algorithm and a quadratic vertex kernel.

\section{Preliminaries}

For a positive integer $n$, $[n]$ denotes the natural number set $\{1,2,\ldots,n\}$.  For a digraph, $D$, $V(D)$ and $A(D)$ denote the set of vertices and arcs in $ D $, respectively. A directed arc from $u$ to $v$ is denoted by the ordered pair $(u,v)$.  For a set $S \subseteq V(G)$, $D - S$ denotes the digraph obtained by deleting $S$ from $D$ and $D[S]$ denotes the subgraph of $D$ induced on $S$.  Similarly for a set $A'\subseteq A(D)$, $D-A'$ denotes the digraph obtained by deleting the arcs  $A'$ from $ D $. By $N^{+}_D(v)=\{u|(v,u) \in A(D)\}$ we denote the set of out-neighbors of $v$. Similarly, $N^{-}_D(v)=\{u|(u,v)\in A(D) \}$ denotes the set of in-neighbors of $v$. We may not use $D$ in the notation when the context is clear. A {\em path} $P$ of order $t$ is a sequence of distinct vertices $v_1,v_2,\ldots,v_t$ such that $(v_i,v_{i+1})$ is an arc, for each $i,\ i\in [t-1]$. A {\em path cover} is a set of paths such that every vertex of the digraph $D$ appears on some path. A digraph $D$ is said to be {\em singly connected} if for all ordered pairs of vertices $u,v \in V(D)$, there is at most one path in $D$ from $u$ to $v$.

\begin{definition}[Transitive arc] \label{def:transitivity}
	{\em An arc $e=(u,v)$ is a transitive arc in the digraph $D$ if there exists a directed path $P$ from $u$ to $v$ in $D-e$. A graph $D$  is said to be transitive-free if it has no transitive arcs.} 
\end{definition}

\begin{definition}[Tournament]\label{def:tournament}
	{\em A digraph $D$ is said to be a tournament if for every pair $u,v$ of vertices in $D$, either $(u,v) \in A(D)$ or $(v,u) \in A(D)$ but not both.}
\end{definition}

\begin{definition}[Out-Tournament, In-Tournament, Local-Tournament]\label{def:outinlocal}
	{\em A digraph $D$ is an out-tournament if for every vertex $v$, $D[N^{+}(v)]$ is a tournament. Similarly, a digraph $D$ is an in-tournament if for every vertex $v$, $D[N^{-}(v)]$ is a tournament. $D$ is a local tournament if and only if it is both an out-tournament and an in-tournament.}
\end{definition}

An illustration of examples of definitions \ref{def:tournament} and \ref{def:outinlocal} is depicted in \Cref{fig:enter-label}.

\begin{definition}[$ \alpha $-bounded digraph]\label{def:alpha}
	{\em A digraph $D$ is said to be $ \alpha $-bounded if the size of a maximum independent set
	of the underlying undirected graph of $D$ is at most $\alpha$.} 
\end{definition}

\begin{figure}
    \centering
    \includegraphics[width=1\linewidth]{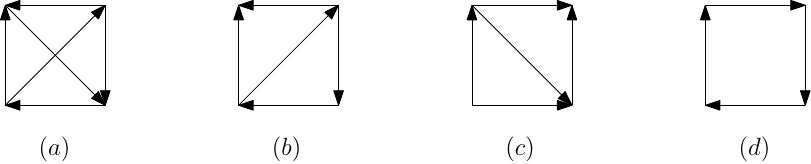}
    \caption{Examples of  tournament (a), in-tournament (b), out-tournament (c), local-tournament (d).}
    \label{fig:enter-label}
\end{figure}


\noindent 
{\bf Parameterized Complexity.} 
A parameterized problem $ \Pi $ is a subset of $ \Gamma^* \times \mathbb{N}$ for some finite alphabet $ \Gamma $. An instance of a parameterized problem consists of $(X, k)$, where $k$ is called the parameter. A parameterized problem $ \Pi \subseteq \Gamma^* \times \mathbb{N}$ is said to be fixed-parameter tractable (or \fpt) if there exists a computable function $f$, a constant $c$, and an algorithm that on input $(X,k) \in  \Gamma^* \times \mathbb{N}$, runs in time $f (k)|(X,k)|^c$, and correctly decides whether or not $(X,k) \in \Pi$. A problem $\Pi$ is \woh~if it does admit  an \fpt algorithm. On the other hand, a kernelization algorithm, for a parameterized problem $ \Pi \subseteq \Gamma^* \times \mathbb{N}$ is an algorithm that, given $(X, k) \in  \Gamma^* \times \mathbb{N}$, outputs in time polynomial in $|X| + k$ a pair $ (X', k')\in  \Gamma^* \times \mathbb{N} $ such that (a) $ (X, k) \in \Pi$ if and only if $(X', k') \in \Pi$ and  (b)  $|X'|,|k| \leq g(k)$, where $ g $ is some computable function depending only on $ k $.  The output of kernelization $  (X', k') $ is referred to as the kernel and the function $ g $ is referred to as the size of the kernel. If $ g(k) \in k^{\mathcal{O}(1)}  $ , then we say that $ \Pi $ admits a polynomial kernel.  We refer to the monographs \cite{DBLP:series/mcs/DowneyF99,DBLP:series/txtcs/FlumG06,DBLP:books/ox/Niedermeier06} for a detailed study of the area of kernelization.

\section{NP-completeness on bounded degree planar DAGs}\label{sec-nphard}

In this section, we show that \lvd remains {\sf NP}-complete     even on planar DAGs with maximum degree $6$.

\begin{theorem}\label{theo-planardag}
\lvd is {\sf NP}-complete on planar directed acyclic graphs with maximum degree $6$.
\end{theorem}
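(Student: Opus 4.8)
The plan is to first dispatch membership in \np, then prove hardness by a reduction from \textsc{Vertex Cover} on planar graphs of maximum degree three \cite{DBLP:journals/jct/Mohar01}. Membership is routine: given a candidate $S$, deciding whether a fixed arc $(x,y)$ is transitive in $D-S$ is a single reachability query from $x$ to $y$ in $(D-S)-(x,y)$, so the number of transitive arcs of $D-S$ can be counted in polynomial time and compared with $\ell$. For the hardness, let $(G,k)$ be an instance of \textsc{Vertex Cover} with $G$ planar and maximum degree three. Fix an acyclic orientation of $G$ via a linear order of $V(G)$, writing $u\prec v$ for the two endpoints of each edge. I build a digraph $D$ whose vertex set is $V(G)$ together with three fresh vertices $a_e,b_e,c_e$ for every edge $e=\{u,v\}$; for each such $e$ I add the arcs $(a_e,u),(u,c_e),(c_e,v),(v,b_e)$ and the \emph{designated} arc $(a_e,b_e)$. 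I set $k'=k$ and, for the base case, $\ell=0$.

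Several properties I would verify first. Since every $a_e$ is a source and every $b_e$ a sink, and the arcs respect $a_e\prec u\prec c_e\prec v\prec b_e$, a global height function shows $D$ is acyclic. The underlying graph of each gadget is the $5$-cycle $a_e\,u\,c_e\,v\,b_e$, attached to the rest only at $u,v$; drawing it in a thin tube along the original edge $uv$ of a fixed planar embedding of $G$ keeps $D$ planar. Each incident edge adds exactly $2$ to the degree of $u$ and of $v$, so $\deg_D(w)=2\deg_G(w)\le 6$, while $a_e,b_e,c_e$ have degree $2$. The crucial step is a \emph{transitivity-control claim}: in $D-S$ the only arcs that can ever be transitive are designated arcs, because the source/sink structure forces every alternative $a_e$--$b_e$ path to begin $a_e\to u$ and end $v\to b_e$ (so it needs both $u$ and $v$), and the same analysis shows the other four arc types never admit a parallel path. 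In particular, if none of $a_e,b_e,c_e,u,v$ lies in $S$ then the witness path $a_e\to u\to c_e\to v\to b_e$ survives and $(a_e,b_e)$ is transitive, whereas if $u\in S$ or $v\in S$ then $(a_e,b_e)$ is not transitive. I emphasize that this holds \emph{globally}, unaffected by paths through other gadgets, which is what makes the remainder clean.

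With the claim in hand both directions follow. If $C$ is a vertex cover of size $k$, then for each edge one of $u,v$ lies in $C$, so no designated arc is transitive and $S=C$ is a solution. Conversely, given a solution $S$ with $D-S$ transitive-free, form $C$ from $S\cap V(G)$ by adding one endpoint of every edge $e$ with $S\cap\{u,v\}=\varnothing$. The \textbf{exchange argument}, which I expect to be the main obstacle, shows $C$ is a vertex cover of size at most $|S|$: if some edge $e$ had $C\cap\{u,v\}=\varnothing$ then $u,v\notin S$ and, by construction, $S\cap\{a_e,b_e,c_e\}=\varnothing$, so the claim would force $(a_e,b_e)$ to be transitive in $D-S$, a contradiction; and the count is controlled because every added endpoint is charged to a distinct private vertex $a_e,b_e,c_e\in S$, giving $|C|\le|S|\le k$. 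This proves \npc for $\ell=0$. To extend to every fixed $\ell\ge 1$, I would attach $\ell$ pairwise-disjoint transitive triangles $x_i\to y_i\to z_i$ with extra arc $(x_i,z_i)$; these are planar, acyclic, of degree two, and always contribute a transitive arc unless one of their vertices is deleted. A short extension of the exchange argument then absorbs the entire allowance into the triangles: the at most $\ell$ remaining transitive arcs are split between un-repaired triangles and surviving designated arcs, the latter bounded by the number of triangle-repair deletions, so each endpoint we must add to $C$ is charged either to a private gadget vertex or to a triangle deletion, again yielding $|C|\le|S|$ and hence the equivalence with $\ell$ transitive arcs allowed.
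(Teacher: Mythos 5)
Your proof is correct and follows essentially the same route as the paper: a reduction from \textsc{Vertex Cover} on planar degree-$3$ graphs via a per-edge gadget whose designated arc is transitive exactly when neither endpoint of the edge is deleted, an exchange/charging argument to push the solution back onto $V(G)$, and padding with $\ell$ disjoint transitive triangles to handle $\ell\ge 1$. The only difference is in the gadget: the paper keeps the arc $(v_i,v_j)$ and adds a single subdivision vertex $x_{ij}$, whereas you place the designated arc $(a_e,b_e)$ between two fresh source/sink vertices with the witness path routed through both endpoints; both versions give maximum degree $6$ and the same correctness argument.
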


 We give a reduction from the well-known \nph problem of {\sc Vertex Cover} on 2-connected cubic planar graphs \cite{DBLP:journals/jct/Mohar01}  to \lvd on planar DAGs of maximum degree 6. Consider an instance $(G,k)$ of {\sc Vertex Cover} on  2-connected cubic planar graph. We construct an equivalent instance $(H, k)$ of  \lvd on planar DAGs with maximum degree $6$ in the following manner. Let $V(G)= \{v_1, v_2, \ldots , v_n\}$.

 \begin{itemize}
     \item Initialize $V (H) = V (G)$. 
     \item For each edge  $v_i v_j \in E(G)$ where $i < j$, we add a vertex $x_{ij} $ in $V(H)$ and add the arcs $(v_i, x_{ij})$, $ (x_{ij},v_j)$ and $(v_i, v_j)$ in $A(H)$. Let $A_G(H)= \{(u,v) | (u,v) \in A(H)~\text{and}~(u,v) \in E(G)\}$.
 \end{itemize}

 \begin{figure}[ht!]
     \centering
     \includegraphics[width=.6\linewidth]{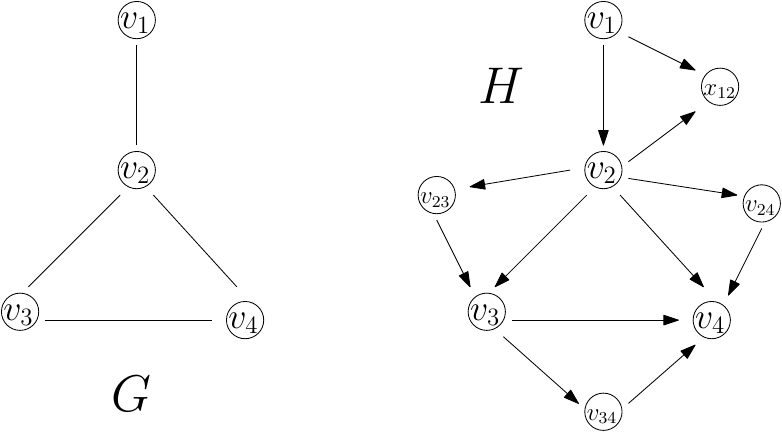}
     \caption{Example of construction for \Cref{theo-planardag}.}
     \label{fig:hard}
 \end{figure}
 
 This completes the construction of the graph $H$. For an illustration of the construction, see \Cref{fig:hard}. Observe that the number of vertices in $H$ is polynomial in $n$. In addition, $H$ is a planar DAG with a maximum degree of 6, and construction can be carried out in polynomial time. It is easy to verify that \lvd is in {\sf NP} (by checking whether there are at most $\ell$ transitive arcs).  Next, we claim the following.

\begin{clam}\label{clm:bounded}
$G$ has a vertex cover of size at most $k$ if and only if $H$ has vertex subset $X \subseteq V(H)$ of size at most $k$ such that $H-X$ has no transitive arcs.
\end{clam}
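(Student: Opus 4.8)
The plan is to first determine exactly which arcs of $H$ are transitive, and then handle the two directions of the equivalence separately, relying on the simple monotonicity fact that deleting vertices can never \emph{create} a transitive arc.

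I would begin with a structural observation. Since $H$ is acyclic (order the vertices by index, placing each $x_{ij}$ just after $v_i$), and since each subdivision vertex $x_{ij}$ has exactly one in-neighbor $v_i$ and exactly one out-neighbor $v_j$, the only path into $x_{ij}$ is the arc $(v_i,x_{ij})$ and the only path out of $x_{ij}$ is $(x_{ij},v_j)$. Hence neither subdivision arc can be transitive. The remaining arcs are the edge arcs $(v_i,v_j)$, and each \emph{is} transitive, since $v_i \to x_{ij} \to v_j$ is a path in $H-(v_i,v_j)$. Thus the transitive arcs of $H$ are exactly the arcs of $A_G(H)$. I would also record the monotonicity fact used throughout: for $u,v \notin X$, any path witnessing transitivity of $(u,v)$ in $(H-X)-(u,v)$ is also present in $H-(u,v)$, so a non-transitive arc of $H$ cannot become transitive in $H-X$.

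For the forward direction, given a vertex cover $C$ of $G$ with $|C|\le k$, I take $X=C$ as a vertex set of $H$. For every edge $v_iv_j$ the cover contains $v_i$ or $v_j$, so each edge arc $(v_i,v_j)$ is deleted in $H-X$; thus all originally transitive arcs disappear. By monotonicity the surviving arcs, which are all subdivision arcs and hence non-transitive in $H$, remain non-transitive in $H-X$, so $H-X$ is transitive-free. (Equivalently, each $x_{ij}$ keeps at most one incident arc, a pendant arc that is trivially non-transitive.)

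For the reverse direction, suppose $X$ with $|X|\le k$ makes $H-X$ transitive-free. The crux is to show $X\cap\{v_i,v_j,x_{ij}\}\neq\emptyset$ for every edge $v_iv_j$: if the edge arc $(v_i,v_j)$ is deleted then an endpoint lies in $X$; otherwise $v_i,v_j\notin X$, and since this arc survives yet must be non-transitive, its length-two witness $v_i\to x_{ij}\to v_j$ must be broken, which forces $x_{ij}\in X$ because its endpoints survive. I then build $C$ from $X$ by keeping every vertex in $X\cap V(G)$ and replacing each chosen $x_{ij}\in X$ by an endpoint, say $v_i$; this replacement never increases cardinality, so $|C|\le|X|\le k$, and by the previous sentence $C$ meets every edge, i.e.\ $C$ is a vertex cover. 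The only genuinely non-routine step is this reverse implication: one must argue carefully that destroying transitivity of a surviving edge arc really forces the corresponding subdivision vertex into $X$ rather than being achievable by deletions elsewhere, which works precisely because the witnessing path through $x_{ij}$ uses only $v_i,x_{ij},v_j$. Everything else is bookkeeping.
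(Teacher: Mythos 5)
Your proof is correct. The forward direction is essentially the paper's: a vertex cover kills every edge arc $(v_i,v_j)$, and the surviving subdivision arcs are non-transitive (the paper argues this via $G-S$ being independent; you argue it via monotonicity plus the in/out-degree-one structure of $x_{ij}$, which amounts to the same thing). The backward direction, however, is organized differently and is cleaner. The paper first converts $X$ into a set $X'\subseteq V(G)$ that is \emph{still a solution to the transitive-arc instance}, which requires a path-replacement argument (any path through $x_{ij}$ must pass through $v_i$, so swapping $x_{ij}$ for $v_i$ cannot create a transitive arc), and only then reads off that $X'$ is a vertex cover. You instead prove the purely local statement that $X$ must hit the triple $\{v_i,v_j,x_{ij}\}$ for every edge --- because if both endpoints survive, the surviving arc $(v_i,v_j)$ would otherwise be witnessed transitive by $v_i\to x_{ij}\to v_j$ --- and then project each chosen $x_{ij}$ to an endpoint. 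This sidesteps the need to verify that the projected set remains a solution to \lvd at all; you only need it to be a vertex cover, which follows immediately from the hitting property. Your explicit upfront characterization that the transitive arcs of $H$ are exactly $A_G(H)$ also makes precise a fact the paper uses only implicitly (``only such arcs can be transitive''). Both routes are valid; yours trades the paper's global replacement lemma for a local hitting-set observation, which is the more elementary of the two.
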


\begin{proof}
	In the forward direction, suppose $G$ has a vertex cover $S \subseteq V(G)$ of size at most $k$. Then, $G - S$ is an independent set. Now each arc of $H-S$ is either of the form $(v_i, x_{ij})$ or $(x_{ij}, v_j)$ for some $i,j \in [n]$. But none of these arcs are transitive in $H$ and consequently also in $H-S$. Hence $S$ is a vertex subset of $H$ such that $G-S$ has no transitive arcs.

	In the backward direction, suppose $X\subseteq V(H)$ is of size at most $k$ such that $H-X$ has no transitive arcs. 
	First, we try to construct a set $X'\subseteq V(G)$ of size at most $k$ such that $H-X'$ has no transitive arcs in addition to the fact that $X'$ only contains vertices from $V(G)$. Suppose $X$ has a  vertex $v$ that is not from $V(G)$. Let $v=x_{ij}\in (X\cap V(H))\setminus V(G)$. We replace the vertex $v$ with $v_i$ in $X$ and claim that it still remains a solution. If our claim holds true, then we can apply this procedure exhaustively to construct the desired $X'$. Otherwise let $e=(a,b)$ be a transitive arc in $H-((X\setminus\{x_{ij}\})\cup \{v_i\})$. Notice that $e=(a,b)\in A_G(H)$ as only such arcs can be transitive. But then there is a path from $a$ to $b$ in $H-((X\setminus\{x_{ij}\})\cup \{v_i\})$, while there was no path from to $a$ to $b$ in $H-X$ which is not possible since any path through $x_{ij}$ also passes through the vertex $v_i$. Hence a contradiction. So $H-X'$ has no transitive arcs. Since $X'\subseteq V(G)$, it must be a vertex cover. If an edge $ab \in E(G)$ is not covered by $X'$, then $(a,b)$ becomes a transitive arc in $H-X'$, a contradiction. Hence $X'$ is a vertex cover of $G$.\end{proof}

\noindent This completes the proof of Theorem \ref{theo-planardag}, since for the claim to hold for $\ell$ at least 1, it suffices to add in $H$, for some $v_i$ in $H$, vertices $v'_1, \ldots ,v'_{\ell}, x'_1, \ldots, x'_{\ell},v^*_1, \ldots, v^*_{\ell}$, such that $(v'_j,v^*_j), (v'_j,x'_j), (x'_j,v^*_j)$ and $(v^*_j,v_i)$ for all $j$ in $[\ell]$.\qed

\section{Polynomial-time Algorithms}\label{sec-poly}

We study the optimization version of \lvd, i.e., delete minimum number of vertices such that  the resulting graph has at most $\ell$ transitive arcs, which we call \mlrtvd (shortly, \mvd) on some well-known restrictions of digraphs such as tournaments, $\alpha$-bounded digraphs, and acyclic local tournaments. We obtain polynomial-time algorithms on all these graph classes.

\subsection{\mvd on Tournaments}\label{sec:tour}

\begin{theorem} \label{theo:tournatheorem}
	\mvd on tournaments with $n$ vertices admits an algorithm running in $\mathcal{O} (n^{4\sqrt{\ell}})$ time.
\end{theorem}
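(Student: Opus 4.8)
The plan is to recast the minimization as a \emph{retention} problem: writing $W = V(D)\setminus S$, minimizing $|S|$ is the same as maximizing $|W|$ subject to the induced subtournament $D[W]$ having at most $\ell$ transitive arcs. The whole algorithm then rests on a single structural lemma: \emph{every tournament on $m$ vertices with at most $\ell$ transitive arcs satisfies $m \le 4\sqrt{\ell}$} (for $\ell \ge 1$; the case $\ell = 0$ is precisely the quoted observation that a transitive-free tournament has at most four vertices). Granting this lemma, every feasible $W$ is small, so an optimal one can be located by brute force, and proving the lemma is exactly where I expect the real work to lie.

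To prove the lemma I would lower-bound the number of transitive arcs of an arbitrary $m$-vertex tournament $T$. Every \emph{transitive triangle} of $T$ (a triple inducing a transitive tournament) has a unique source $u$ and a unique sink $z$, and its long arc $(u,z)$ is transitive, being certified by the middle vertex. The number of transitive triangles equals $\sum_{v}\binom{d^{+}(v)}{2}$, which by convexity is minimized by (near-)regular tournaments and is therefore $\Omega(m^{3})$. Since a fixed arc $(u,z)$ is the long arc of at most $|N^{+}(u)\cap N^{-}(z)| \le m-2$ transitive triangles, the number of \emph{distinct} transitive arcs is at least $\Omega(m^{3})/(m-2) = \Omega(m^{2})$; accounting carefully gives at least $m^{2}/8$, so $\ell \ge m^{2}/8$ and hence $m \le 2\sqrt{2\ell} \le 4\sqrt{\ell}$. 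The same quadratic bound can be reached by encoding the out-neighbourhoods as a uniform set family and invoking Frankl's theorem, which is also a natural way to pin down the base case $m \le 4$ at $\ell = 0$. One subtlety I must respect is that I should \emph{not} assume a transitive arc is always witnessed by a path of length two: in the regular tournament on five vertices some transitive arcs are witnessed only by longer paths. This is harmless, though, because counting only the long arcs of transitive triangles already \emph{undercounts} the transitive arcs, so the $\Omega(m^{2})$ lower bound survives.

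With the lemma available, the algorithm is direct. I would enumerate every vertex subset $W \subseteq V(D)$ with $|W| \le 4\sqrt{\ell}$, and for each test in polynomial time whether $D[W]$ has at most $\ell$ transitive arcs: for every arc $(u,v)$ of $D[W]$, a graph search decides whether $u$ still reaches $v$ in $D[W]-(u,v)$, costing $\mathcal{O}(|W|^{4})$ in total. Among all feasible $W$ I keep one of maximum cardinality and output $S = V(D)\setminus W$. Correctness is immediate: the lemma forces \emph{every} feasible retained set to have at most $4\sqrt{\ell}$ vertices, so the enumeration over subsets of size at most $4\sqrt{\ell}$ necessarily contains an optimal one. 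The number of candidates is $\sum_{i\le 4\sqrt{\ell}}\binom{n}{i} = \mathcal{O}(n^{4\sqrt{\ell}})$, and since the genuine size bound $2\sqrt{2\ell}$ sits a constant factor below $4\sqrt{\ell}$, the polynomial-time feasibility test per subset is absorbed into this slack (the finitely many small-$\ell$ cases being handled directly), yielding the claimed $\mathcal{O}(n^{4\sqrt{\ell}})$ running time.

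The main obstacle is thus the structural lemma rather than the enumeration: everything hinges on converting ``few transitive arcs'' into a tight $\mathcal{O}(\sqrt{\ell})$ bound on the vertex count, and the delicate point there is extracting a clean constant from the triangle/arc double count while never relying on short transitive witnesses.
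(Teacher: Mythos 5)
Your proposal is correct, and the algorithmic shell is identical to the paper's: both reduce \mvd to finding a maximum retained set $W$ with $D[W]$ having at most $\ell$ transitive arcs, bound $|W| = \mathcal{O}(\sqrt{\ell})$, and brute-force over all subsets of that size in $\mathcal{O}(n^{4\sqrt{\ell}})$ time. Where you genuinely diverge is in the proof of the key combinatorial bound. The paper first proves that every $4$-vertex tournament contains a transitive arc, and then invokes Frankl's theorem on nearly-disjoint uniform set families (with $k=4$, $m=2$) to extract, from any tournament on $4\sqrt{\ell}$ vertices, at least $\ell$ four-element subsets pairwise sharing at most one vertex; each contributes a transitive arc and no two can contribute the same arc. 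You instead double-count transitive triangles directly: $\sum_v \binom{d^+(v)}{2}$ counts them, convexity gives $\frac{m(m-1)(m-3)}{8}$ of them, and each transitive arc is the long arc of at most $m-2$ of them, yielding $\Omega(m^2)$ distinct transitive arcs. Your route is more elementary and self-contained (no external set-system theorem), avoids the $n \ge k^2 = 16$ hypothesis that Frankl's proposition carries (which the paper quietly glosses over for small $\ell$), and your caveat about transitive arcs witnessed only by long paths is exactly the right thing to flag --- undercounting via triangle long-arcs is harmless for a lower bound. One small inaccuracy: the exact count $\frac{m(m-1)(m-3)}{8(m-2)}$ falls just short of the $m^2/8$ you claim (it is roughly $m(m-2)/8$), so the constant $2\sqrt{2\ell}$ is slightly optimistic; but the weaker bound $m^2/16$ for $m \ge 5$ still delivers $m \le 4\sqrt{\ell}$, with the $m \le 4$ cases absorbed by the base observation, so the theorem as stated is unaffected.
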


We start with the following simple observation on tournaments.

\begin{lemma} \label{lem:fourvertextourna}
	Any tournament with at least  $4$ vertices has a transitive arc.
\end{lemma}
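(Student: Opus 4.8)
The plan is to reduce everything to a short case analysis on four consecutive vertices of a Hamiltonian path. The key external fact I would invoke is Rédei's theorem, that every tournament admits a Hamiltonian (directed) path; thus in a tournament $D$ on $n \ge 4$ vertices I can fix an ordering $v_1, v_2, \dots, v_n$ with $(v_i, v_{i+1}) \in A(D)$ for all $i$. Every path I will exhibit uses only the four vertices $v_1, v_2, v_3, v_4$ and arcs genuinely present in $D$, so it suffices to produce a transitive arc among these four; there is no need to pass to an induced subgraph, though one could, since transitivity is clearly inherited from an induced sub-tournament up to $D$.

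The heart of the argument is to test the orientation of the two ``chords'' $\{v_1,v_3\}$ and $\{v_2,v_4\}$ and then the ``long'' chord $\{v_1,v_4\}$, in that order. First I would look at the arc between $v_1$ and $v_3$: if it is $(v_1,v_3)$ then $v_1 \to v_2 \to v_3$ is a path in $D$ avoiding it, making it transitive, so I may assume the arc is $(v_3,v_1)$. Symmetrically, if the arc between $v_2$ and $v_4$ is $(v_2,v_4)$ then $v_2 \to v_3 \to v_4$ makes it transitive, so I may assume it is $(v_4,v_2)$. With both chords now forced ``backward'', I finish on the arc between $v_1$ and $v_4$: if it is $(v_1,v_4)$ then $v_1 \to v_2 \to v_3 \to v_4$ witnesses transitivity, and if it is $(v_4,v_1)$ then $v_4 \to v_2 \to v_3 \to v_1$ does, using exactly the two backward chords $(v_4,v_2)$ and $(v_3,v_1)$ established above. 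Either orientation exposes a transitive arc, which proves the lemma.

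I do not expect any deep obstacle here; the only thing demanding care is the bookkeeping of orientations, namely verifying that once $(v_3,v_1)$ and $(v_4,v_2)$ are forced, each of the two orientations of the final chord is covered by a concrete length-three path that actually avoids the tested arc. The one structural input that must be stated cleanly is Rédei's theorem, since without the Hamiltonian path the three consecutive ``forward'' arcs $v_1 \to v_2 \to v_3 \to v_4$ are not available to build the witnessing paths.
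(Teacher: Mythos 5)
Your proof is correct, but it is organized differently from the paper's. You invoke R\'edei's theorem to fix a Hamiltonian path $v_1 \to v_2 \to v_3 \to v_4$ and then test the three chords $\{v_1,v_3\}$, $\{v_2,v_4\}$, $\{v_1,v_4\}$ in turn; each orientation of each chord is witnessed by an explicit path, and your bookkeeping checks out (in the final case, $v_4 \to v_2 \to v_3 \to v_1$ really does use only the two backward chords you have forced plus the path arc $(v_2,v_3)$). Your remark that transitivity lifts from an induced sub-tournament to $D$ correctly disposes of the ``at least $4$'' versus ``exactly $4$'' issue. The paper instead argues by contradiction without any external theorem: it fixes an arbitrary arc $(v_1,v_2)$, observes that the two triangles $\{v_1,v_2,v_3\}$ and $\{v_1,v_2,v_4\}$ must both be directed $3$-cycles (an acyclic triangle already contains a transitive arc), which forces the arcs $(v_2,v_3),(v_3,v_1),(v_2,v_4),(v_4,v_1)$, and then derives a contradiction from either orientation of the remaining pair $\{v_3,v_4\}$. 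The two arguments are comparable in length; yours buys a cleaner, more systematic case structure at the price of citing R\'edei's theorem (which for four vertices is itself a small case analysis), while the paper's is fully self-contained but leans on the auxiliary notion of an acyclic triangle, which it reuses elsewhere (Lemma~\ref{lem:triangle}).
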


\begin{proof}
	We prove this lemma using contradiction. Consider a tournament $T$ on $4$ vertices, say $ v_1, v_2, v_3$ and $v_4 $ that does not have any transitive arc. Let  $(v_1, v_2) \in A(T)$. Neither $ T[\{v_1, v_2, v_3\}] $ nor $ T[\{v_1, v_2, v_4\}] $ are acyclic triangles, otherwise $T$ has a transitive arc. So $T$ has the arcs $ \{(v_1, v_2), (v_2, v_3), (v_3, v_1), (v_2,v_4), (v_4, v_1) \} $ in $ A(T) $. Now if $ (v_4, v_3)  \in A(T)$, then the arc $ (v_2, v_3) $ becomes transitive and  if  $ (v_3, v_4)  \in A(T)$, then the arc $ (v_2, v_4) $ becomes transitive, a contradiction. Hence any tournament with at least 4 vertices  contains a transitive arc.
\end{proof}

Using Lemma \ref{lem:fourvertextourna} along with the hereditary property of tournaments, we get the following corollary.

\begin{corollary}\label{cor:tour}
	Any tournament with at least  $(4\ell+4)$ vertices contains at least $(\ell+1)$ transitive arcs.	
\end{corollary}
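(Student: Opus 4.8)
The plan is to leverage Lemma~\ref{lem:fourvertextourna} together with the hereditary property of tournaments to build up $(\ell+1)$ transitive arcs via a greedy extraction argument. The key structural fact I would exploit is that any induced subdigraph of a tournament is again a tournament, so Lemma~\ref{lem:fourvertextourna} applies to every subset of at least $4$ vertices. The idea is to repeatedly pull out small ``witness blocks'' of $4$ vertices, each contributing a distinct transitive arc, and to show that a tournament on $4\ell+4$ vertices contains enough disjoint room to do this $\ell+1$ times.

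Concretely, I would argue by induction on $\ell$. The base case $\ell=0$ is exactly Lemma~\ref{lem:fourvertextourna}: a tournament on at least $4$ vertices has at least one transitive arc. For the inductive step, suppose the claim holds for $\ell-1$, and let $T$ be a tournament on at least $4\ell+4$ vertices. By Lemma~\ref{lem:fourvertextourna} applied to any $4$ of its vertices, $T$ contains a transitive arc $e=(u,v)$ witnessed by a path, and I would isolate a $4$-vertex set $B$ carrying this arc. Deleting the $4$ vertices of $B$ leaves a tournament $T'$ on at least $4\ell$ vertices. Applying the induction hypothesis to $T'$ yields at least $\ell$ transitive arcs in $T'$. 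The point is that a transitive arc of $T'$ remains transitive in $T$, since $T'$ is an induced subdigraph of $T$: the witnessing path in $T'-e$ survives verbatim in $T-e$. Hence these $\ell$ arcs from $T'$, together with $e$, give $\ell+1$ transitive arcs of $T$, and they are distinct because $e$ has an endpoint in $B$ while the $\ell$ arcs from $T'$ lie entirely outside $B$.

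The main obstacle to watch for is the distinctness and validity of the accumulated arcs: I must ensure that adding vertices back into the graph never \emph{destroys} transitivity of an arc already found in the smaller tournament. This direction is safe, because transitivity is a ``monotone'' property under taking supergraphs on a fixed vertex pair---if a path witnesses $(u,v)$ in $T'-e$, that same path exists in $T-e$, so $e$ stays transitive. (The reverse, that deleting vertices preserves transitivity, is \emph{not} true in general, which is precisely why I extract rather than delete from the top.) The only bookkeeping subtlety is guaranteeing the $\ell+1$ arcs are pairwise distinct, which follows from the disjointness of the vertex blocks removed at each stage: each newly found arc involves a vertex not present in the tournament used to find the subsequent arcs. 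Thus the counting closes, and a tournament on $4\ell+4$ vertices contains at least $\ell+1$ transitive arcs.
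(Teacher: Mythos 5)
Your proof is correct and follows essentially the same route the paper intends: it combines Lemma~\ref{lem:fourvertextourna} with the fact that induced subtournaments are tournaments and that a witnessing path in an induced subdigraph survives in the full digraph, extracting $\ell+1$ vertex-disjoint $4$-vertex blocks (yours phrased as an induction, the paper's as a direct partition of $4\ell+4$ vertices). The distinctness and monotonicity points you flag are exactly the right things to check, and your handling of them is sound.
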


\noindent 	We can improve upon this bound with the following combinatorial result.

\begin{proposition}[Frankl \cite{DBLP:journals/jct/FranklF85}]\label{propo:frank}
	Given a universe $U$ of $n$ elements, for every $k \geq m \geq 1$ and $n \geq k^2$, there exists a $k$-uniform family $\mathcal{F}$ (where each set of $\mathcal{F}$ has size exactly $k$) of size at least  $ (n/k)^m$  such that $|A \cap B| \leq m-1$ for any two distinct sets $A, B \in \mathcal{F}$.
\end{proposition}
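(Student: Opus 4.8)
The plan is to produce the family explicitly by an algebraic (Reed--Solomon-type) construction, since the target size $(n/k)^m$ is exactly what a maximum-distance-separable code of dimension $m$ delivers. The guiding observation is that the hypothesis $n \ge k^2$ is precisely what is needed so that a finite field of the right order is available: I want a prime power $q$ with $k \le q \le n/k$, and such a $q$ can exist only when $k \le n/k$, i.e. when $n \ge k^2$. Fixing such a $q$, I view the universe $U$ as (a superset of) the grid $[k] \times \mathbb{F}_q$; this grid has $kq \le n$ elements, so it embeds into $U$ and any family built on it is automatically a family on $U$.

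Next I would define the sets. Choose $k$ distinct evaluation points $\alpha_1, \dots, \alpha_k \in \mathbb{F}_q$ (possible since $q \ge k$). For each polynomial $f \in \mathbb{F}_q[X]$ with $\deg f \le m-1$, let
\[
S_f = \{\, (i, f(\alpha_i)) : i \in [k] \,\} \subseteq [k] \times \mathbb{F}_q .
\]
Each $S_f$ has exactly one element in every row $\{i\}\times \mathbb{F}_q$, so $|S_f| = k$ and the family is $k$-uniform. I then set $\mathcal{F} = \{ S_f : \deg f \le m-1\}$.

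The single substantive step is the intersection bound, and it rests on one elementary fact: two distinct polynomials $f,g$ of degree at most $m-1$ differ by a nonzero polynomial $f-g$ of degree at most $m-1$, which therefore has at most $m-1$ roots. Since $(i, f(\alpha_i)) = (i, g(\alpha_i))$ forces $f(\alpha_i) = g(\alpha_i)$, the sets $S_f$ and $S_g$ meet exactly in those rows $i$ where $f(\alpha_i)=g(\alpha_i)$, of which there are at most $m-1$; hence $|S_f \cap S_g| \le m-1$. The same fact, using $k \ge m$, shows that $f \ne g \Rightarrow S_f \ne S_g$, so $|\mathcal{F}|$ equals the number of polynomials of degree $\le m-1$, namely $q^m$.

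Finally I would cash this out against the claimed bound. If $n/k$ is itself a prime power, I take $q = n/k$, the grid uses all $n$ elements, and $|\mathcal{F}| = q^m = (n/k)^m$ exactly. The main obstacle is the general case: when $n/k$ is not a prime power I must settle for the largest prime power $q \le n/k$, and then $|\mathcal{F}| = q^m$ may fall short of $(n/k)^m$. Closing this gap is where the real work lies --- it requires a density estimate for prime powers (Bertrand's postulate yields a prime in $(n/(2k), n/k)$ whenever $n \ge 2k^2$, hence $|\mathcal{F}| \ge (n/(2k))^m$, and sharper prime-gap bounds shrink the loss as $n/k$ grows), or else restricting to instances where a prime power of order $n/k$ is guaranteed. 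I expect that reconciling the purely combinatorial target $(n/k)^m$ with this number-theoretic rounding is the only delicate point; the construction and the intersection analysis themselves are routine.
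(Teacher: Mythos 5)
First, note that the paper does not prove this proposition at all: it is imported verbatim as a black box from Frankl and F\"uredi's 1985 paper, so there is no in-paper argument to compare yours against. Judged on its own terms, your Reed--Solomon construction is the standard way to build such families, and everything you prove is correct: for any prime power $q$ with $k \le q$ and $kq \le n$, the sets $S_f$ form a $k$-uniform family of size exactly $q^m$ with pairwise intersections of size at most $m-1$ (the uniqueness of $S_f$ for distinct $f$ and the root-counting bound are both fine). The proposal is also commendably honest about where it stops.

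The gap you flag at the end is, however, genuine, and it is slightly worse than your own assessment. You need a prime power $q$ in the interval $[k, n/k]$, and the hypothesis $n \ge k^2$ only guarantees that this interval is nonempty, not that it contains a prime power: for $k=6$ and $n=36$ the only candidate is $q=6$, which is not a prime power, so the construction produces no family at all, let alone one of size $(n/k)^m = 36$. Even when $n \ge 2k^2$ so that Bertrand's postulate applies, you obtain only $(n/(2k))^m = (n/k)^m/2^m$, which falls short of the stated bound by a factor exponential in $m$; no amount of sharpening prime-gap estimates removes this loss for small $n/k$. So as a proof of the proposition as literally stated, the argument is incomplete, and closing it would require either Frankl's actual (different) argument or a nontrivial densification of the construction. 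It is worth adding that for the one place the paper uses this proposition (Section \ref{sec:tour}, with $k=4$, $m=2$, $n=4\sqrt{\ell}$), your weaker bound would suffice after replacing $4\sqrt{\ell}$ by $8\sqrt{\ell}$, so the deficit is harmless for the application even though it is fatal for the statement itself.
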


Let $T$ be a tournament with $4\sqrt{\ell}$ vertices. By setting $V(T)=U$, $k=4$, $m=2$ and applying Proposition \ref{propo:frank}, we get a family $\mathcal{F}$ of size at least $\ell$. Notice that any set in $\mathcal{F}$ induces a  tournament on 4 vertices and hence has a transitive arc contained inside (by Lemma \ref{lem:fourvertextourna}). Moreover, no two tournaments corresponding to two distinct sets in $\mathcal{F}$ can have the same transitive arc since they have at most one vertex in their intersection. Hence $T$ has at least $\ell$ many transitive arcs.  

\begin{corollary}
	Any tournament with at least  $4\sqrt{\ell}$ vertices has at least $\ell$ transitive arcs.
\end{corollary}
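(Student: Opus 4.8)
The plan is to convert the informal counting above into a rigorous injection from a large, nearly-disjoint family of $4$-element vertex sets into the transitive arcs of the tournament. Let $T$ be a tournament on $n \geq 4\sqrt{\ell}$ vertices. First I would apply Proposition~\ref{propo:frank} with $U = V(T)$, $k = 4$, and $m = 2$, which (once the size hypothesis is met, see below) produces a $4$-uniform family $\mathcal{F}$ of subsets of $V(T)$ of size $|\mathcal{F}| \geq (n/4)^2 \geq (\sqrt{\ell})^2 = \ell$, and, crucially, satisfying $|A \cap B| \leq m-1 = 1$ for every pair of distinct $A, B \in \mathcal{F}$.

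Next I would assign to each $A \in \mathcal{F}$ a witnessing transitive arc. As $|A| = 4$, the induced subtournament $T[A]$ has four vertices, so Lemma~\ref{lem:fourvertextourna} supplies a transitive arc $e_A = (u,v)$ with $u,v \in A$. A directed path certifying the transitivity of $e_A$ inside $T[A] - e_A$ is still present in $T - e_A$, so $e_A$ is transitive in the full tournament $T$, not merely in $T[A]$. This yields a map $A \mapsto e_A$ from $\mathcal{F}$ into the set of transitive arcs of $T$.

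The decisive step is the injectivity of this map, and it is exactly here that the bounded-intersection property is used: both endpoints of $e_A$ lie in $A$, so $e_A = e_B$ for distinct $A,B \in \mathcal{F}$ would force $\{u,v\} \subseteq A \cap B$ and hence $|A \cap B| \geq 2$, contradicting $|A \cap B| \leq 1$. Therefore distinct sets of $\mathcal{F}$ contribute distinct transitive arcs, and $T$ has at least $|\mathcal{F}| \geq \ell$ of them.

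I expect the only real obstacle to be the boundary condition: Proposition~\ref{propo:frank} requires $n \geq k^2 = 16$, which for $n = 4\sqrt{\ell}$ means $\ell \geq 16$. Thus the argument above is clean for all $\ell \geq 16$ (and indeed for any tournament on $n \geq 16 \geq 4\sqrt{\ell}$ vertices). The finitely many remaining cases $\ell \leq 15$ fall outside Frankl's range, and I would settle them separately by direct inspection of small tournaments, using that deleting a vertex never creates a transitive arc (so the minimum guaranteed number of transitive arcs is monotone non-decreasing in $n$, as established via Lemma~\ref{lem:fourvertextourna} and Corollary~\ref{cor:tour}). Since these are a bounded set of constants, they affect neither the asymptotic statement nor its use in the $\mathcal{O}(n^{4\sqrt{\ell}})$ algorithm of Theorem~\ref{theo:tournatheorem}.
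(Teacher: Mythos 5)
Your proof is essentially identical to the paper's argument: apply Frankl's proposition with $k=4$, $m=2$ to get a family of $\ell$ four-element sets pairwise intersecting in at most one vertex, extract a transitive arc from each via Lemma~\ref{lem:fourvertextourna}, and observe that the bounded intersections force these arcs to be distinct. The only difference is that you explicitly flag the hypothesis $n \geq k^2 = 16$ of Proposition~\ref{propo:frank} (hence $\ell \geq 16$), a boundary condition the paper silently skips over; your plan to handle the finitely many small $\ell$ separately is reasonable, though left as a sketch.
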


\noindent Now given a tournament $T$, solving \mvd on $ T $  reduces to finding a maximum size set $ V' \subseteq V(T) $ such that $ D[V'] $ has at most $\ell$ transitive arcs. Since $ |V'| < 4\sqrt{\ell} $, one can find such a set in time $\mathcal{O} (n^{4\sqrt{\ell}})$. 

This completes the proof of Theorem \ref{theo:tournatheorem}.

\subsection{\mvd on $\alpha$-bounded digraphs}\label{sec:alphabounded}

We start with an algorithm obtained by the application of Ramsey’s theorem \cite{cameron1994combinatorics}: for positive integers $r$ and $s$,  every graph on $R(r,s) \leq  { r+s-2 \choose r-1}$  vertices contains either a clique of size $r$ or an independent set of size $s$.  So for any digraph $D$, if the number of vertices is more than ${ 4+(\alpha+1)-2 \choose 4-1} = \mathcal{O}(\alpha^3)$, then either it has a clique of size 4 or an independent set of size $ (\alpha+1) $. In Lemma  \ref{lem:fourvertextourna}, we proved that any tournament with at least  $4$ vertices is not transitive-free. Since by definition any $\alpha$-bounded digraph has no independent set of size $ (\alpha+1) $, if $ |V(D)| \geq  \mathcal{O}(\alpha^3)$, then $ D $ must have a clique of size $4$ and consequently has a transitive arc. In this section, we improve this bound to $ \mathcal{O}(\alpha^2) $ and obtain the following result.
\begin{theorem} \label{theo:bounded}
	\mvd on $\alpha$-bounded digraphs with $n$ vertices admits an algorithm running in  $\mathcal{O}(n^{\alpha (2\alpha+3) (\ell+1) +\ell})$ time.
\end{theorem}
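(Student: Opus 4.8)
The plan is to establish a combinatorial bound analogous to \Cref{lem:fourvertextourna} but for $\alpha$-bounded digraphs: namely, that any such digraph on more than $\mathcal{O}(\alpha^2)$ vertices must contain a transitive arc. Once this bound is in hand, the algorithmic part follows the same brute-force template used for tournaments. If a transitive-free $\alpha$-bounded digraph can have at most $c\alpha^2$ vertices for some explicit constant $c$, then a digraph with at most $\ell$ transitive arcs can have at most roughly $c\alpha^2(\ell+1)$ vertices (by an argument partitioning or greedily extracting transitive arcs, in the spirit of \Cref{cor:tour}). Solving \mvd then reduces to searching for a maximum-size induced subgraph $D[V']$ with at most $\ell$ transitive arcs, and since $|V'|$ is bounded by $\mathcal{O}(\alpha^2(\ell+1))$, one enumerates all vertex subsets up to that size in time $n^{\mathcal{O}(\alpha^2(\ell+1))}$, checking the transitivity condition for each. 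Matching this enumeration bound to the stated running time $\mathcal{O}(n^{\alpha(2\alpha+3)(\ell+1)+\ell})$ tells me the precise target: a transitive-free $\alpha$-bounded digraph must have at most $\alpha(2\alpha+3)$ vertices, i.e.\ the size bound should be $2\alpha^2+3\alpha$.

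The heart of the matter is therefore proving that a transitive-free $\alpha$-bounded digraph $D$ has at most $2\alpha^2+3\alpha$ vertices, improving the easy Ramsey bound of $\mathcal{O}(\alpha^3)$. My main tool here is the Gallai–Milgram theorem, as flagged in the introduction: a digraph whose underlying independence number is at most $\alpha$ admits a path cover using at most $\alpha$ vertex-disjoint paths. So I would take such a path cover $P_1,\dots,P_{\alpha'}$ with $\alpha'\le\alpha$, and argue that each path cannot be too long and that arcs between paths are tightly constrained, in a transitive-free digraph. The key local observation is that transitive-freeness forbids ``shortcuts'': within a single path $v_1,\dots,v_t$ of the cover, any chord $(v_i,v_j)$ with $j>i+1$ would be a transitive arc (the sub-path witnesses the path from $v_i$ to $v_j$), so long paths force the induced subgraph to be sparse, and combined with independence-number constraints this bounds each path's length and the cross-path arcs by a function of $\alpha$.

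The hard part will be controlling the interaction \emph{between} distinct paths of the cover and converting the path-cover structure into the sharp quadratic bound $2\alpha^2+3\alpha$ rather than a looser one. A single transitive-free path has bounded structure, but two long paths can still send many arcs back and forth, and one must rule out that these arcs create transitive shortcuts while simultaneously respecting the independence number $\alpha$. I expect the argument to proceed by bounding, for each pair of paths (or each path relative to the rest), the number of vertices that can participate without either forcing an independent set larger than $\alpha$ or creating a transitive arc; summing these per-path and per-pair contributions should yield the $\alpha(2\alpha+3)$ bound. Care is needed because the Gallai–Milgram paths are in $D$ itself, and a vertex's neighbors across paths interact with the path ordering in subtle ways.

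Once the size bound is proved, the remainder is routine: I would state it as a lemma (every transitive-free $\alpha$-bounded digraph has at most $2\alpha^2+3\alpha$ vertices), deduce by the extraction argument that an $\alpha$-bounded digraph with at most $\ell$ transitive arcs has at most $\alpha(2\alpha+3)(\ell+1)$ vertices, and then observe that \mvd is solved by brute-force enumeration over all induced subgraphs of size at most $\alpha(2\alpha+3)(\ell+1)$. Each candidate subset is tested for the $\le\ell$-transitive-arc property in polynomial time, and the extra $+\ell$ in the exponent accounts for the cost of verifying or accounting for the transitive arcs; this matches the claimed $\mathcal{O}(n^{\alpha(2\alpha+3)(\ell+1)+\ell})$ running time and completes the proof of \Cref{theo:bounded}.
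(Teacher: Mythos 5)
Your proposal follows the same route as the paper: Gallai--Milgram to get a path cover with at most $\alpha$ paths, a ``no shortcuts'' analysis of transitive-free digraphs relative to a path, the correct target bound $\alpha(2\alpha+3)$ on the order of a transitive-free $\alpha$-bounded digraph, and finally brute-force enumeration over bounded-size vertex subsets. However, the central combinatorial step --- actually proving the $\alpha(2\alpha+3)$ bound --- is exactly the part you leave as an acknowledged ``hard part'' with only a plan (``bounding per-path and per-pair contributions''), and that plan as stated does not close. Two long chord-free paths with no arcs between them form a transitive-free digraph, so no amount of pairwise path-versus-path accounting bounds path lengths; the quadratic bound only emerges from an ingredient your sketch omits: the \emph{independent set supplied by Gallai--Milgram itself}, one vertex $v_P$ per path $P$, taken maximal. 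The paper's argument (Lemma \ref{lem:alpha}) is: (i) a vertex on a path $P$ has at most $3$ neighbours on $P$ (no forward chords, at most one backward chord per endpoint) and a vertex off $P$ has at most $2$ neighbours on $P$ (Lemma \ref{lem:neighbour}); (ii) if $|V(D)|>\alpha(2\alpha+3)$, pigeonhole gives a path $P$ with at least $2\alpha+4$ vertices, of which at least $2\alpha-1$ are non-neighbours of $v_P$; (iii) by maximality of the independent set, each of these has a neighbour in $I\setminus\{v_P\}$, a set of at most $\alpha-1$ vertices all lying \emph{off} $P$; (iv) pigeonhole again produces a vertex off $P$ with $3$ neighbours on $P$, contradicting (i). Step (iii), the domination property of the maximal independent set, is the idea your proposal is missing.

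Two smaller points. First, your claim that a single path's length is bounded ``combined with independence-number constraints'' is not established by your sketch; the bound is on the \emph{longest} path via the global pigeonhole in (ii), not on each path separately. Second, the extra $+\ell$ in the exponent of the running time does not come from ``the cost of verifying the transitive arcs'' (verification is polynomial and absorbed into the $\mathcal{O}$); it comes from the size bound $\alpha(2\alpha+3)(\ell+1)+\ell$ on the surviving vertex set in the corollary that extracts $\ell+1$ transitive arcs. The algorithmic shell of your proposal is fine, but as written the proof has a genuine gap at its key lemma.
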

We start by defining \emph{forward} and \emph{backward} arcs along with presenting a couple of simple observations.
\begin{definition}[Forward and backward arc]
	Let $P=v_1,\ldots,v_{\ell}$ be a path in the digraph $D$. An arc $e=(v_i,v_j)$, where $ v_i, v_j \in V(P)$ is said to be a forward arc (resp. backward arc) in the digraph $D$ with respect to the path $P$ if  $ j > (i+1) $ (resp. $ i > (j+1) $).
\end{definition}

\begin{observation}\label{obs:1}
	Let $D$ be a transitive-free digraph and $P$ be a path in $D$. Then  every vertex of $P$ cannot be neither the endpoint of any forward arc nor two backward arcs in $D$ with respect to the path  $P$.
\end{observation}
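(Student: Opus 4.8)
The plan is to argue by contradiction, handling the forward-arc case and the backward-arc case separately, and in each case producing an explicit alternative directed path that witnesses a transitive arc, thereby contradicting the transitive-freeness of $D$. Throughout I would write $P = v_1, v_2, \ldots, v_t$ and use the elementary fact that for any indices $p < q$ the subpath $v_p \to v_{p+1} \to \cdots \to v_q$ is a directed path in $D$ built only from arcs of $P$.

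First I would dispose of forward arcs. Suppose some vertex of $P$ is an endpoint of a forward arc $e = (v_i, v_j)$ with $j > i+1$. Then $v_i \to v_{i+1} \to \cdots \to v_j$ is a directed path from $v_i$ to $v_j$ whose arcs are $(v_i,v_{i+1}), \ldots, (v_{j-1},v_j)$, none equal to $e$ since $j > i+1$. Hence $e$ is transitive, contradicting transitive-freeness; so $D$ has no forward arc with respect to $P$ at all, and in particular no vertex is the endpoint of one.

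The backward-arc case is where the care is needed, and I would split it by the shared endpoint role. Suppose $v_m$ is the tail of two distinct backward arcs $(v_m, v_a)$ and $(v_m, v_b)$; being backward forces $a, b \le m-2$, and without loss of generality $a < b$. Then $v_m \to v_a \to v_{a+1} \to \cdots \to v_b$ is a directed path from $v_m$ to $v_b$, with distinct vertices since $m > b$, that avoids the arc $(v_m, v_b)$; hence $(v_m, v_b)$ is transitive, a contradiction. Symmetrically, if $v_m$ is the head of two distinct backward arcs $(v_a, v_m)$ and $(v_b, v_m)$, then $a, b \ge m+2$ and, taking $a < b$, the path $v_a \to v_{a+1} \to \cdots \to v_b \to v_m$ (distinct since $m < a$) avoids $(v_a, v_m)$ and witnesses its transitivity, again a contradiction. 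Combined with the forward case, this yields the statement.

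The main obstacle, and the point I would check most carefully, is the bookkeeping in the backward case: one must verify that the witnessing path — the subpath of $P$ together with exactly one of the two backward arcs — genuinely avoids the arc whose transitivity is being claimed, and consists of distinct vertices. I would also flag the intended reading: the statement should forbid two backward arcs sharing the \emph{same} endpoint role (two tails or two heads). A vertex can simultaneously be the tail of one backward arc and the head of another without forcing any transitive arc, so the two sub-cases above are precisely the obstructions and no further ``mixed'' case needs to be excluded.
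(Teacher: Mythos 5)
Your proof is correct, and it is the fully worked-out version of the paper's one-line justification (the paper simply asserts that one forward arc, or two backward arcs at a common endpoint, force a transitive arc). Your witness constructions are exactly right: the subpath of $P$ from $v_i$ to $v_j$ for a forward arc, and ``one backward arc followed by a subpath'' (two common tails) or ``a subpath followed by one backward arc'' (two common heads) for the backward case, with the distinctness and arc-avoidance checks carried out where they are needed.

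Your closing caveat deserves emphasis, because it is not merely a reading preference: as literally stated, the observation (and the paper's one-line proof of it) is false for a mixed pair of backward arcs. Take two directed triangles glued at a vertex: $V=\{v_1,\dots,v_5\}$ with arcs $(v_1,v_2),(v_2,v_3),(v_3,v_1),(v_3,v_4),(v_4,v_5),(v_5,v_3)$. This digraph is transitive-free (every arc lies on a directed $3$-cycle and any alternative walk between its endpoints would have to revisit a vertex), $P=v_1,v_2,v_3,v_4,v_5$ is a path, and $v_3$ is an endpoint of the two backward arcs $(v_3,v_1)$ and $(v_5,v_3)$, once as tail and once as head. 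So only the same-role version that you actually prove is true, and your observation that no further mixed case needs to be (or can be) excluded is correct. The example also shows that a vertex of $P$ can have four neighbours on $P$ (two along the path, one backward out-neighbour, one backward in-neighbour), so the bound of $3$ in Lemma~\ref{lem:neighbour} should be read as $4$ for vertices of $P$; this only shifts constants in the subsequent counting argument of Lemma~\ref{lem:alpha}.
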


\begin{proof}
	The proof follows naturally from the fact that if a vertex is the endpoint of two  backward arcs or one forward arc then a transitive arc exists.
\end{proof}

\begin{observation}\label{obs:2}
	Let $D$ be a transitive-free digraph and $P$ be a path in $D$. Then  any vertex of $V(D)-V(P)$  has at most two neighbors (both in-neighbor and out-neighbor) in $P$.
\end{observation}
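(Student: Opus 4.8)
The plan is to treat the out-neighbors and in-neighbors of $w$ in $P$ separately and to argue, using Definition \ref{def:transitivity}, that each of these two counts is at most one; summing them then yields the bound of two. Throughout, write $P = v_1, v_2, \ldots, v_t$ and fix an arbitrary vertex $w \in V(D) - V(P)$.

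First I would bound the out-neighbors. Suppose, towards a contradiction, that $w$ has two out-neighbors $v_i$ and $v_j$ on $P$ with $i < j$, so that both $(w, v_i)$ and $(w, v_j)$ lie in $A(D)$. Since $i < j$, the sub-path $v_i, v_{i+1}, \ldots, v_j$ of $P$ is a directed path from $v_i$ to $v_j$. Concatenating the arc $(w, v_i)$ with this sub-path produces a directed path from $w$ to $v_j$ that does not use the arc $(w, v_j)$. Hence $(w, v_j)$ is a transitive arc, contradicting that $D$ is transitive-free. Therefore $w$ has at most one out-neighbor in $P$. The in-neighbor count is then settled by the symmetric argument obtained by reversing all arcs: if $w$ had two in-neighbors $v_i$ and $v_j$ with $i < j$, the sub-path $v_i, \ldots, v_j$ followed by the arc $(v_j, w)$ would be a directed path from $v_i$ to $w$ avoiding the arc $(v_i, w)$, making $(v_i, w)$ transitive, again a contradiction. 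Combining the two bounds, $w$ has at most one in-neighbor and at most one out-neighbor in $P$, hence at most two neighbors in $P$ in total, as claimed.

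I do not expect a genuine obstacle here: the statement follows immediately from the definition of a transitive arc once the appropriate avoiding path is exhibited. The only point requiring a little care is to confirm that the constructed $w$-to-$v_j$ path (respectively $v_i$-to-$w$ path) truly avoids the arc being certified as transitive; this holds because the path is routed through the \emph{other} neighbor of $w$ rather than along the arc itself, so it lies entirely in $D - (w,v_j)$ (respectively $D - (v_i,w)$).
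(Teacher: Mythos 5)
Your proof is correct and follows essentially the same route as the paper: the paper also observes that a third neighbor would force two in-neighbors or two out-neighbors on $P$, yielding a transitive arc. You simply spell out the witnessing path (the arc to the earlier/later neighbor concatenated with the subpath of $P$), which the paper leaves implicit.
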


\begin{proof}
	If there is a vertex $ x \in  V(D)-V(P)$ with more than two neighbors in $ V(P) $, then $ x $ has either more than one in-neighbor or more than one out-neighbor. In either case, we get a transitive arc, a contradiction.
\end{proof}

Combining Observations \ref{obs:1} and \ref{obs:2}, we have the following lemma.

\begin{lemma}\label{lem:neighbour}
	Let $D$ be a transitive-free digraph and $P$ be a path in $D$. Then any vertex in $P$ has at most $3$ neighbours in $V(P)$ and any vertex in $V(D)-V(P)$ has at most 2 neighbours in $V(P)$.
\end{lemma}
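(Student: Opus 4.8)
The plan is to split the argument according to whether the vertex lies on the path $P$ or outside it, and to dispatch the two halves of the statement using the two observations that precede the lemma. For a vertex $v \in V(D) - V(P)$, the claimed bound of $2$ is already exactly the content of Observation~\ref{obs:2}, so that half requires no further work.

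The substance of the lemma is therefore the bound of $3$ for a vertex $v \in V(P)$. Writing $P = v_1,\dots,v_t$ and $v = v_i$, I would classify every arc of $D$ incident with $v_i$ whose other endpoint also lies on $P$. Such arcs fall into three types: the (at most two) consecutive path arcs joining $v_i$ to $v_{i-1}$ and to $v_{i+1}$; forward chords; and backward chords. First I would note that $v_i$ is incident with no forward chord, since a forward arc $(v_i,v_j)$ with $j>i+1$ (or $(v_j,v_i)$ with $j<i-1$) is witnessed as transitive by the subpath of $P$ between its endpoints, contradicting transitive-freeness; this is the forward part of Observation~\ref{obs:1}. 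By the backward part of Observation~\ref{obs:1}, $v_i$ is the endpoint of at most one backward chord. Since the consecutive path arcs account only for the neighbours $v_{i-1},v_{i+1}$ and a backward chord contributes a single further neighbour at distance at least two along $P$, these neighbour sets are disjoint, and their union has size at most $2+1=3$.

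The main obstacle — and essentially the only place where care is required — is the bookkeeping for the chords incident with an on-path vertex. One must first verify that the classification into consecutive, forward, and backward chords is exhaustive, so that no incident arc within $V(P)$ is overlooked. One must then read Observation~\ref{obs:1} in the right way: the backward bound has to be that $v_i$ is the endpoint of at most one backward chord \emph{in total}, counting together the chords for which $v_i$ is the head and those for which it is the tail. It is exactly this combined count of one that brings the total down to $2+1=3$ rather than $4$; if the head-role and tail-role backward chords were capped separately, the neighbour count would only be bounded by $4$, so this is the delicate point on which the stated bound rests. The forward case contributes nothing, being immediately transitive along $P$, so the genuine content is the backward count, and that is precisely what Observation~\ref{obs:1} supplies. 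Combining the on-path and off-path cases then yields both halves of the statement.
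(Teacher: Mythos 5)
Your decomposition is exactly the paper's: the paper derives this lemma by the single line ``Combining Observations \ref{obs:1} and \ref{obs:2}'', and your write-up supplies the bookkeeping that line leaves implicit. The off-path half and the forward-chord part of the on-path half are fine. The problem is precisely the point you single out as delicate: the claim that an on-path vertex $v_i$ is incident to at most one backward chord \emph{in total}, head-role and tail-role counted together. Observation \ref{obs:1} does assert this, but its justification (``two backward arcs $\Rightarrow$ a transitive arc'') only goes through when $v_i$ plays the same role in both: two out-backward chords $(v_i,v_j),(v_i,v_{j'})$ with $j<j'<i-1$ make $(v_i,v_{j'})$ transitive via $v_i\to v_j\to\cdots\to v_{j'}$, and dually for two in-backward chords. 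In the mixed case --- one backward chord into $v_i$ and one out of $v_i$ --- the only candidate witnessing walk passes through $v_i$ itself, so it is not a path, and no transitive arc need exist.

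Concretely, take $P=v_1v_2v_3v_4v_5$ with the four path arcs plus $(v_3,v_1)$ and $(v_5,v_3)$. Checking each of the six arcs shows this digraph is transitive-free (every vertex other than $v_3$ has out-degree $1$, and from $v_3$ each alternative route returns to $v_3$ before reaching the arc's head), yet $v_3$ has the four neighbours $v_1,v_2,v_4,v_5$ on $P$. So the bound of $3$ is itself false: the argument you (and the paper) give actually yields only ``at most one in-backward chord, at most one out-backward chord, plus $v_{i-1}$ and $v_{i+1}$'', i.e.\ at most $4$ neighbours for an on-path vertex. (That weaker bound still suffices for Lemma \ref{lem:alpha} after adjusting the constant.) The gap is therefore not in your reduction to Observation \ref{obs:1} but in accepting its mixed-role case without verification --- exactly the case on which your ``$2+1=3$ rather than $4$'' remark rests.
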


Along with  Lemma \ref{lem:neighbour}, we use the following theorem to prove the promised improved bound on the size of $D$.

\begin{proposition}[Gallai-Milgram \cite{gallai1960verallgemeinerung}]\label{theo-gallai}
	Every digraph $D$ has a path cover $\mathcal{P}$ and an independent set $\{ v_P | P \in \mathcal{P} \}$ of vertices such that $ v_P \in P $ for each $ P \in \mathcal{P} $. 
\end{proposition}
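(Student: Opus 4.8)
The plan is to establish this classical fact by induction on the number of arcs $|A(D)|$, carrying along the slightly stronger bookkeeping that at each stage I produce both a path partition $\mathcal{P}$ (a collection of vertex-disjoint directed paths covering $V(D)$) \emph{and} an explicit stable system of representatives $S=\{v_P : P\in\mathcal{P}\}$, one vertex per path, with no arc of $D$ joining two vertices of $S$. The base case $|A(D)|=0$ is immediate: take $\mathcal{P}$ to be the singletons and $S=V(D)$, which is trivially stable since $D$ has no arcs. This framing is convenient because it lets the inductive step focus entirely on what a single new arc can do to the transversal.

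For the inductive step I would delete one arc $e=(u,w)$, apply the induction hypothesis to $D-e$ to obtain a path partition $\mathcal{P}$ with a stable transversal $S$, and then reinstate $e$. The easy case is when at most one of $u,w$ lies in $S$: the only conflict the new arc could introduce would be $e$ itself running between two representatives, which by assumption does not happen, so $S$ is still stable in $D$; moreover $\mathcal{P}$ remains a valid partition since all its paths use only arcs of $D-e\subseteq D$, and we are done. The delicate case, and the one I expect to be the main obstacle, is when both $u$ and $w$ are representatives, say of distinct paths $P\ni u$ and $Q\ni w$, so that reinserting $e$ destroys the stability of $S$.

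To resolve this I would \emph{reroute} $P$ and $Q$ through $e$: splice the prefix of $P$ ending at $u$ with the suffix of $Q$ beginning at $w$ into a single path, leaving the (possibly empty) suffix of $P$ after $u$ and prefix of $Q$ before $w$ as residual paths, and then reselect representatives. When $u$ happens to be the last vertex of $P$ and $w$ the first vertex of $Q$, this merges two paths into one and we may simply drop $w$ from the transversal, yielding a stable system with one fewer path. The genuinely subtle subcase is when the two representatives sit in the interior of their paths, a situation that small examples show is sometimes unavoidable (one cannot always take representatives to be path endpoints); then the splice breaks off residual paths whose fresh representatives must be chosen so as not to recreate a conflict. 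I expect this representative reselection to be the crux, and I would control it with a monotone measure, for instance ordering candidate pairs $(\mathcal{P},S)$ by the number of arcs lying inside $S$ and arguing that the reroute strictly decreases this count, so the process terminates in a genuinely stable transversal; making the decrease rigorous, without the new representatives spawning uncontrolled conflicts, is the one step that needs real care. Once the proposition is available, I would feed it into the size bound exactly as advertised: apply it to a transitive-free $\alpha$-bounded digraph $D$ to obtain a path partition into at most $\alpha$ paths (the independent transversal has at most $\alpha$ vertices), bound the length of each path using the degree constraints of Lemma~\ref{lem:neighbour} together with $\alpha$-boundedness, and multiply to conclude $|V(D)|=\mathcal{O}(\alpha^2)$.
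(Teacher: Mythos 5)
The paper itself offers no proof of this proposition: it is the classical Gallai--Milgram theorem, imported by citation, so your attempt has to be judged against the standard arguments in the literature rather than against anything in the text. Your base case and your easy case (at most one endpoint of the reinserted arc $e=(u,w)$ lies in the transversal $S$) are correct. The gap is exactly where you place it, and it is a genuine one rather than a routine verification. After splicing the prefix of $P$ up to $u$ with the suffix of $Q$ from $w$, the two old paths become up to three new ones: the spliced path (which now contains both $u$ and $w$, so at most one of them can remain a representative) and the two residual segments, namely the part of $P$ after $u$ and the part of $Q$ before $w$. You must choose fresh representatives for the residual segments, and nothing in your argument constrains that choice; indeed a residual segment may be a single vertex, in which case its representative is forced, and that forced vertex may be adjacent to the surviving representative of the spliced path, to the other residual's representative, or to representatives of untouched paths. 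Consequently your proposed potential --- the number of arcs with both ends in $S$ --- is computed on a set whose composition changes at every step, and it can strictly \emph{increase} after a reroute: before the splice the only offending arc may be $e$ itself, while after it the forced new representatives can contribute several offending arcs. The process is therefore not shown to terminate, let alone to terminate at an independent transversal.

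The standard proof sidesteps this entirely by inducting on $|V(D)|$ rather than $|A(D)|$: one takes a path partition whose set of terminal vertices is inclusion-minimal over all path partitions, and if two terminal vertices are joined by an arc one either absorbs a singleton path (contradicting minimality) or deletes the terminal vertex of one path and applies induction to the smaller digraph, lifting the independent transversal back unchanged. Representatives are never re-selected for freshly created fragments, which is precisely the step your scheme cannot control. If you want to keep the arc-induction skeleton you would need a substantially stronger induction hypothesis giving you some choice over which vertex of each path serves as its representative, and your own (correct) observation that representatives cannot always be taken to be path endpoints already rules out the most natural strengthening. As written, the crucial case of your induction is unproven.
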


\begin{lemma}\label{lem:alpha}
	Let $ D $ be an $\alpha$-bounded digraph. If $ |V(D)| > \alpha(2\alpha+3)$, then $ D $ is not transitive-free. 
	
\end{lemma}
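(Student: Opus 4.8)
The plan is to prove the contrapositive: a transitive-free $\alpha$-bounded digraph $D$ has at most $\alpha(2\alpha+3)$ vertices. The engine I would use is a bound on the maximum degree of the underlying undirected graph. The one nontrivial observation is that in \emph{any} transitive-free digraph, for every vertex $v$ both the out-neighbourhood $N^{+}(v)$ and the in-neighbourhood $N^{-}(v)$ are independent sets. Indeed, if two out-neighbours $u_1,u_2\in N^{+}(v)$ were joined by an arc, say $(u_1,u_2)$, then $v\to u_1\to u_2$ is a directed path witnessing that the arc $(v,u_2)$ is transitive (\Cref{def:transitivity}), a contradiction; the symmetric argument handles $N^{-}(v)$. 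Since $D$ is $\alpha$-bounded (\Cref{def:alpha}), each of these independent sets has size at most $\alpha$, so every vertex satisfies $\deg(v)=|N^{+}(v)\cup N^{-}(v)|\le 2\alpha$.

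Next I would take a maximal independent set $I$ of the underlying undirected graph of $D$; by $\alpha$-boundedness $|I|\le\alpha$. Maximality forces $I$ to be dominating, so every vertex lies in $I$ or is adjacent to some vertex of $I$, and hence
\[
|V(D)| \;\le\; |I| + \sum_{v\in I}\deg(v) \;\le\; |I|\,(1+2\alpha) \;\le\; \alpha(2\alpha+1).
\]
Since $\alpha(2\alpha+1)\le\alpha(2\alpha+3)$, this already yields the claimed bound (with a little room to spare), and taking the contrapositive completes the proof.

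This route sidesteps the machinery the section builds up; the alternative the text seems aimed at is to invoke \Cref{theo-gallai} to obtain a path cover into at most $\alpha$ paths and then bound the contribution of each path via \Cref{lem:neighbour}. I expect the main obstacle to sit precisely there. One cannot simply cap each path's length by $2\alpha+3$: a single path in a transitive-free $\alpha$-bounded digraph can be about $3\alpha$ vertices long. For instance, a chain of directed triangles (vertex-disjoint $3$-cycles joined in series by single connecting arcs) is transitive-free, is $\alpha$-bounded with $\alpha$ equal to the number of triangles, and admits a Hamiltonian path with $3\alpha$ vertices. Thus any Gallai--Milgram argument must perform some global charging across the whole cover rather than a per-path length bound, which is why I would prefer the degree-plus-domination argument above, isolating the neighbourhood-independence observation as the single structural input.
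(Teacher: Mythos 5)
Your proof is correct and takes a genuinely different route from the paper's. You observe that in a transitive-free digraph every out-neighbourhood and every in-neighbourhood is an independent set of the underlying undirected graph (a two-arc path $v\to u_1\to u_2$ certifies that $(v,u_2)$ is transitive, and dually for in-neighbours), so every vertex has undirected degree at most $2\alpha$; combined with the fact that a maximal independent set is dominating, this yields $|V(D)|\le \alpha(2\alpha+1)$, which is even slightly stronger than the stated bound $\alpha(2\alpha+3)$. The paper instead invokes the Gallai--Milgram theorem (\Cref{theo-gallai}) to obtain a path cover by at most $\alpha$ paths with an independent transversal, uses the pigeonhole principle to extract a path with at least $2\alpha+4$ vertices, and then derives a contradiction from \Cref{lem:neighbour} (at most $3$ neighbours on the path for on-path vertices, at most $2$ for off-path vertices) together with the claimed maximality of the transversal. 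Your closing caution about that route is well taken: the Gallai--Milgram transversal need not be a maximal (hence dominating) independent set of $D$, and your chain-of-triangles example --- a transitive-free digraph covered by a single Hamiltonian path of $3\alpha$ vertices whose transversal is a single vertex --- shows that the paper's charging step (``every vertex of $S$ has a neighbour in $I\setminus\{v\}\subseteq V\setminus V(P)$'') can fail as literally written, even though the lemma itself is true. Your argument avoids this entirely, is more elementary in that it needs no path-cover machinery, and buys a cleaner constant by isolating neighbourhood independence as the single structural input.
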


\begin{proof}
	Let $\mathcal{P}$ be a path cover and $ I $ be a maximal independent set  on $ D $ obtained by the algorithm of  \Cref{theo-gallai}. Since $ D $ is $ \alpha $-bounded, $ |\mathcal{P}|\leq \alpha $. Now if $ |V(D)| > \alpha(2\alpha+3)$, then by the pigeonhole principle, there exists a path $ P \in \mathcal{P} $ such that $ |V(P)| \geq (2\alpha+4)$.  Let $ v \in I \cap V(P) $ and $ S = V(P) \setminus N_P(v) $. By Lemma  \ref{lem:neighbour}, we have $ |S| \geq (2\alpha -1) $. As $ I $ is maximal, every vertex in $ S $ has a neighbor in  $ I \setminus \{v\} $. As $| I \setminus \{v\}|\leq\alpha-1 $, by the pigeonhole principle, there is a vertex in $I \setminus \{v\}  \subseteq (V \setminus V(P))$, which is adjacent to at least three vertices in $ S \subseteq V(P) $, which contradicts Lemma \ref{lem:neighbour}. This completes the proof.	
\end{proof}

Using Lemma \ref{lem:alpha} and the fact that $\alpha$-bounded digraphs are hereditary, we get the following corollary.

\begin{corollary}
	Any $\alpha$-bounded digraph with more than $ \alpha(2\alpha+3) (\ell +1) + \ell $ vertices contains at least $(\ell+1)$ transitive arcs.	
\end{corollary}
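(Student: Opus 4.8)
The plan is to bootstrap the single-graph bound from Lemma~\ref{lem:alpha} into a counting bound on transitive arcs, mirroring exactly the argument used for tournaments in Corollary~\ref{cor:tour}. The key fact I would lean on is that being $\alpha$-bounded is a hereditary property: any induced subdigraph of an $\alpha$-bounded digraph is again $\alpha$-bounded, since deleting vertices cannot increase the maximum independent set. Thus Lemma~\ref{lem:alpha} applies not just to $D$ but to every induced subdigraph of $D$, which is what lets me repeatedly peel off transitive arcs.

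First I would argue by an iterative deletion process. Suppose $D$ is $\alpha$-bounded with more than $\alpha(2\alpha+3)(\ell+1)+\ell$ vertices. Set $D_0 = D$. As long as the current digraph $D_i$ has more than $\alpha(2\alpha+3)$ vertices, Lemma~\ref{lem:alpha} guarantees it is not transitive-free, so it contains at least one transitive arc $e_i = (u_i, v_i)$. I would then form $D_{i+1}$ by deleting a single vertex — for instance, the head $v_i$ of this transitive arc — from $D_i$. Each deletion removes exactly one vertex, and the newly found transitive arc $e_i$ is distinct from all previously found arcs because its head $v_i$ has been deleted and so cannot reappear as an arc of any later $D_j$. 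This guarantees the transitive arcs $e_0, e_1, \ldots$ are pairwise distinct.

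Next I would count how many iterations this process runs. Since $|V(D)| > \alpha(2\alpha+3)(\ell+1) + \ell$ and each step deletes one vertex, the number of vertices stays above the threshold $\alpha(2\alpha+3)$ for at least $\ell+1$ consecutive steps: after $i$ deletions the graph has more than $\alpha(2\alpha+3)(\ell+1)+\ell - i$ vertices, which exceeds $\alpha(2\alpha+3)$ as long as $i \le \ell$. Hence I can extract at least $\ell+1$ distinct transitive arcs $e_0, \ldots, e_\ell$, all of which live in $D$ (an arc transitive in an induced subdigraph $D_i$ remains an arc of $D$, though one must be slightly careful that transitivity is witnessed by a path present in $D_i \subseteq D$, so the witnessing path also exists in $D$). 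This yields the desired $\ell+1$ transitive arcs in $D$.

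The main obstacle, and the point I would treat most carefully, is the claim that a transitive arc of an induced subdigraph $D_i$ is genuinely a transitive arc of $D$ and that the extracted arcs are distinct in $D$. Transitivity of $e_i=(u_i,v_i)$ in $D_i$ means there is a $u_i$--$v_i$ path in $D_i - e_i$; since $D_i$ is an induced subdigraph of $D$, that same path survives in $D - e_i$, so $e_i$ is transitive in $D$ as well. Distinctness follows from the deletion scheme: because I delete an endpoint of each found arc, no arc can be counted twice. An alternative to the explicit peeling argument would be a direct partition-style counting as in the tournament case, but the iterative deletion is cleaner here because Lemma~\ref{lem:alpha}'s threshold is additive rather than multiplicative, so I only need the hereditary property plus a one-vertex-at-a-time removal to generate the $\ell+1$ arcs.
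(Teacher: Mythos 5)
Your argument is correct, but it is not the route the paper's bound is calibrated to. The paper derives the corollary from Lemma~\ref{lem:alpha} together with heredity in the same way as \Cref{cor:tour}: split the vertex set into $\ell+1$ pairwise disjoint parts, each of size greater than $\alpha(2\alpha+3)$ (which is exactly what more than $\alpha(2\alpha+3)(\ell+1)+\ell$ vertices permits), apply Lemma~\ref{lem:alpha} to each induced part, and conclude that the resulting transitive arcs are distinct because the parts are vertex-disjoint. You instead peel off one vertex per extracted arc, which is a genuinely different decomposition, and you correctly handle the two points that need care: that a transitive arc of an induced subdigraph remains transitive in $D$ (the witnessing path survives), and that deleting an endpoint of each found arc forces distinctness. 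Your iterative argument in fact proves a strictly stronger statement --- more than $\alpha(2\alpha+3)+\ell$ vertices already force $\ell+1$ transitive arcs --- so the stated threshold follows a fortiori; the trade-off is that the partition argument is a one-line consequence of heredity with no bookkeeping about which arcs survive deletion, while yours sharpens the quantitative bound (and would likewise sharpen \Cref{cor:tour} and the running-time exponent in Theorem~\ref{theo:bounded} if the authors wished to use it). One minor quibble: your closing remark that the peeling is preferable because the threshold in Lemma~\ref{lem:alpha} is ``additive rather than multiplicative'' is not really the reason --- the partition argument works fine here, as the stated constant shows; the honest advantage of peeling is only the improved bound.
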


\noindent Now given a $\alpha$-bounded digraph $D$, solving \mvd on $ D $ reduces to finding a maximum size set $ V' \subseteq V(D) $ such that $ D[V'] $ has at most $\ell$ transitive arcs. Since $ |V'| \leq (\alpha(2\alpha+3)+1) \ell + \alpha(2\alpha+3) = \alpha (2\alpha+3) (\ell+1) +\ell $, we can find such a set in $\mathcal{O}(n^{\alpha (2\alpha+3) (\ell+1) +\ell})$. This completes the proof of Theorem \ref{theo:bounded}.\qed

\subsection{\mvd on Acyclic Local Tournaments}\label{sec:localtour}

In this section, we prove the  following result.

\begin{theorem}\label{theo-acyclic}
	\mvd on  acyclic local tournaments with $ n $ vertices admits an algorithm running in $ n^{\mathcal{O}(\ell)} $ time.
\end{theorem}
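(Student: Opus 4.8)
The plan is to exploit two structural features of acyclic local tournaments: transitivity is \emph{local} (every transitive arc is witnessed by a path of length exactly two), and the vertices admit a linear order in which in-neighbourhoods are intervals. Together these let a left-to-right dynamic program count the surviving transitive arcs while remembering only a bounded amount of boundary information.

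First I would prove the locality lemma: in an acyclic local tournament, an arc $(u,v)$ is transitive if and only if there is a vertex $w$ with $(u,w),(w,v)\in A(D)$. The ``if'' direction is immediate. For ``only if'', take a shortest alternative $u=p_0\to p_1\to\cdots\to p_t=v$ with $t\ge 2$. Since $(u,p_1)$ and $(u,v)$ are arcs, $p_1,v\in N^+(u)$, and as $D$ is an out-tournament this set induces a tournament, so $p_1$ and $v$ are adjacent; acyclicity forbids $v\to p_1$ (it would close the cycle $p_1\to\cdots\to v\to p_1$), hence $p_1\to v$ and $u\to p_1\to v$ is an alternative of length $2$, forcing $t=2$.

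Next I would fix a topological order $v_1,\dots,v_n$ (it exists as $D$ is acyclic, so every arc points forward) and invoke the structure theory of local tournaments to choose the order so that each in-neighbourhood $N^-(v_j)$ is a set of consecutive predecessors $\{v_{l_j},\dots,v_{j-1}\}$; acyclic local tournaments are proper-interval/round digraphs, which is exactly the property needed here. Combining this with the locality lemma gives the clean characterisation I want: for $V'\subseteq V(D)$, an arc $(v_i,v_j)$ is transitive in $D[V']$ iff $v_i,v_j\in V'$ and some kept vertex lies strictly between them; indeed any kept in-neighbour $v_m$ of $v_j$ with $i<m<j$ is a length-$2$ witness, because $v_i,v_m\in N^-(v_j)$, which induces a tournament (the in-tournament property), and acyclicity then forces $v_i\to v_m\to v_j$. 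Ordering the kept in-neighbours of $v_j$ as $v_{i_1}<\cdots<v_{i_c}$, exactly the top one $v_{i_c}$ yields a non-transitive arc into $v_j$, so keeping $v_j$ creates precisely $\max(c-1,0)$ new transitive arcs, and summing this quantity over all kept $v_j$ counts every transitive arc of $D[V']$ exactly once (at its head).

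Finally I would turn this into a dynamic program over $v_1,\dots,v_n$ that decides keep/delete. Because the in-neighbours of $v_j$ are consecutive, its kept in-neighbours form a suffix of the previously kept vertices; and in any feasible solution the number of transitive arcs created at a single head is at most $\ell$, so $c\le \ell+1$ there. Hence it suffices to remember, as state, the indices of the last $\ell+2$ kept vertices together with the running number of transitive arcs (truncated at $\ell+1$, beyond which the state is discarded). From the stored indices and $N^-(v_j)$ one computes $c$ exactly, or detects $c\ge \ell+2$ and prunes. The state space has size $n^{\mathcal{O}(\ell)}$, each transition is polynomial, and maximising the number of kept vertices over all states ending with count $\le\ell$ solves \mvd; the deletion set has size $n$ minus this maximum. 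Disconnected inputs are handled by resetting the ``recent kept'' list at each component boundary (no arcs cross components) while keeping the global counter, or equivalently by a knapsack over components that distributes the budget $\ell$. The main obstacle is the second step: one must establish (or correctly cite) the interval/round ordering of acyclic local tournaments, since without the locality of in-neighbours the kept in-neighbours of $v_j$ could be spread arbitrarily far back and the bounded-memory state would no longer capture them.
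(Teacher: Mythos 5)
Your proposal is correct, but it takes a genuinely different route from the paper. The paper's algorithm first \emph{guesses}, in $n^{\mathcal{O}(\ell)}$ ways, the set $F$ of at most $\ell$ transitive arcs that survive together with the at most $\ell$ ``spared'' vertices lying between their endpoints; it then argues (via \Cref{obs2} and \Cref{obs3}) that all remaining in-between vertices are forced into the solution, which splits the instance into $\mathcal{O}(\ell)$ independent intervals on each of which one must solve the $\ell=0$ case, and that case is dispatched by the equivalence with singly-connected vertex deletion (\Cref{lem:tfsc}) and a cited $\mathcal{O}(n^3)$ subroutine (\Cref{algo}, adjusted by \Cref{lem-first} to avoid the interval endpoints). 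You instead run a single left-to-right dynamic program that counts transitive arcs at their heads, using the length-two witness lemma plus the interval structure of in-neighbourhoods to show that a kept vertex with $c$ kept in-neighbours contributes exactly $\max(c-1,0)$ transitive arcs, whence $c\le\ell+1$ in any feasible solution and a state consisting of the last $\ell+2$ kept vertices plus a truncated counter suffices. Both arguments hinge on the same structural fact --- that neighbourhoods are intervals in the unique topological order, which is exactly the paper's \Cref{lem:S_i_tour} (your in-neighbourhood version follows by reversing all arcs, so the ``obstacle'' you flag is already available in the paper) --- and both yield $n^{\mathcal{O}(\ell)}$ time. What your approach buys is self-containedness and a cleaner correctness proof: you avoid both the exhaustive guessing of $F$ and the black-box singly-connected deletion algorithm, and your head-counting formula gives an exact, locally checkable accounting of the number of surviving transitive arcs, whereas the paper's correctness argument for recombining the $\mathcal{O}(\ell)$ subinstances is considerably more delicate. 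What the paper's approach buys is reuse of existing machinery for the $\ell=0$ case, so that only the reduction to \mzvd needs to be verified. Do make sure, when writing this up, to state and prove (or cite) the round/interval ordering explicitly and to handle disconnected inputs by the component-wise knapsack you sketch.
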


Below we present several facts and certain lemmas along with Theorem \ref{algo} that play a vital role in the development of our algorithm, as outlined in Theorem \ref{theo-acyclic}.

\begin{fact}[\cite{bang2008digraphs}]\label{fact:one}
 Let $D$ be an acyclic digraph. Then $D$ has a topological ordering. That is, there exists an ordering $\sigma= (v_1, \ldots, v_n)$ of the
vertices of $D$ such that for every arc $(v_i, v_j) \in A(D)$, we have $i < j$, i.e., $v_i$ appears before $v_j$ in the ordering $\sigma$. Moreover, there exists
an algorithm that, given an acyclic digraph $D$ with $n$ vertices and $m$ arcs as input, runs in time $\mathcal{O}(n + m)$, and finds a topological
ordering of $D$.
\end{fact}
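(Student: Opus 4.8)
The plan is to prove the two assertions together—existence of a topological ordering and the existence of a linear-time algorithm producing one—by analyzing the classical \emph{source-peeling} (Kahn) procedure, since the algorithm simultaneously certifies existence. The cornerstone is a structural lemma: every nonempty acyclic digraph $D$ has a \emph{source}, i.e.\ a vertex $v$ with $N^{-}(v)=\emptyset$. I would argue this by contradiction. If every vertex had an in-neighbor, then starting from an arbitrary $u_0$ and repeatedly following an in-arc backward would yield an infinite walk $u_0,u_1,u_2,\ldots$ with $(u_{i+1},u_i)\in A(D)$ for all $i$; since $V(D)$ is finite, some two vertices coincide, say $u_i=u_j$ with $i<j$, and then $u_j\to u_{j-1}\to\cdots\to u_i=u_j$ is a directed cycle, contradicting acyclicity.

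Next I would prove existence of the ordering by induction on $n=|V(D)|$, the cases $n\le 1$ being immediate. For the inductive step, take a source $s$ (guaranteed by the lemma). The digraph $D-s$ is again acyclic, because any directed cycle in $D-s$ would already be a directed cycle in $D$; hence by the induction hypothesis $D-s$ has a topological ordering $(w_1,\ldots,w_{n-1})$. I would then verify that $\sigma=(s,w_1,\ldots,w_{n-1})$ is a topological ordering of $D$: every arc $(x,y)$ either has $x=s$, in which case $x$ precedes $y$ since $s$ is placed first, or lies entirely within $D-s$ (it cannot point \emph{into} $s$, as $s$ is a source), in which case $x$ precedes $y$ by the inductive ordering.

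For the algorithmic claim I would implement the above as Kahn's algorithm and charge the running time against vertices and arcs. Concretely: compute $\deg^{-}(v)$ for all $v$ in $\mathcal{O}(n+m)$ time; initialize a queue $Q$ with all current sources; then repeatedly extract a vertex $u$ from $Q$, append it to the output sequence, and for each $w\in N^{+}(u)$ decrement $\deg^{-}(w)$, enqueuing $w$ the instant its counter reaches $0$. Each vertex enters and leaves $Q$ exactly once and each arc is scanned exactly once (when its tail is extracted), giving the $\mathcal{O}(n+m)$ bound. Correctness of the order holds because a vertex is emitted only after all its in-neighbors have been emitted—its counter hits $0$ only then—so for every arc $(v_i,v_j)$ the tail precedes the head. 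Moreover the process emits \emph{all} vertices: at any intermediate stage, if some vertices remain unemitted, they induce a nonempty acyclic subdigraph, which by the lemma has a source; that source has current counter $0$ and therefore sits in $Q$, so $Q$ is never prematurely empty.

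The main obstacle, insofar as a classical fact has one, is the source-existence lemma together with the corresponding ``$Q$ is never empty'' invariant in the algorithm; both rest on the same finiteness-plus-acyclicity pigeonhole argument sketched in the first paragraph. Once that argument is in place, the induction and the amortized $\mathcal{O}(n+m)$ accounting are routine.
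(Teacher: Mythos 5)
Your proof is correct. Note, however, that the paper does not prove this statement at all: it is imported as a known Fact with a citation to Bang-Jensen and Gutin's monograph, so there is no in-paper argument to compare against. What you give is the standard textbook proof --- the source-existence lemma via the finiteness-plus-acyclicity pigeonhole, the inductive peeling argument for existence, and Kahn's in-degree-counter algorithm with the per-vertex/per-arc charging scheme for the $\mathcal{O}(n+m)$ bound --- and all three pieces are sound, including the often-omitted check that the queue never empties prematurely. Nothing is missing; your write-up simply supplies a proof where the paper chose to cite one.
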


\begin{fact}[\cite{article}] \label{fact:uniqtoporderifham}
	An acyclic digraph $D$ has a unique topological ordering if and only if $D$ has a Hamiltonian path.
\end{fact}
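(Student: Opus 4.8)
The plan is to prove both directions of the equivalence by relating a Hamiltonian path directly to the structure of topological orderings. Throughout I would use \Cref{fact:one}, which guarantees that an acyclic digraph admits at least one topological ordering; uniqueness is therefore the only thing genuinely at issue.

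For the forward direction, I would assume $D$ has a Hamiltonian path $P = u_1, u_2, \ldots, u_n$. Since $(u_i, u_{i+1}) \in A(D)$ for each $i \in [n-1]$, every topological ordering $\sigma$ must place $u_i$ before $u_{i+1}$. Chaining these constraints along $P$ forces $u_1, u_2, \ldots, u_n$ to appear in exactly this relative order in $\sigma$. As $P$ visits all $n$ vertices, $\sigma$ is completely determined, so $D$ has a unique topological ordering, namely the vertex sequence of $P$.

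For the backward direction, I would argue contrapositively: if $D$ has no Hamiltonian path, then its topological ordering is not unique. Fix any topological ordering $\sigma = (v_1, \ldots, v_n)$ obtained from \Cref{fact:one}. If $(v_i, v_{i+1}) \in A(D)$ held for every $i \in [n-1]$, then $v_1, \ldots, v_n$ would itself be a Hamiltonian path, contrary to assumption. Hence there is an index $i$ with $(v_i, v_{i+1}) \notin A(D)$. I would then form $\sigma'$ by transposing $v_i$ and $v_{i+1}$ in $\sigma$ and claim that $\sigma'$ is again a valid topological ordering, clearly distinct from $\sigma$.

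The crux of the argument, and the step requiring the most care, is verifying that this transposition preserves topological validity. Here I would check each arc $(v_a, v_b)$: arcs avoiding both $v_i$ and $v_{i+1}$ keep their relative order, and since $\sigma$ is topological there is no arc $(v_{i+1}, v_i)$, so moving $v_{i+1}$ before $v_i$ violates nothing in that direction. The only configuration that could break validity would be an arc forcing $v_i$ before $v_{i+1}$, but that is precisely the arc $(v_i, v_{i+1})$, which we excluded by the choice of $i$. Thus $\sigma'$ is a second, distinct topological ordering, establishing non-uniqueness and completing the contrapositive. The main obstacle is therefore confined to this local case analysis on the swap; the remainder of the proof is immediate once one observes that the presence of all consecutive arcs along a topological ordering is exactly what prevents any reordering.
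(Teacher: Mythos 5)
Your proof is correct. The paper states this as a cited fact (from \cite{article}) and gives no proof of its own; your argument --- chaining the arcs of a Hamiltonian path to force the order in one direction, and transposing a consecutive non-adjacent pair in a topological ordering to exhibit a second ordering in the other --- is the standard proof of this result, and your case analysis for the swap is complete since the only pair whose relative order changes is $\{v_i, v_{i+1}\}$.
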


\begin{fact}[\cite{article,bang1990locally}] \label{fact:loctournaimpham}
	Let $D$ be a connected local tournament. Then $D$ contains a Hamiltonian path.
\end{fact}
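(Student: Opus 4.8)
The final statement to prove is Fact~\ref{fact:loctournaimpham}, which asserts that a connected local tournament contains a Hamiltonian path. Although the paper cites this as a known fact, let me sketch how I would prove it from scratch.

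The plan is to proceed by induction on the number of vertices $n$, building a Hamiltonian path one vertex at a time. The base case $n=1$ is trivial. For the inductive step, I would pick an arbitrary vertex $v$, consider $D-v$, and try to reattach $v$ to a Hamiltonian path of a connected component of $D-v$. The subtlety is that $D-v$ need not be connected, so first I would argue structurally: since $D$ is connected, $v$ has at least one neighbor, and I would want to show that the components of $D-v$ can be linearly ordered and stitched together through $v$. Here the defining property of a local tournament becomes essential: because $D[N^{+}(v)]$ and $D[N^{-}(v)]$ are both tournaments, the in-neighbors of $v$ form a tournament and the out-neighbors form a tournament, which tightly constrains how $v$ interacts with each component.

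The cleaner route, and the one I would actually carry out, is to prove the stronger and more standard statement that every \emph{tournament} has a Hamiltonian path (by the classical insertion argument: given a Hamiltonian path of $D-v$, scan along it to find the place where an arc into $v$ meets an arc out of $v$, and splice $v$ in), and then leverage the local-tournament structure to reduce to this case. Concretely, I would take a longest path $P = u_1, u_2, \ldots, u_t$ in $D$ and argue by contradiction that it is Hamiltonian. Suppose some vertex $w \notin V(P)$ exists. By connectedness and a shortest-path argument I can assume $w$ has a neighbor on $P$. Maximality of $P$ forbids $w$ from being appended at either end, so $(w,u_1) \notin A(D)$ and $(u_t, w) \notin A(D)$. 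I would then examine consecutive pairs $u_i, u_{i+1}$: if $(u_i, w) \in A(D)$ and $(w, u_{i+1}) \in A(D)$ for some $i$, then $w$ can be inserted, contradicting maximality. So for every $i$ the ``bad'' configuration must hold, and the local-tournament condition on $N^{+}$ and $N^{-}$ forces enough arcs between $w$ and the $u_i$ to derive a contradiction.

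The main obstacle is exactly this last step: unlike an ordinary tournament, a local tournament need not have an arc between $w$ and every $u_i$, so the standard clean insertion argument does not apply directly. The work is in showing that the set of path-vertices adjacent to $w$ forms a contiguous interval along $P$ (using the fact that the out-neighbors of $w$ and the in-neighbors of $w$ each induce tournaments, hence are internally comparable), and then analyzing the boundary of that interval to produce an insertion point or extend the path. I would isolate this as the crux lemma: \emph{the neighbors of $w$ on $P$ occupy a consecutive segment, and on that segment the arcs switch from $w$-out to $w$-in at most once}, which immediately yields an insertion. Once that local structural claim is established, the contradiction with the maximality of $P$ is routine, completing the proof.
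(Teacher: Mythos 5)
The paper does not actually prove Fact~\ref{fact:loctournaimpham}; it imports it from \cite{article,bang1990locally} (Bang-Jensen's theorem that every connected local tournament is traceable), so there is no in-paper argument to compare yours against. Judged on its own, your second route --- take a longest path $P=u_1,\ldots,u_t$, pick $w\notin V(P)$ with a neighbour on $P$ (possible by connectedness), and contradict maximality --- is the standard and correct approach, but the crux you defer can be discharged more directly than your ``contiguous interval with at most one switch'' formulation, and that formulation has the switch backwards: an insertion needs $(u_i,w)$ followed by $(w,u_{i+1})$, i.e.\ a change from into-$w$ to out-of-$w$. The clean version is a propagation argument: if $(u_i,w)\in A(D)$, then $w,u_{i+1}\in N^{+}(u_i)$, so the tournament condition on $N^{+}(u_i)$ forces an arc between them; $(w,u_{i+1})$ would splice $w$ into $P$, so $(u_{i+1},w)\in A(D)$, and this propagates rightward to $(u_t,w)$, letting you append $w$ --- contradiction. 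Symmetrically, $(w,u_i)\in A(D)$ propagates leftward via the tournament condition on $N^{-}(u_i)$ to $(w,u_1)$, letting you prepend $w$. Hence no vertex off $P$ has any arc to or from $P$ at all, contradicting connectedness; the insertion case never actually arises. So your sketch is sound in outline and your instinct to abandon the first (inductive) route was right --- $D-v$ may be disconnected and stitching components through $v$ is genuinely awkward --- but as submitted the key lemma is asserted rather than proved, and its statement needs the correction above.
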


\begin{lemma}[\cite{article}]\label{lem:topo_ordering_ALT}
	Let $D$ be a connected acyclic local tournament and $P = v_1 v_2 \ldots v_n$ be a Hamiltonian path in $D$. Then  $\sigma= (v_1, \ldots, v_n)$ is 	the unique topological ordering of $ D $.
\end{lemma}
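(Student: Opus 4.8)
The plan is to split the statement into two parts: first that the path order $\sigma=(v_1,\ldots,v_n)$ is \emph{a} topological ordering of $D$, and second that it is the \emph{only} one. Since $D$ is acyclic, Fact~\ref{fact:one} already guarantees that a topological ordering exists; the whole task is to pin that ordering down to exactly $\sigma$ and to rule out any competitor.

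For the first part I would argue directly from acyclicity that the Hamiltonian path $P$ forbids every ``backward'' arc. Suppose toward a contradiction that $(v_i,v_j)\in A(D)$ with $i>j$. Because $P=v_1v_2\cdots v_n$ is a Hamiltonian path, its consecutive arcs $(v_j,v_{j+1}),\ldots,(v_{i-1},v_i)$ give a directed path from $v_j$ to $v_i$; concatenating this path with the arc $(v_i,v_j)$ yields a directed cycle, contradicting the acyclicity of $D$. Hence every arc $(v_i,v_j)\in A(D)$ satisfies $i<j$, which is precisely the defining condition of Fact~\ref{fact:one} for $\sigma$ to be a topological ordering. Note that this step uses only acyclicity together with the given path, not the local-tournament structure.

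For uniqueness I would invoke Fact~\ref{fact:uniqtoporderifham}: an acyclic digraph has a unique topological ordering if and only if it admits a Hamiltonian path. As $D$ is acyclic and the hypothesis supplies the Hamiltonian path $P$, $D$ has a unique topological ordering, and by the first part this ordering must coincide with $\sigma$. Alternatively, uniqueness follows without appealing to Fact~\ref{fact:uniqtoporderifham}: for any topological ordering $\tau$, the consecutive path arcs force $\tau(v_1)<\tau(v_2)<\cdots<\tau(v_n)$, and since $\tau$ is a bijection onto $\{1,\ldots,n\}$ this strict chain of $n$ values forces $\tau(v_i)=i$, i.e.\ $\tau=\sigma$.

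I do not expect a genuine obstacle here, as the lemma is short; the only point needing a little care is verifying that no backward arc exists, and the cycle argument above is the cleanest route. It is worth remarking that the connectivity and local-tournament hypotheses play no role beyond guaranteeing (via Fact~\ref{fact:loctournaimpham}) that a Hamiltonian path exists in the first place—once $P$ is handed to us, the argument goes through for any acyclic digraph possessing a Hamiltonian path.
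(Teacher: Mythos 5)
Your proof is correct and self-contained. Note that the paper itself does not prove this lemma at all---it is imported from the cited reference \cite{article} without argument---so there is no in-paper proof to compare against. Both of your routes work: the no-backward-arc argument via a directed cycle establishes that $\sigma$ is a topological ordering, and uniqueness follows either from Fact~\ref{fact:uniqtoporderifham} or, more elementarily, from your observation that the consecutive arcs of $P$ force $\tau(v_1)<\cdots<\tau(v_n)$ and hence $\tau=\sigma$ for any topological ordering $\tau$. Your remark that connectivity and the local-tournament property are needed only to guarantee the existence of the Hamiltonian path (via Fact~\ref{fact:loctournaimpham}) is also accurate.
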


Let $D$ be a connected (underlying undirected graph is connected) acyclic local tournament and   $\sigma= (v_1, \ldots, v_n)$ be its  unique topological ordering. For a vertex $ v  $ in $ D $, $ \ell(v) $ denotes the last vertex $ u $ in the ordering $ \sigma $ such that $( v, u ) \in  A ( D )$. For each $i \in  [ n ]$ , we define an ordered set $S_i = \{ v_i , v_{i+1}, \ldots , \ell(v_i)\} $.

\begin{lemma}[\cite{article}]\label{lem:S_i_tour}
	Let $D$ be a connected acyclic local tournament and  $\sigma= (v_1, \ldots, v_n)$ be its unique topological ordering. Then, for all $i \in [n]$,
	the graph $D [ S_i ]$ is an acyclic tournament. Moreover, $ S_i= N_D^+(v_i) \cup \{v_i\} $.
\end{lemma}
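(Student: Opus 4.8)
The plan is to deduce both assertions from a single structural claim: that the out-neighborhood of $v_i$ is exactly the block of vertices lying strictly after $v_i$ and up to $\ell(v_i)$ in $\sigma$, i.e. $N_D^+(v_i) = \{v_{i+1}, \ldots, \ell(v_i)\}$. First I would record the ambient structure. By Fact \ref{fact:loctournaimpham} the connected local tournament $D$ has a Hamiltonian path, and by Lemma \ref{lem:topo_ordering_ALT} this path realizes the unique topological ordering $\sigma = (v_1, \ldots, v_n)$; in particular $(v_t, v_{t+1}) \in A(D)$ for every $t$, and since $D$ is acyclic every arc points forward in $\sigma$. Writing $v_m = \ell(v_i)$ (if $v_i$ has no out-neighbor then $S_i = \{v_i\}$ is trivially an acyclic tournament, so assume $m > i$), the definition of $\ell$ gives the easy inclusion $N_D^+(v_i) \subseteq \{v_{i+1}, \ldots, v_m\}$, with both $v_{i+1}$ (the path arc) and $v_m$ lying in $N_D^+(v_i)$.

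The heart of the argument is the reverse inclusion $\{v_{i+1}, \ldots, v_m\} \subseteq N_D^+(v_i)$, which I would prove by taking a minimal counterexample. Suppose some vertex in this range is not an out-neighbor of $v_i$, and let $v_j$ be the one with smallest index $j$; since $v_{i+1}, v_m \in N_D^+(v_i)$ we have $i+1 < j < m$, so minimality forces $v_{j-1} \in N_D^+(v_i)$, while the forward-arc property excludes $(v_j, v_i)$, leaving $v_i$ and $v_j$ nonadjacent. I would then run a short chain of deductions alternating the two tournament conditions of Definition \ref{def:outinlocal}. Since $v_{j-1}, v_m \in N_D^+(v_i)$ and $D[N_D^+(v_i)]$ is a tournament, we get $v_{j-1} \to v_m$; using the path arc $v_{j-1} \to v_j$ this puts $v_j, v_m \in N_D^+(v_{j-1})$, and as $D[N_D^+(v_{j-1})]$ is a tournament we get $v_j \to v_m$; hence $v_i, v_j \in N_D^-(v_m)$, and because $D[N_D^-(v_m)]$ is a tournament $v_i$ and $v_j$ must be adjacent, contradicting their nonadjacency. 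This contradiction establishes the interval property.

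Finally I would cash in the interval property. It gives $S_i = \{v_i\} \cup N_D^+(v_i)$ directly, which is the ``Moreover'' statement. For the acyclic-tournament claim, every pair in $S_i$ is adjacent: $v_i$ is adjacent to each of its out-neighbors by definition, and any two vertices of $N_D^+(v_i)$ are adjacent because $D[N_D^+(v_i)]$ is a tournament; hence $D[S_i]$ is a tournament, and it is acyclic as an induced subgraph of the acyclic digraph $D$. I expect the only genuine obstacle to be the interval property, and within it the subtlety is the order in which the in-tournament and out-tournament conditions are invoked: a single application does not close the gap, whereas the three-step alternation above first pins down $v_j \to v_m$ and only then forces the missing arc $v_i \to v_j$.
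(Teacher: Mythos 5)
Your proof is correct. Note that the paper does not prove this lemma at all --- it is imported verbatim from the cited reference \cite{article} --- so there is no in-paper argument to compare against; your write-up supplies a complete, self-contained justification. The one step that needed care, the interval property $N_D^+(v_i)=\{v_{i+1},\ldots,\ell(v_i)\}$, is handled correctly: the minimal-counterexample setup legitimately places $v_j$ strictly between $v_{i+1}$ and $v_m$, and the three-step alternation (out-tournament at $v_i$ gives $v_{j-1}\to v_m$, out-tournament at $v_{j-1}$ gives $v_j\to v_m$, in-tournament at $v_m$ forces $v_i$ and $v_j$ adjacent) is valid because acyclicity plus the uniqueness of the topological ordering guarantees both that all arcs point forward and that the consecutive path arcs $(v_t,v_{t+1})$ exist.
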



\begin{lemma}\label{lem:tfsc}
	Any  local tournament $D$ is  transitive-free if and only if $D$ is singly connected.	
\end{lemma}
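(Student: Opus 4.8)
The plan is to prove the two implications separately, with the forward implication (singly connected $\Rightarrow$ transitive-free) being essentially free and holding for every simple digraph, and the reverse implication (transitive-free $\Rightarrow$ singly connected) being where the local-tournament structure is actually needed.

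For the easy direction I would argue the contrapositive: if $D$ has a transitive arc $e=(u,v)$, then by \Cref{def:transitivity} there is a directed path $P$ from $u$ to $v$ in $D-e$. Since the only arc from $u$ to $v$ is $e$ itself, $P$ avoids $e$ and hence has length at least two; thus $P$ and the single-arc path $e$ are two distinct $u$-$v$ paths, witnessing that $D$ is not singly connected. This step needs no assumption on $D$ beyond being a simple digraph.

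For the reverse direction I would again use the contrapositive: assume $D$ is not singly connected and construct a transitive arc. The first step is to normalize the witness. Starting from two distinct $u$-$v$ paths $P_1,P_2$, I would locate the last vertex $a$ on which their prefixes from $u$ agree and after which they leave $a$ along different arcs, and then the first vertex $b$ at which their suffixes reconverge. Restricting to the subpaths between $a$ and $b$ yields two distinct paths $Q_1,Q_2$ from $a$ to $b$ that leave $a$ by different arcs $(a,p_1)$ and $(a,p_2)$ with $p_1\neq p_2$, and that share no internal vertex. This reduction is the technical heart of the argument; the main thing to verify is that $b$ is well defined (both suffixes terminate at $v$) and that the extracted subpaths are genuinely internally disjoint and distinct.

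The second step is a short case analysis producing the transitive arc. If one of $Q_1,Q_2$, say $Q_2$, is the single arc $(a,b)$, then $Q_1$ is an $a$-$b$ path of length at least two avoiding $(a,b)$, so $(a,b)$ is transitive and we are done. Otherwise both $Q_1,Q_2$ have length at least two, so $p_1,p_2$ are distinct out-neighbors of $a$; here I invoke that $D$ is an out-tournament (a consequence of being a local tournament), so $D[N^{+}(a)]$ is a tournament and there is an arc between $p_1$ and $p_2$. If $(p_1,p_2)\in A(D)$, the length-two path $a\to p_1\to p_2$ certifies that the arc $(a,p_2)$ is transitive, and symmetrically $(p_2,p_1)\in A(D)$ makes $(a,p_1)$ transitive. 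Either way $D$ has a transitive arc, completing the contrapositive. I expect the divergence-and-reconvergence extraction to be the only delicate point; once the two paths are made to split at their common source, the out-tournament property closes the argument in a single step.
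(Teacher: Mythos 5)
Your proof is correct and follows essentially the same route as the paper: extract a divergence vertex of two distinct $u$--$v$ paths, obtain two distinct out-neighbours of that vertex, and use the (out-)tournament condition on its out-neighbourhood to manufacture a transitive arc. You are in fact somewhat more careful than the paper, which simply asserts that two vertex-disjoint paths of length at least two exist and does not separately treat the case where one of the competing paths is the single arc itself.
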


\begin{proof}
	In the forward direction, suppose that $D$ has no transitive arcs and is not singly connected. Then, there exists a pair of vertices $u$ and $v$ such that there are at least two vertex-disjoint paths of length at least two  from $ u $ to $ v $ in $ D $. Let $ P_1 $ and $ P_2 $ be two such paths. But, then $u$ has two out neighbours, say $ y_1$ in $P_1$ and $y_2$ in $P_2$. As $ D $ is a local tournament, either $ (y_1, y_2) \in A(D) $ or $ (y_2, y_1) \in A(D)$. In both these cases, we get either $(u,y_1)$ or $(u,y_2)$ as transitive arcs, which is a contradiction. The reverse direction is immediete. This completes the proof. 
\end{proof}

Below we state a crucial result, that help  to design  our algorithm.

\begin{theorem}[\cite{article}]\label{algo}
	Let $D$ be a connected acyclic local tournament on $ n $ vertices and  let $\sigma= (v_1, \ldots, v_n)$ be the unique topological ordering of $ D $. In $\mathcal{O}(n^3)$ time, we can find a  minimum-size set $S$ such that $D-S$ is transitive-free. Additionally, we have $v_1\notin S$. 
\end{theorem}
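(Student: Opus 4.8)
The plan is to design a dynamic programming algorithm over the unique topological ordering $\sigma=(v_1,\ldots,v_n)$, exploiting the structural results established above. First I would observe that by Lemma~\ref{lem:tfsc}, making $D-S$ transitive-free is equivalent to making $D-S$ singly connected, so I may work with either notion interchangeably. The key structural fact I would lean on is Lemma~\ref{lem:S_i_tour}: for each $i$, the out-neighborhood $S_i = N_D^+(v_i)\cup\{v_i\}$ induces an acyclic tournament, and these sets form a system of ``intervals'' in the ordering $\sigma$. Transitivity in a local tournament is therefore a local phenomenon controlled by how consecutive such intervals overlap, and this locality is what makes a polynomial-time DP possible.

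The main step is to set up the DP correctly. I would process vertices in the order $\sigma$, and maintain for each prefix the minimum number of deletions needed so that the retained portion is transitive-free, while tracking just enough boundary information to decide whether keeping the next vertex creates a transitive arc. Because $S_i$ is an acyclic tournament, any kept vertex $v_j$ with $j>i$ that lies in $S_i$ together with $v_i$ and an intermediate kept vertex would immediately produce a transitive arc $(v_i,v_j)$; thus the state must record which of the ``active'' out-intervals are still open and whether they already contain an intermediate retained vertex. The crucial observation is that the amount of boundary information one must remember is bounded, so the number of DP states is polynomial. Concretely, a state can be described by the last retained vertex together with a bounded record of which recently opened intervals have been ``witnessed'' by an intermediate vertex; transitions correspond to deciding whether to delete or retain $v_{i+1}$, updating the open intervals via the endpoints $\ell(v_i)$.

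The hard part will be arguing that the DP state can be kept to polynomial size while still being a correct summary, i.e.\ that two partial solutions with the same state are genuinely interchangeable for all future decisions. This requires a careful exchange argument showing that the only way a future arc $(v_a,v_b)$ becomes transitive is through the interval structure captured by the $S_i$'s, so that the decision for each vertex depends only on the bounded boundary information rather than on the full history. To establish the $\mathcal{O}(n^3)$ running time I would bound the number of states and transitions explicitly using the interval structure of the out-neighborhoods. Finally, the additional guarantee $v_1\notin S$ comes for free: since $v_1$ is the unique source in the topological ordering, it is never the head of any arc and hence never the endpoint of a transitive arc, so there is always an optimal solution retaining it—if an optimal $S$ contained $v_1$, removing $v_1$ from $S$ cannot create any new transitive arc, giving a solution of no larger size that keeps $v_1$.
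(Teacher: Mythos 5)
First, a framing point: the paper does not prove Theorem~\ref{algo} at all---it is imported verbatim from \cite{article}---so your proposal has to stand on its own. It does not, for two reasons. The more serious one is your closing argument that ``$v_1\notin S$ comes for free.'' You assert that $v_1$, being a source, ``is never the head of any arc and hence never the endpoint of a transitive arc.'' A transitive arc has two endpoints, and $v_1$ can perfectly well be its \emph{tail}: take the acyclic local tournament on $v_1,v_2,v_3$ with arcs $(v_1,v_2),(v_2,v_3),(v_1,v_3)$. Here $(v_1,v_3)$ is transitive, $S=\{v_1\}$ is an optimal solution, and dropping $v_1$ from $S$ yields $S'=\emptyset$, which is \emph{not} a solution---restoring $v_1$ re-creates the transitive arc $(v_1,v_3)$. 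So your exchange step (``removing $v_1$ from $S$ cannot create any new transitive arc'') is false. The paper's own \Cref{lem-first} (Case~1) shows what is actually required: one must \emph{replace} $v_1$ in $S$ by a suitably chosen out-neighbour $v_p$ and then argue that this substitution preserves feasibility; mere deletion from $S$ does not work.

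The second gap is that the dynamic program, which is the entire content of the theorem, is never actually specified. The state space, the transitions, the proof that a bounded ``boundary record'' is a sufficient summary, and the exchange argument establishing correctness are all deferred with phrases like ``the hard part will be arguing that\ldots''. The interval structure of Lemma~\ref{lem:S_i_tour} does make such a DP plausible (for instance, a state recording the last one or two retained vertices should suffice, since a transitive arc $(v_i,v_k)$ arises exactly when a retained $v_i$ has two retained out-neighbours inside $S_i$, and this would give the claimed $\mathcal{O}(n^3)$ bound), but as written the proposal is a plan rather than a proof; and the guarantee $v_1\notin S$ would then have to be enforced inside the DP or by a post-processing swap in the spirit of \Cref{lem-first}, not derived from the incorrect source-vertex argument.
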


We refer to the algorithm of Theorem \ref{algo} as ``Algo$\_$Singly$ \_ $Connected$(D)$''  and use it as a subroutine in Algorithm \ref{algo_1}. Next, we make a
two crucial observations below  for  our algorithm.



\begin{sloppypar}

\begin{observation}\label{obs2}
		Let $X$ be an optimal solution to \mvd on $ D $. For a pair of integers $ i ,j \in [n] $, if $ (v_i, v_j) $ is an transitive arc in $ D-X $ and no arc of the form $ (v_p, v_q) $ where $ i\leq p<q \leq  j $ is transitive in  $ D-X $, then $|\{v_{i+1}, v_{i+2}, \ldots, v_{j-1}  \} \setminus X| \leq 1$.	That means  all but at most one vertex of the set $\{v_{i+1}, v_{i+2}, \ldots, v_{j-1}  \}$	belongs to $X$.
	\end{observation}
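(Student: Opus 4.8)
The plan is to exploit the acyclic-tournament structure of the ``window'' $\{v_i,v_{i+1},\ldots,v_j\}$ guaranteed by \Cref{lem:S_i_tour}. The first step is to observe that, since $(v_i,v_j)\in A(D)$, we have $v_j\in N^+_D(v_i)$, so $v_j$ lies in the contiguous set $S_i=\{v_i,v_{i+1},\ldots,\ell(v_i)\}=N^+_D(v_i)\cup\{v_i\}$. Because $S_i$ is an interval of $\sigma$ and $i\le j\le \mathrm{ind}(\ell(v_i))$, the whole window $\{v_i,\ldots,v_j\}$ is contained in $S_i$, and by \Cref{lem:S_i_tour} the graph $D[S_i]$ is an acyclic tournament whose topological ordering is inherited from $\sigma$. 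Consequently, for every pair of indices $p,q$ with $i\le p<q\le j$ the forward arc $(v_p,v_q)$ is present in $D$. This is the structural fact that drives everything.

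The second step is a short contradiction argument that, notably, does not even need the optimality of $X$ (only the hypothesis on transitive arcs). Suppose, for contradiction, that at least two vertices of $\{v_{i+1},\ldots,v_{j-1}\}$ avoid $X$; let $v_a,v_b$ with $i<a<b<j$ be the two surviving vertices of smallest index. Since $(v_i,v_j)$ is an arc of $D-X$ we have $v_i,v_j\notin X$, and by construction $v_a,v_b\notin X$. By the first step, $(v_i,v_a)$ and $(v_a,v_b)$ are arcs of $D$, so the path $v_i,v_a,v_b$ survives in $D-X$. This path does not use the arc $(v_i,v_b)$, which is itself present in $D-X$ (both endpoints avoid $X$). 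Hence $(v_i,v_b)$ is a transitive arc of $D-X$. But $i\le i<b\le j$ with $b<j$, so $(v_i,v_b)$ is an arc of the form $(v_p,v_q)$ strictly inside the window, distinct from $(v_i,v_j)$, contradicting the hypothesis that no such inner arc is transitive. Therefore at most one vertex of $\{v_{i+1},\ldots,v_{j-1}\}$ lies outside $X$, i.e.\ $|\{v_{i+1},\ldots,v_{j-1}\}\setminus X|\le 1$, as claimed.

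The only genuinely delicate point — and the step I would be most careful about — is the bookkeeping that \emph{all} the arcs I use actually exist, which reduces entirely to verifying $\{v_i,\ldots,v_j\}\subseteq S_i$ (needing both the contiguity of $S_i$ and $v_j\in N^+_D(v_i)$) and then reading off the forward arcs from the acyclic-tournament property of $D[S_i]$. A secondary subtlety is interpreting the hypothesis ``no arc $(v_p,v_q)$ with $i\le p<q\le j$ is transitive'': since $(v_i,v_j)$ itself trivially satisfies $i\le i<j\le j$ and is transitive, this must be read as ``no \emph{other} such arc is transitive,'' and indeed the witness $(v_i,v_b)$ I produce is strictly nested ($b<j$), so it falls squarely under that exclusion. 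With these points pinned down the argument is otherwise routine.
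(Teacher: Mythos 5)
Your proof is correct and follows essentially the same route as the paper: both arguments take two surviving interior vertices, invoke Lemma \ref{lem:S_i_tour} to see that the window $\{v_i,\ldots,v_j\}$ induces an acyclic tournament, and exhibit a strictly nested transitive arc $(v_i,v_b)$ contradicting the hypothesis. Your version merely spells out the containment $\{v_i,\ldots,v_j\}\subseteq S_i$ and the reading of the hypothesis that the paper leaves implicit.
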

	
	\begin{proof}
		Let there be two vertices $v_{p}$ and $ v_{q} $ in $ \{v_{i+1}, v_{i+2}, \ldots, v_{j-1}  \} \setminus X $  with $p < q$. Then applying Lemma \ref{lem:S_i_tour}, $D[v_i, v_{p}, v_{q}]$ forms a tournament in $D-X$ and that indeed produce a transitive arc $ (v_i, v_{q}) $ in $D-X$, a contradiction. 
	\end{proof}

    \begin{observation}\label{obs3}
		Let $X$ be an optimal solution to \mvd on $ D $ and $v_p \notin X$ be a vertex that is not incident to any transitive arc in $D-X$. Then  for any pair of vertices   $ v_i, v_j \notin X $  with $i <j$ either of the following two conditions  can not occur.\\
		(1) $ p <i $ and $ (v_p,v_j) \in A(D)$, \\
		(2) $ j <p $ and $ (v_i,v_p) \in A(D)$.
		\end{observation}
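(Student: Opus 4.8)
The plan is to argue by contradiction in each case, reducing both conditions to the existence of a forbidden transitive arc incident to $v_p$ in $D-X$, which is ruled out by hypothesis. The engine driving the argument is \Cref{lem:S_i_tour}: for every index $r$ the set $S_r=N_D^+(v_r)\cup\{v_r\}$ is the contiguous block $\{v_r,v_{r+1},\ldots,\ell(v_r)\}$ of the ordering $\sigma$, and $D[S_r]$ is an acyclic tournament. Consequently every pair of vertices inside $S_r$ is joined by an arc, and since $D[S_r]$ is acyclic, that arc always points forward in $\sigma$.

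First I would treat condition (1): assume $p<i<j$ and $(v_p,v_j)\in A(D)$. Then $v_j\in N^+_D(v_p)\subseteq S_p$, and since $S_p$ is an interval of $\sigma$ containing both $v_p$ and $v_j$, the intermediate vertex $v_i$ also lies in $S_p$. Because $D[S_p]$ is an acyclic tournament, the forward arcs $(v_p,v_i)$ and $(v_i,v_j)$ both belong to $A(D)$, so $v_p\,v_i\,v_j$ is a directed path avoiding the arc $(v_p,v_j)$. As $v_p,v_i,v_j\notin X$, all three vertices survive in $D-X$, and since we delete only vertices the path survives as well; hence $(v_p,v_j)$ is a transitive arc of $D-X$ incident to $v_p$, contradicting the choice of $v_p$.

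Condition (2) is the mirror image. Assuming $i<j<p$ and $(v_i,v_p)\in A(D)$, the identical reasoning applied to $S_i$ puts $v_j$ into $S_i$ (it lies between $v_i$ and $v_p$ in $\sigma$), yields the forward arcs $(v_i,v_j)$ and $(v_j,v_p)$, and produces the path $v_i\,v_j\,v_p$ witnessing that $(v_i,v_p)$ is transitive in $D-X$ and incident to $v_p$ --- again a contradiction. Thus neither condition can occur.

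I expect the main point to check, rather than a genuine obstacle, to be the index bookkeeping: one must verify that the ``middle'' vertex ($v_i$ in case (1), $v_j$ in case (2)) truly lies inside the relevant block $S_r$, which is precisely where the contiguity clause $S_r=\{v_r,\ldots,\ell(v_r)\}$ of \Cref{lem:S_i_tour} is invoked, and that passing from $D$ to $D-X$ preserves the certifying path since only vertices (never arcs) are removed. Notably, neither the optimality of $X$ nor the bound $\ell$ enters the argument; the statement is purely structural.
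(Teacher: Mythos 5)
Your proof is correct and follows the same route as the paper: both arguments invoke \Cref{lem:S_i_tour} to turn each forbidden configuration into a transitive arc of $D-X$ incident to $v_p$, contradicting the hypothesis. The paper states this in a single line, whereas you spell out the index bookkeeping (membership of the middle vertex in the contiguous block $S_r$ and the forward orientation of arcs forced by the topological ordering), which is exactly the content the paper leaves implicit.
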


	\begin{proof}
		Let $ p <i $ and $ (v_p,v_j) \in A(D)$, then by  applying Lemma \ref{lem:S_i_tour} the arc  $ (v_p,v_j) \in A(D)$ become a transitive arc in $D-X$, a contradiction. In a similar manner, if $ j <p $ and $ (v_i,v_p) \in A(D)$, then by  applying Lemma \ref{lem:S_i_tour} the arc  $ (v_i,v_p) \in A(D)$ become a transitive arc in $D-X$, a contradiction. 
	\end{proof}

    \IncMargin{1em}
\begin{algorithm}[ht!]
	\SetKwData{Left}{left}\SetKwData{This}{this}\SetKwData{Up}{up}
	\SetKwFunction{Union}{Union}\SetKwFunction{FindCompress}{FindCompress}
	\SetKwInOut{Input}{Input}\SetKwInOut{Output}{Output}
	\Input{A connected acyclic local tournament $D$ with the topological ordering $\sigma= (v_1, \ldots, v_n)$ of $V(D)$.}
	\Output{A solution $S$ to \mzvd such that  $v_1, v_n \notin S$.}
	\BlankLine
	
	\If{$ D $ has no transitive arcs,}{\Return $ S=\emptyset$\;}
	\Else
	{
		$S^*=$ Algo$\_$Singly$ \_ $Connected$(D)$.
		
		\If{$ v_n \notin S^{*} $, }{\Return $ S=S^{*}$\;}
		
		\Else
		{
			Let $ q $ be the largest integer such that $ v_q \notin S^{*}$ and $ (v_q,v_n) \in A(D)$. \Return $ S=(S^{*} \setminus \{v_n\}) \cup \{v_q\}$\;
			
		}
		
	}
	
	\caption{ Algo$\_$Transitive$(D)$}\label{algo_1}
\end{algorithm}\DecMargin{1em}

Next we show that there is always  a solution to  \mzvd 
 avoiding both the vertices $v_1$ and $v_n$.
	
	\begin{lemma}\label{lem-first}
		Let $D$ be a connected acyclic local tournament and $\sigma= (v_1, \ldots, v_n)$ be its unique topological ordering. Then there is an optimal
		solution to \mzvd  on $D$ such that the solution contain no vertex from  $\{v_1,v_n\}$. Moreover we can find such a solution in  $\mathcal{O}(n^3)$ time.
	\end{lemma}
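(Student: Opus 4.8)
The plan is to prove the statement by showing that Algorithm~\ref{algo_1} (Algo$\_$Transitive$(D)$) returns such a solution, and then verifying correctness and running time. If $D$ is already transitive-free, the empty set works. Otherwise let $S^*$ be the output of Algo$\_$Singly$\_$Connected$(D)$; by \Cref{algo} this is a \emph{minimum}-size set with $D-S^*$ transitive-free and $v_1\notin S^*$. If $v_n\notin S^*$ we return $S^*$, which already avoids both $v_1$ and $v_n$. The entire difficulty is the remaining case $v_n\in S^*$, where I must exhibit an equally good solution that keeps $v_n$.

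The first thing I would record is that $v_n$ is a sink: since $\sigma$ is a topological ordering (\Cref{fact:one}), $N^+(v_n)=\emptyset$. Hence no directed path can pass through $v_n$, so re-inserting $v_n$ into a transitive-free induced subdigraph can create transitive arcs only of the form $(v_i,v_n)$, whereas deleting a vertex can never create a transitive arc (deletion removes paths, never adds them). In particular, the only transitive arcs of $D-(S^*\setminus\{v_n\})$ are those entering $v_n$.

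The crux is a counting statement about the in-neighbours of $v_n$ surviving in $D-S^*$. Because $D$ is a local tournament, it is in particular an in-tournament (\Cref{def:outinlocal}), so $D[N^-(v_n)]$ is a tournament; being a subdigraph of the acyclic $D$, it is a transitive tournament. I claim that \emph{exactly two} in-neighbours of $v_n$ lie outside $S^*$. At most two, since three surviving in-neighbours would induce a transitive triangle $x\to y\to z$ (with $(x,z)$ transitive via $y$) inside the transitive-free $D-S^*$, which is impossible. At least two, since if at most one survived, then by the previous paragraph $D-(S^*\setminus\{v_n\})$ would already be transitive-free, making $S^*\setminus\{v_n\}$ a strictly smaller solution and contradicting the minimality of $S^*$. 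Writing the two survivors as $v_a$ and $v_b$ with $a<b$, the maximality of $q$ in the algorithm forces $v_q=v_b$.

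Finally I would verify that $S=(S^*\setminus\{v_n\})\cup\{v_q\}$ works. Since $v_q=v_b\neq v_n$ and $v_q\notin S^*$, we get $|S|=|S^*|$, so $S$ is optimal; moreover $v_n\notin S$, and as $b\geq 2$ we have $v_1\neq v_q$, so $v_1\notin S$. For transitive-freeness of $D-S$: arcs not incident to $v_n$ lie in the induced subdigraph $(D-S^*)-v_q$, which is transitive-free, and the sink $v_n$ creates no new paths among them; the only surviving in-neighbour of $v_n$ is now $v_a$, so the single arc $(v_a,v_n)$ admits no alternative $v_a$--$v_n$ path and is not transitive. Hence $D-S$ is transitive-free. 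The running time is dominated by the single call to Algo$\_$Singly$\_$Connected, giving $\mathcal{O}(n^3)$, while locating $q$ costs only $\mathcal{O}(n+m)$. The main obstacle is precisely the exact-two count of surviving in-neighbours, and this is exactly where the in-tournament structure together with acyclicity (which forbids a transitive triangle on the in-neighbourhood) is essential.
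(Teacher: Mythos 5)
Your proof is correct and follows essentially the same route as the paper: run Algo$\_$Singly$\_$Connected to get an optimal solution avoiding $v_1$, and if $v_n$ is in it, exchange $v_n$ for its largest surviving in-neighbour $v_q$. Your "exactly two surviving in-neighbours" count is a welcome justification of the step the paper dismisses with "clearly we do not have any transitive arcs," but the underlying argument (in-tournament structure bounding the surviving in-neighbours of the sink $v_n$, then swapping) is the same.
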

	\begin{proof}
		Let $ S $ be an optimal solution to \mzvd on $ D $. Below we consider the two cases depending on whether $v_1 \in S$ or $v_n \in S$.
		
		\noindent \textbf{Case 1.} $ v_1 \in S $.\\ Consider the digraph $ D-(S\setminus \{v_1\}) $. Since $ S $ is an optimal solution, there must exist a transitive arc in $ D-(S\setminus \{v_1\}) $. Moreover $ v_1 $ is one of the endpoints of this transitive arc. But $ v_1 $ has at most two out neighbours in $ D-(S\setminus \{v_1\}) $. Suppose $ p $ is the smallest integer such that $ (v_1, v_p) \in A(D) $ and $ v_p \in D-S $. Clearly,	we do not have any transitive arcs in $ (S\setminus \{v_1\}) \cup \{v_p\}$. But, then $ (S\setminus \{v_1\}) \cup \{v_p\}$ is also an optimal solution.
		
		\noindent \textbf{Case 2.} $ v_n \in S $. \\ Consider the digraph $ D-(S\setminus \{v_n\}) $. Since $ S $ is the optimal solution then there must exist a transitive arc in $ D-(S\setminus \{v_n\}) $' Moreover  $ v_n $ is one of the endpoints of this transitive arc. But $ v_n $ has at most two in neighbours in $ D-(S\setminus \{v_n\}) $.  Suppose   $ q $ is the largest integer such that $ (v_q, v_n) \in A(D) $ and $ v_q \in D-S $. 	Clearly, we do not have any transitive arcs in $ (S\setminus \{v_n\}) \cup \{v_q\}$. But, then  $ (S\setminus \{v_n\}) \cup \{v_q\}$ is also an optimal solution.	
        
       According to \Cref{lem:tfsc}, a local tournament $D$ is transitive-free if, and only if, it is singly connected. Therefore, we can utilize the algorithm from  \Cref{algo}. It's certain that this algorithm will yield a result that excludes $v_1$, although it might include $v_n$. If this occur, we can employ the reasoning from Case 2 of the current proof to obtain an alternative solution that excludes both $v_1$ and $v_n$ (refer to Algorithm \ref{algo_1}). \end{proof}
	
\end{sloppypar}

\noindent \textbf{Overview of our algorithm for \Cref{theo-acyclic}.} Consider $O$ as an hypothetical solution we are looking for. Let $F$ represent the collection of all transitive arcs in $D-O$. It is evident that $|F| \leq \ell$. We will guess the arc set $F$. Notice that for each arc $(v_i, v_j) \in F$ all but at most one vertex of $\{v_{i+1}, v_{i+2}, \ldots, v_{j-1}\}$ must belongs to $O$. So in total at most $\ell$ vertices are there  from the set  $\{v_{i+1}, v_{i+2}, \ldots, v_{j-1}; (v_i, v_j) \in F\}$ that is not in solution. We can guess this set of vertices, denote it as $W_1$. Now we have to devise a solution excluding $W_1$ and the vertices that are incident to  $F$. From all potential selections of $F$ (which total $n^{\mathcal{O}(\ell)}$), we choose the solution with the smallest size. Below we present our algorithm more formally. 

There are two-phases of the algorithm, in the first phase we will reduce our problem to another problem referred as \mvdext (shortly, \mvde). In the second phase, we solve \mvde. 

\noindent \textbf{Phase 1.} Phase one consists of two steps; guess an arc set and then create an instance of \mvde.

\begin{description}
    \item[\textit{Step 1.1.}] Guess an arc subset $ F \subseteq A(D) $ of cardinality at most $ \ell $ and an vertex subset $X \subseteq V(D)$ of size at most $\ell$ such that every transitive arc in $D[W]$ belongs to $F$, where  $W=X \cup V_F$, $V_F$ denotes the vertices in $D$ which are incident  to $F$. Clearly $|W| \leq 3\ell$.

        \item[\textit{Step 1.2.}] Find a minimum-sized vertex set $S \subseteq V (D-W)$ such that  $D - S$ has no transitive arc in $ A(D) \setminus F $.
\end{description}

	
	
	
	

\noindent \textbf{Phase 2.} Phase two consists of procedure to handle the Step 1.2. of Phase 1. Towards that we define an auxiliary problem, referred as  \mvde, which formally defined as follows:

\medskip 

\defparrrprob{\mvde}{ A digraph $D$, $W \subseteq V(D)$, $|W| \leq 3\ell$, $F \subseteq A(D[W]) $, $ |F|\leq \ell$.}{A minimum-sized vertex subset  $S \subseteq V (D-W)$ such that  $D - S$ has no transitive arc in $ A(D) \smallsetminus F $.}

\medskip 

We solve  \mvde to the instance $(D, W, F)$ in  following four steps:

\begin{description}

    \item[\textit{Step 2.1. (Finding  necessary vertices)}:]  We construct a vertex subset $O_{N}$ as follows. 
    \begin{itemize}
        \item[-] For each arc  $ e_{ij}= (v_i, v_j) $ in $F$, we add all the vertices $ \{ v_t ~:~ i <t <j\} \smallsetminus  W$ to $ O_{N} $. 

        \item[-] For every pair of vertices    $ v_p, v_q \in W $  with $p <q$ if there is a vertex $v_t$ such that either (i) $ p <i $ and $ (v_p,v_j) \in A(D)$, or (ii) $ j <p $ and $ (v_i,v_p) \in A(D)$ holds then add $v_t$ to $O_N$.
    \end{itemize}

    \item[\textit{Step 2.2. (Reducing to $\mathcal{O}(\ell)$ many  instances of \mzvd)}:] We create a set of at most $ 3\ell+1 $ instances of  \mzvd in the following manner.
    
    \begin{itemize}
        \item[-] Consider the subgraph $D-O_{N}$

        \item[-]  find a set $\mathcal{P}$ of all possible pairs of integers $ i , j $ in $[n]$ with  $ i<j$ such that  $ \{v_{i+1}, v_{i+2}, \ldots, v_{j-1}\} \cap W = \emptyset$ and one of the following conditions hold 
	\\ (i) $v_i, v_j \in W$, 
	\\(ii) $ i=1, v_i \notin  W, v_j \in W$, 
	\\ (iii) $ j=n, v_i \in W, v_j \notin W $.

    \item[-]  For each element $(i, j) $ in $\mathcal{P}$ we create an instance $I_{i,j}$ for  \mzvd as follows. $I_{i,j} = \bigl\{v_t~:~i \leq t \leq j\bigr\} \smallsetminus O_{N}$. 
    \end{itemize}
    
  Given that $|W| \leq 3\ell$, the maximum size of $\mathcal{P}$ is $3\ell +1$. Consequently, we can generate up to $3\ell +1$ instances $I_{i,j}$ for \mzvd. For notational convenience, we denote these new instances as $ I_1, I_2, \ldots, I_{|\mathcal{P}|} $.

    \item[\textit{Step 2.3. (Solving \mzvd for each $ I_i $)}:] Now we solve \mzvd for each $I_i$, where $i \in [|\mathcal{P}|]$. For that we call Algo$\_$Transitive$(D[I_i])$. A formal description of our algorithm called as  Algo$\_$Transitive$(D)$ for an instance $ D $ is given in Algorithm \ref{algo_1} with its correctness in Lemma \ref{lem-first}.  Let $ O_i=$ Algo$\_$Transitive$(D[I_i])$ for each $ i $.

    \item[\textit{Step 2.4. (Return)}:] $ O_{N} \cup O_1 \cup \ldots \cup O_{|\mathcal{P}|}$.
\end{description}

\medskip

\noindent \textbf{Correctness.}
Let $O$ be a hypothetical solution. And $F$ be the set of transitive arcs in $D-O$. As our guess is exhaustive, we take the set $F$ as a guess. Now, since $F$ is transitive in $D-O$ no vertex incident to $F$ is part of $X$. Moreover, for each arc $(v_i, v_j) \in F$ with $i<j$, we must have exactly one vertex in the set $\{v_{i+1}, v_{i+2}, \ldots, v_{j-1}  \}$ that does not belong to $X$ (by \Cref{obs2}).  Let $W_1$ denote the set of those vertices in the set $\{v_{i+1}, v_{i+2}, \ldots, v_{j-1} : (v_i, v_j)  \}$ that are not part of $X$. Clearly $|W_1| \leq k$. As our guess is exhaustive, in Step 1.1 of Phase 1, we consider $W_1$ as $X$. The correctness of Step 2.1 follows from \Cref{obs2} and \Cref{obs3} as these two observations determine the vertices that must be part of solution. Now, it remains to show that the combination solution to the instances $I_i$ is valid. Towards, contradiction let $ O'=O_{N} \cup O_1 \cup \ldots \cup O_{|\mathcal{P}|}$ is not a solution. That means that there is a transitive arc in $D-O'$ that does not belong to $F$. Take such a transitive arc $(v_i, v_j)$ with the assumption that $j$ is minimum among all the indexes such that $(v_i, v_j)$ is transitive in $D-O'$. As it is transitive, there must be some vertex $v_t$ such that $i<t<j$ and $(v_i, v_t), (v_t,v_j) \in A(D)$. Now the vertex $v_t$ must belong to $W$, otherwise all three vertices of $\{v_i, v_t, v_j\}$ must be part of the same instance $I_i$ for some $i \in [|\mathclap{P}|$, a contradiction. 

We know that $(\{v_{i+1}, v_{i+2}, \ldots, v_{j-1}   \} \setminus O') \cap W \neq \phi$, but more importantly we show some stronger statement. We show that $\{v_{i+1}, v_{i+2}, \ldots, v_{j-1}   \} \setminus O' = \{v_t\}$. In contrast, if there are another vertex $v_{t'} \in \{v_{i+1}, v_{i+2}, \ldots, v_{j-1}   \} \setminus O'$ then $(v_i, v_{t'})$ is a transitive arc in $D-O'$, a contradiction to the assumption that $j$ is the least integer such that $(v_i,v_j)$ is transitive in $D-O'$.  

Now, as $v_t \in W$, either $v_t$ is adjacent to some transitive arc in $F$ or there is a pair of vertices $v_p$ and $v_q$ such that $p<t<q$ with $(v_p, v_t), (v_t,v_q), (v_p,v_q) \in A(D)$, i.e, $(v_p, v_q)$ is transitive in $D$, moreover $(v_p, v_q) \in F$. We argue for both the cases. First, consider that $v_t$ is part of some transitive arc where the other end point is $(v_t,v_{t'})$. Clearly, $v_{t'} \in W$ and $t' \notin \{i+1, \ldots, j-1\}$. If $(v_t, v_{t'}) \in A(D) \cap F$, then it implies $v_j \in W$ (for the right guess). That makes \Cref{obs3} applicable and adds $v_i$ to $O_N$. If $(v_{t'}, v_t) \in A(D) \cap F$, then it implies $v_i \in W$ (for the right guess). That makes \Cref{obs3} applicable and adds $v_j$ to $O_N$.  Now we consider the case where $v_t$ is not part of any transitive arc in $F$, but there is a pair of vertices $v_p$ and $v_q$ such that $p<t<q$ with $(v_p, v_t), (v_t,v_q), (v_p,v_q) \in A(D)$, i.e., $(v_p, v_q)$ is transitive in $D$, moreover $(v_p, v_q) \in F$.
Clearly $p<i<j<q$. As $(v_p,v_q) \in F$ and $v_t \in W$ then both $v_i$ and $v_j$ must be added to solution $O_N$. Hence $O'$ is a feasible solution. Now the optimality is correct as we are considering all the instances $I_i$ separately and size of the optimal solution for one does not affect to the other. 

\medskip

\noindent \textbf{Running Time.}
According to \Cref{fact:one}, the unique topological ordering of $D$ can be determined in $\mathcal{O}(n + m)$ time. The algorithm comprises two stages. In the first stage, we guess an arc set of size no greater than $\ell$ and generate an instance $(D, W, F)$ of \mvde. As the number of guesses is bounded by $m^{\ell}$, we form up to $m^{\ell}$ instances of \mvde in total. During the second stage, we solve each of these $m^{\ell}$ instances. Let's examine stage two. In Step 2.1, finding the required vertex tales $\mathcal{O}(\ell n)$ time. In Step 2.2, we form a maximum of $3\ell + 1$ instances of \mzvd. Since \mzvd can be solved in $\mathcal{O}(n^3)$ time (as per \Cref{lem-first}), the overall running time is $n^{\mathcal{O}(\ell)}$. \\  This concludes the proof of \Cref{theo-acyclic}. \qed

\section{W[1]-completeness on DAGs} \label{sec-hard}

This section shows that \zvd is {\sf W[1]}-complete when parameterized by the solution size ($k$) on DAGs.

\begin{theorem} \label{theo:W1-HARD-DAG}
	\zvd is {\sf W[1]}-complete on DAGs when parameterized by $k$.
\end{theorem}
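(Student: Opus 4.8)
\emph{Overall plan.} The statement has two halves, membership in \textsf{W[1]} and \woh{}ness, and I would prove them by going back and forth with the same anchor problem. For hardness I reduce \emph{from} \textsc{Vertex Multicut} on DAGs, which is \woh{} parameterized by the solution size \cite{multicutindag}; for membership I reduce \zvd{} \emph{to} \textsc{Vertex Multicut} on DAGs, which lies in \textsf{W[1]} (indeed it is \textsf{W[1]}-complete). Together these give \textsf{W[1]}-completeness.

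\emph{Hardness.} Let $(D,\{(s_i,t_i)\}_{i\in[p]},k)$ be a \textsc{Vertex Multicut} instance; discarding pairs whose endpoints are incomparable in a fixed topological order, I may assume each $s_i$ precedes $t_i$. I build a DAG $H$ in two steps. First I replace every arc $(u,v)$ of $D$ by $k+1$ internally disjoint length-two paths $u\to m^1_{uv}\to v,\dots,u\to m^{k+1}_{uv}\to v$. Every subdivision vertex then has in-degree and out-degree one, so the resulting $D'$ has \emph{no} transitive arc at all while reachability between original vertices is unchanged; and since no single $u$–$v$ connection can be severed by deleting midpoints within the budget $k$, a cheap solution is forced to delete original vertices (this is the device converting an arc-cut into a vertex-cut). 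Second, for each pair $i$ I add fresh vertices $\alpha_i,\beta_i$ and arcs $(\alpha_i,s_i)$, $(t_i,\beta_i)$ and the candidate arc $(\alpha_i,\beta_i)$. Because $\alpha_i$ is a source whose only out-neighbours are $s_i,\beta_i$ and $\beta_i$ is a sink whose only in-neighbours are $t_i,\alpha_i$, any $\alpha_i$–$\beta_i$ path other than the arc itself has the shape $\alpha_i\to s_i\rightsquigarrow t_i\to\beta_i$ and stays inside $D'$; hence $(\alpha_i,\beta_i)$ is transitive in $H-S$ exactly when $s_i$ reaches $t_i$ in $D'-S$. Thus $H-S$ is transitive-free iff $S$ separates every pair. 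The equivalence is completed by a normalization (in the spirit of the argument used for \Cref{theo-planardag}): deleting one midpoint, or one copy of $\alpha_i$/$\beta_i$, disconnects nothing, so such vertices can be removed from any solution without creating a transitive arc; replacing each terminal by $k+1$ twins makes $s_i,t_i$ undeletable too, matching the convention that terminals are protected.

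\emph{Membership.} For the converse I construct, from a DAG $D$, a \textsc{Vertex Multicut} instance in which, for each arc $(x,y)$, a terminal pair is connected precisely when $(x,y)$ is transitive, i.e.\ when some out-neighbour $z\neq y$ of $x$ reaches $y$. Two side conditions must be enforced: the constraint for $(x,y)$ has to be \emph{switched off} once $x$ or $y$ is deleted (the arc then no longer exists), and it must \emph{ignore the direct arc} $(x,y)$. Both can be installed with the standard vertex-splitting gadget: split each $v$ into $v^{\mathrm{in}}\to v^{\mathrm{out}}$ so that deleting $v$ becomes cutting the internal arc; route the pair of $(x,y)$ from $x^{\mathrm{in}}$ to $y^{\mathrm{out}}$ so that it breaks whenever $x$ or $y$ is deleted; and use per-neighbour in/out ports at $x$ and $y$ so that the pair's path is forbidden from using the direct arc while that arc stays available to the bulk for all other pairs. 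A size-$k$ \zvd{} solution then corresponds to a size-$k$ multicut and conversely, placing \zvd{} on DAGs in \textsf{W[1]}.

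\emph{Main obstacle.} The hardness direction is the clean part; its only subtlety is the normalization forcing the deletion set into $V(D)$, which the parallel-copy construction handles. The genuinely delicate step is the membership reduction: encoding, in a \emph{single} shared graph, the two conditions ``ignore the arc's own direct edge'' and ``drop the constraint when an endpoint is deleted'' simultaneously for every arc. Designing the port gadget so that no spurious connection is created for any pair and no required connection is destroyed is where I expect the real work to lie.
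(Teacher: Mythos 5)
Your hardness reduction follows the same blueprint as the paper's (reduce from \textsc{Vertex Multicut} on DAGs, subdivide arcs so that the bulk of the graph carries no transitive arcs, and attach one ``witness arc'' per terminal pair whose transitivity encodes $s_i\rightsquigarrow t_i$ connectivity), but it has a concrete gap at the witness gadget. With a \emph{single} pair $\alpha_i,\beta_i$ per terminal pair, a \zvd{} solution can simply delete $\alpha_i$ (or $\beta_i$): the arc $(\alpha_i,\beta_i)$ then vanishes from the graph and is vacuously non-transitive, so whenever the number of terminal pairs is at most $k$ the instance $(H,k)$ is trivially a \textsf{YES}-instance regardless of whether a multicut exists. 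Your normalization claim --- that a copy of $\alpha_i/\beta_i$ ``can be removed from any solution without creating a transitive arc'' --- is exactly false here: putting $\alpha_i$ back creates the transitive arc $(\alpha_i,\beta_i)$ whenever $s_i$ still reaches $t_i$. The fix is the one the paper uses: replace $\alpha_i,\beta_i$ by $k+1$ copies each and install all $(k+1)^2$ witness arcs between them, so that destroying every witness arc by endpoint deletion requires a vertex cover of $K_{k+1,k+1}$, i.e.\ $k+1$ deletions; then at least one witness arc survives with both endpoints and forces the path to be cut. (The paper additionally wires these copies directly to $N^+(s_i)$ and $N^-(t_i)$, bypassing $s_i,t_i$, which sidesteps your terminal-twinning step; if you keep the route $\alpha_i\to s_i\rightsquigarrow t_i\to\beta_i$ you must actually carry out the twinning and check it introduces no transitive arcs --- it does not in a DAG, but this needs an argument. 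Your use of $k+1$ parallel subdivision paths per arc, versus the paper's single subdivision vertex plus a push-to-endpoint normalization, is an unproblematic variation.)

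On membership, you take a genuinely different route from the paper: the paper disposes of it in one line (a solution can be verified by $|A(D)|$ reachability checks), whereas you propose a parameter-preserving reduction of \zvd{} to \textsc{Vertex Multicut} on DAGs. That is a legitimate strategy, but as written it rests on two unestablished points: that \textsc{Vertex Multicut} on DAGs is itself \emph{in} \textsf{W[1]} (the cited reference is invoked only for \woh ness), and the ``per-neighbour port'' gadget that simultaneously disables the constraint for $(x,y)$ when an endpoint is deleted and forbids the pair's path from using the direct arc --- which you yourself flag as the real work and do not construct. Until that gadget is specified and verified to create no spurious terminal connections, the membership half is a plan rather than a proof.
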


	It is easy to verify that \zvd is in {\sf W[1]} (by checking whether there are no path from $u$ to $v$ for each arc $(u,v) \in A(G)$).  Now, we give a polynomial-time parameter preserving reduction from {\sc Vertex Multicut} in DAGs (which is known to be \woh \cite{multicutindag}) parameterized by the solution size. Let $(G,\mathcal{T},k)$ be an instance of {\sc Vertex Multicut} where the graph $G$ is a DAG,  $\mathcal{T}=\{(s_i,t_i)|1\leq i \leq r\}$ is a set of terminal pairs where $(s_i,t_i) \in V(G)$ and  $k$ is a non-negative integer. The goal is to find a vertex subset $S \subseteq V(G)\setminus V(\mathcal{T})$ such that $ |S | \leq k $ and $G-S$ has no paths from $s_i$ to $t_i$ for each $ i \in [r]$. We can assume that  for each $ i \in [r]$, $(s_i,t_i) \notin A(G)$ and there is a path from $ s_i $ to $ t_i $ in $ G $.
	
	First, we describe the construction of an instance $(D,k)$
	of \zvd. Let $V (G) = \{v_1 , \ldots , v_n \}$. To construct the graph $D$, we apply the following procedure.
	
	\begin{enumerate}
		\item Initialize $V (D) = V (G)$, i.e. we add the vertices $v_i$'s in $V(D)$ for each $i \in [n]$. We denote this set of  vertices as $V_G$.
		
		\item For each pair of  terminals $(s_i,t_i)$ where $i \in [r]$, we add the arc $(s_i,t_i)$ in $A(D)$. 
		
		\item For each arc $e=(u,v) \in A(G)$ which is not an arc between a pair of terminals $(s_i,t_i)$, we add a new vertex $e_{uv}$ to $V(D)$ and add two arcs 		$(u,e_{uv})$ and $(e_{uv},v)$ to $A(D)$. Let $ V_E $ be the set of vertices that we add in this step. Notice that we are not adding $(u,v)$ arc to $D$.
		
		\item For every pair of terminals $(s_i,t_i)$ $i \in [r]$, we do the following. First we add $2(k+1)$ vertices $\{s_i^j ; j \in [k+1]\} \cup \{t_i^j ; j \in [k+1]\}$ to $V(D)$. Next we add the set of arcs $\{(s_i^j, t_i^{j'}); j,j' \in [k+1] \}$ to $A(D)$. Then for each pair of vertices $x,y$ in $V(D)$ where either $x \in \{s_i^j ; j \in [k+1]\}$ and $y \in N^{+}(s_i) $ or  $y \in \{t_i^j ; j \in [k+1]\}$ and ~$x \in N^{-}(t_i) $, we add the arc $(x,y)$ to $A(D)$. Let $ V_{st} $ be the set of vertices that we add in this step.
	\end{enumerate}
	
	\noindent This completes the construction of the graph $D$. Observe that the number of vertices in $D$ is polynomial in $n$ and $k$, and the construction can be done in polynomial time.  Since $G$ is a  DAG, $D$ also forms a DAG. First, we make the following simple observation crucial for our reduction.
	
	\begin{observation}\label{obs_tran_hard}
		An arc $ e $ in $A(D) $ is transitive if and only if $ e \in  \{(s_i,t_i) | i \in [r]\} \cup  \{(s_i^j, t_i^{j'}); i \in [r]$ and $j,j' \in [k+1] \}$.
	\end{observation}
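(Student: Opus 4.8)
The plan is to prove both implications by a case analysis on the type of arc in $A(D)$. By the construction, every arc of $D$ belongs to exactly one of four families: (i) the terminal shortcut arcs $(s_i,t_i)$ added in Step~2; (ii) the subdivision arcs $(u,e_{uv})$ and $(e_{uv},v)$ added in Step~3; (iii) the complete bipartite arcs $(s_i^j,t_i^{j'})$ between copies added in Step~4; and (iv) the copy-to-neighbour arcs incident to a copy and to an original/neighbour vertex, also added in Step~4. The right-hand side of the stated equivalence is exactly families (i) and (iii), so it suffices to show that every arc in (i) and (iii) is transitive while every arc in (ii) and (iv) is not.

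For the \emph{transitive} side (families (i) and (iii)) I would lift paths of $G$ into $D$. Since every arc of $G$ appears in $D$ either directly (if it is a terminal pair) or as a length-two subdivided path, any $s_i$-$t_i$ path in $G$ lifts to an $s_i$-$t_i$ path in $D$; because $(s_i,t_i)\notin A(G)$ this lifted path has length at least two and hence avoids the arc $(s_i,t_i)$, certifying that $(s_i,t_i)$ is transitive. For a bipartite arc $(s_i^j,t_i^{j'})$ I take a fixed $s_i$-$t_i$ path in $G$, read off its first vertex $y$ after $s_i$ and its last vertex $x$ before $t_i$, and use the copy-to-neighbour arcs out of $s_i^j$ and into $t_i^{j'}$ to prepend $s_i^j\to\cdots$ and append $\cdots\to t_i^{j'}$ to the lifted interior of that path. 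This yields an $s_i^j$-$t_i^{j'}$ walk that avoids the direct bipartite arc, and after deleting repeated vertices it becomes a genuine path; hence $(s_i^j,t_i^{j'})$ is transitive.

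For the \emph{non-transitive} side I would first record three structural facts, all immediate from the construction: $D$ is acyclic (it inherits a topological order from $G$, with each $e_{uv}$ inserted between $u$ and $v$ and the copies placed accordingly); every subdivision vertex $e_{uv}$ has a unique in-arc $(u,e_{uv})$ and a unique out-arc $(e_{uv},v)$ coming from Step~3; and every copy $s_i^j$ is a source while every copy $t_i^{j'}$ is a sink. The subdivision arcs then fall immediately: deleting $(u,e_{uv})$ leaves $e_{uv}$ with no in-arc, and deleting $(e_{uv},v)$ leaves $e_{uv}$ with no out-arc, so no replacement path can exist. The copy-to-neighbour arcs are handled by combining the source/sink structure with acyclicity: any alternative path leaving $s_i^j$ must begin with one of its \emph{other} out-arcs, which go either to a sink $t_i^{j'}$ (a dead end) or to another neighbour; one then argues that, because $s_i^j$ is a source sitting before $s_i$'s out-neighbourhood in the topological order, no such continuation can return to the designated head. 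The arcs into the $t$-copies are symmetric.

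The main obstacle is precisely family (iv). For (i)--(iii) the certificate or the obstruction is local, but ruling out an alternative path for a copy-to-neighbour arc requires a global acyclicity argument that uses exactly how Step~4 attaches the copies to the subdivided skeleton: the attachment must guarantee that everything reachable from a copy $s_i^j$ lies strictly downstream of $s_i$, so that the only in-arcs created at the shared neighbour come from sources one cannot re-enter. Verifying this bookkeeping — in particular, checking that the extra in-degree introduced by Step~4 never opens up a new route back to the head of the arc under consideration — is the crux, and the remainder of the observation follows routinely once it is established.
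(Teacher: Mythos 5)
The paper offers no proof of this observation---it is asserted as immediate from the construction---so your decomposition into four arc families, with lifted $G$-paths certifying transitivity for families (i) and (iii) and a source/sink-plus-acyclicity argument ruling it out for (ii) and (iv), is the right skeleton and is more explicit than anything in the paper. Two points, however, keep the proposal from being a complete proof. First, your treatment of family (ii) is wrong as written: you claim that deleting $(u,e_{uv})$ leaves $e_{uv}$ with no in-arc, but when $u=s_i$ is the source of a terminal pair, Step~4 adds the arcs $(s_i^j,e_{uv})$ for all $j\in[k+1]$, so $e_{uv}$ retains $k+1$ in-arcs (and symmetrically $e_{uv}$ gains extra out-arcs when $v=t_i$). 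The conclusion survives only because those extra in-arcs originate at source copies unreachable from $u$; that is, family (ii) needs exactly the reachability argument you reserve for family (iv), not the local degree count you give. Second, for family (iv) you state what must be shown (``no such continuation can return to the designated head'') and call it the crux, but you do not carry it out. The missing argument is: every out-arc of $s_i^j$ other than $(s_i^j,y)$ leads either to a sink $t_i^{j'}$ or to another out-neighbour of $s_i$, from which every reachable vertex lies strictly downstream of $N^+(s_i)$ in the topological order inherited from $G$; meanwhile every in-arc of $y$ other than $(s_i^j,y)$ originates either at $s_i$ itself or at another source copy, none of which is reachable from those positions. Without this, half of the observation is unproved.

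A further caution that completing your bookkeeping would expose: whether the observation is literally true depends on how $N^+(s_i)$ in Step~4 is read. If it denotes the out-neighbourhood in $D$ after Step~2, then it contains $t_i$, and $(s_i^j,t_i)$ becomes a family-(iv) arc that \emph{is} transitive (via the lifted $s_i$--$t_i$ path), contradicting the statement. The observation holds only under the reading where the copy-to-neighbour arcs attach to the subdivision vertices $e_{s_iw}$ and $e_{wt_i}$; you should fix that reading explicitly before arguing family (iv).
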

	
	\begin{clam}
		$(G,\mathcal{T},k)$  is a \yes-instance  if and only if $(D,k)$  is a \yes-instance.
	\end{clam}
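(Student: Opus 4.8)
The plan is to route everything through Observation~\ref{obs_tran_hard}: the only arcs of $D$ that can ever be transitive are the short-cut arcs $(s_i,t_i)$ and the gadget arcs $(s_i^j,t_i^{j'})$, so $D-S'$ is transitive-free exactly when each such arc that survives is non-transitive. The single structural fact I would isolate first is a description of the transitivity witnesses. Since each $s_i^j$ has no in-arc and each $t_i^{j'}$ has no out-arc in $D$, and every $t_i^{j''}$ is a sink, any path witnessing transitivity of $(s_i^j,t_i^{j'})$ must have the shape $s_i^j \to y \to \cdots \to x \to t_i^{j'}$ with $y\in N^+(s_i)$, $x\in N^-(t_i)$, and middle portion living entirely in the subdivided copy of $G$ carried by $V_G\cup V_E$. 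Contracting the subdivision vertices $e_{uv}$, this middle portion is precisely an $s_i$--$t_i$ path of $G$ (it is simple and avoids $s_i,t_i$ because $G$ is a DAG). The same contraction identifies the transitivity witnesses of $(s_i,t_i)$ with $s_i$--$t_i$ paths of $G$. Thus ``the arc is non-transitive in $D-S'$'' translates, for every terminal and gadget arc, into ``$G$ minus the projection of $S'$ has no $s_i$--$t_i$ path''.

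For the forward direction I would take a multicut $S\subseteq V(G)\setminus V(\mathcal T)$ with $|S|\le k$ and read it inside $V_G\subseteq V(D)$. Because $S$ avoids all terminals and all gadget vertices, every arc $(s_i,t_i)$ and every arc $(s_i^j,t_i^{j'})$ is still present in $D-S$; by the structural fact each of them is transitive only if $G-S$ retains an $s_i$--$t_i$ path, which is impossible since $S$ is a multicut. Every remaining arc of $D$ is non-transitive by Observation~\ref{obs_tran_hard}, so $D-S$ is transitive-free and $(D,k)$ is a \yes-instance.

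For the backward direction I would begin with a solution $S'\subseteq V(D)$, $|S'|\le k$, making $D-S'$ transitive-free, and first run a pigeonhole argument on each gadget. Deleting \emph{all} arcs $(s_i^j,t_i^{j'})$ by vertex deletions would require a vertex cover of the complete bipartite graph $K_{k+1,k+1}$ on $\{s_i^j\}\cup\{t_i^{j'}\}$, hence at least $k+1$ vertices; as $|S'|\le k$, for every $i$ some arc $(s_i^{j_0},t_i^{j'_0})$ survives in $D-S'$. Its non-transitivity, via the structural fact, forces $S'\cap(V_G\cup V_E)$ to destroy every $N^+(s_i)$-to-$N^-(t_i)$ path in the subdivided $G$, i.e.\ to separate $s_i$ from $t_i$ for all $i$ simultaneously. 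I would then project this cut back to $G$: keep the non-terminal $V_G$-vertices of $S'$, replace each subdivision vertex $e_{uv}\in S'$ by a non-terminal endpoint of $(u,v)$, and discard the gadget and terminal vertices of $S'$. This produces $S\subseteq V(G)\setminus V(\mathcal T)$ with $|S|\le|S'|\le k$ that still meets every $s_i$--$t_i$ path, so $(G,\mathcal T,k)$ is a \yes-instance.

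The main obstacle is exactly this last normalization step, namely showing that $S'$ may be assumed to avoid terminals and gadget vertices and that each subdivision deletion can be charged to a non-terminal endpoint without increasing the size past $k$ or reopening an $s_i$--$t_i$ path. The leverage comes from the fact that a surviving gadget arc only forbids paths strictly between $N^+(s_i)$ and $N^-(t_i)$, so the endpoints $s_i,t_i$ themselves are never the separating vertices for their own pair; I would use this to argue that deleting a terminal is never forced and can be exchanged for an incident subdivision vertex, and that the projection of $S'\cap V_E$ lands outside $V(\mathcal T)$. Carrying out this book-keeping cleanly—verifying in particular that no relevant subdivided arc has both endpoints in $V(\mathcal T)$—is the delicate part of the proof, and is where the $(k+1)$-fold gadget and the arc-subdivision construction genuinely earn their keep.
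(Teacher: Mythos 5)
Your overall architecture is the same as the paper's: the forward direction is identical, and in the backward direction both you and the authors use the pigeonhole argument on the $(k+1)\times(k+1)$ gadget to extract, for each $i$, a surviving arc $(s_i^{j_0},t_i^{j_0'})$ whose non-transitivity forces the deleted set to meet every $s_i$--$t_i$ path, and then project the $D$-solution back onto $V(G)$. (The paper performs the projection in the opposite order --- first replacing each $e_{uv}$ in the solution by an endpoint of $(u,v)$, then discarding all terminal vertices --- but the content is the same.)

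The gap is exactly where you locate it, and you have not closed it. You assert that ``deleting a terminal is never forced and can be exchanged for an incident subdivision vertex'' and that the projection of $S'\cap V_E$ always lands outside $V(\mathcal T)$, but you give no argument, and these claims do not hold in the generality you need. Nothing in the construction prevents a terminal of one pair from being an internal vertex of an $s_i$--$t_i$ path of a different pair, nor prevents $G$ from containing an arc with both endpoints in $V(\mathcal T)$. Concretely, if $s_{i'}$ (the source of pair $i'$) is the unique out-neighbour of $s_i$ in $G$, then $S'$ may contain $s_{i'}$, which kills every transitivity witness for pair $i$ in $D$ (the witnesses for $(s_i^{j},t_i^{j'})$ all pass through $s_{i'}$), yet $s_{i'}$ is forbidden in the multicut and there need not be a single non-terminal vertex to charge it to; similarly a subdivision vertex $e_{uv}$ with $u,v\in V(\mathcal T)$ admits no replacement of the kind you describe. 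So your map from $S'$ to $S$ is not well-defined and the bound $|S|\le|S'|$ is not established. Repairing this requires either a genuine normalization lemma showing an optimal solution of $(D,k)$ can avoid $V(\mathcal T)\cup V_E\cup V_{st}$ (this needs more than local exchanges), or working with the variant of \textsc{Vertex Multicut} in which terminals are deletable, or strengthening the construction (e.g.\ replicating terminals $k+1$ times as is done for the gadget vertices). The paper's own write-up replaces $e_{uv}$ by an arbitrary endpoint and then silently drops the terminals from $Y'$, so it skates over the same difficulty; but naming the difficulty is not resolving it, and as submitted your backward direction is incomplete.
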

	
	\begin{proof}
	In the forward direction, suppose that $(G,\mathcal{T},k)$ is a \yes-instance and $S \subseteq V(G)$ is a solution. We show that $ S $ is also a solution for $ (D,k) $, i.e., $D-S$ has no transitive arcs. Otherwise, let $e$ be a transitive arc in $D-S$. From Observation  \ref{obs_tran_hard}, $e$ must be from the set $   \{(s_i,t_i) | i \in [r]\} \cup  \{(s_i^j, t_i^{j'}); i \in [r], j,j' \in [k+1] \}$. If $ e= (s_i,t_i) $ for some $ i\in [r] $, then there is a path $P$ from $ s_i $ to $ t_i $ in $D-S-\{e\}$ with no internal vertices from $V_{st}$. If $ e= (s_i^j, t_i^{j'})$ for some $ i\in [r] $ and $ j,j' \in [k+1] $, then there is a path $ P'$ from $ s_i^j $ to $ t_i^{j'} $ containing the vertices both from $N^{-}(s_i)$ and $N^{+}(t_i)$, where $ P $ has no internal vertices from $ V_{st} $. Then we have a path from $ s_i$ to $ t_i $ in $D\setminus \Large\{V_{st} \setminus \{s_i^j,t_i^{j'} \}\Large\} $. But the existence of such paths $P$ and $P'$ from $s_i$ to $t_i$ in $G-S$ contradict the fact that $S$ is a solution to $(G,\mathcal{T},k)$. Hence $D-S$ has no transitive arcs.

	For the backward direction, suppose that $(D,k)$ is an \yes-instance and $Y \subseteq V(D)$ is a solution, i.e., $ D-Y $ has no transitive arcs. First, we modify $ Y $ to construct a solution $ Y' $ of size at most $|Y|$  for $(D,k)$ with the additional property that $ Y' $ has no vertices from $ V_E $. For the above purpose, we apply the following procedure: for any arc $e= (u,v) \in A(G)$ (i) if $ e_{uv} \in Y $ and any of $ u,v $, say $ v $ is not in $ Y $, then we discard $ e_{uv} $ and add $ v $, (ii) else if $ \{e_{uv},u,v\} \subseteq Y $,  then we discard $ e_{uv} $. We repeat this procedure until no vertices of $V_E $ are left in the solution. Let $ Y' $ be the set obtained following this procedure exhaustively. $Y'$ is also a solution, and $D-Y'$ has no transitive arcs. Let $ Z= Y'\setminus \{(s_i,t_i) | i \in [r]\} $. For each $ i\in [r] $, the graph $ G-Z $ can not have any path from $ s_i $ to $ t_i $ since there exists a pair of vertices $ p \in  \{s_i^j ; j \in [k+1]\} $ and $q \in \{t_i^j ; j \in [k+1]\}$ such that both $p,q  $ are not in $ Y' $. This, together with a path from $ s_i $ to $ t_i $ (and consequently the same path from $p$ to $q$), makes the arc $(p,q)$ transitive in $ D-Y' $, a contradiction. And hence $Z$ is a solution to the $(G,\mathcal{T},k)$ of size at most $k$.
	\end{proof}

This completes the proof of Theorem \ref{theo:W1-HARD-DAG}. \qed

\section{Kernelization on $\alpha$-bounded digraphs} \label{sec-kernel}

In this section, we design a kernel of  size $f(k,\ell, \alpha)$ for \lvd on $\alpha$ bounded in-tournaments. Due to  symmetry, an analogous result for out-tournaments can be obtained as well.

\subsection{Bounding the number of acyclic triangles}

In Lemma \ref{lem:triangle}, we prove that an in-tournament has no  transitive arcs  if and only if it does not contain any  acyclic triangles. This directly implies that any transitive arc of an in-tournament must be a part of some  acyclic triangle. In the first step towards our kernelization algorithm, we bound the number of acyclic triangles which gives us a set of arcs $A(\triangle)$ of bounded size that contains all possible transitive arcs. 
We now describe our procedures to obtain such a set $A(\triangle)$.

\begin{proc}
	\label{reduction:disjoint}
	 We  find a maximal set of vertex-disjoint acyclic triangles $\triangle_0$ in $D$ (greedily). If $|\triangle_0|>(\ell+k+1)$, then we conclude  $(D, k,\ell)$ to be a \no-instance.
\end{proc}

Note that for every acyclic triangle, $\{x,y,z\}$, the subgraph $D[{x,y,z}]$ includes a transitive arc. Therefore, if there are more than $(\ell+k+1)$ acyclic triangles in $D$ that are disjoint in terms of vertices, we must remove at least $(k+1)$ vertices to ensure that the remaining graph contains at most $\ell$ transitive arcs. Hence the correctness of procedure \ref{reduction:disjoint} follows.

For a vertex $v\in V(\triangle_0)$, $T_v=\{ \text{any triangle } \triangledown: V(\triangledown)\cap V(\triangle_0)=\{v\}\}$ and for an arc $e=(x,y)$, $T_e=\{ \text{any triangle } \triangledown: V(\triangledown)\cap V(\triangle_0)=\{x,y\}\}$ i.e., $T_v$ contains all those triangles that have exactly the vertex $v$ from $\triangle_0$ and $T_{(x,y)}$ contains all those triangles that have exactly the vertices $x$ and $y$ from $\triangle_0$.

\begin{proc}\label{reduction:one vertex}
     If for some  vertex $v$ in $V(\triangle_0)$, $|T_v| > (k+\ell+1) $, we arbitrarily delete a triangle from $T_v$.
\end{proc}

\begin{proc}\label{reduction:one edge}
If for some  arc $e$ in $D[V(\triangle_0)]$, $|T_e| > (k+\ell+1) $, we arbitrarily delete a triangle from $T_v$.
\end{proc}

Let $\triangle_1$ and $\triangle_2$ respectively denote the triangles intersecting at exactly one and two vertices with $V(\triangle_0)$.    
Following the applications of procedures \ref{reduction:one vertex} and \ref{reduction:one edge} we can show that $|\triangle_1|\leq (3(k+\ell+1)^2$ and $ |\triangle_2|\leq 9(k+\ell+1)^3)$. Let $\triangle = \triangle_0 \cup \triangle_1 \cup \triangle_2$. So we have $|\triangle| = \mathcal{O} ((k+\ell+1)^3)$.

\begin{clam}\label{claim:edge}
Let $S$ be a solution to $(D,k,\ell)$. Any transitive arc in $D-S$ belongs to $A(D[V(\triangle)])$.
\end{clam}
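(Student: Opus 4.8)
The plan is to prove the slightly stronger statement that every acyclic triangle of $D-S$ is in fact retained in the collection $\triangle$; since $A(D[V(\triangle)])$ contains all arcs of every retained triangle, this immediately yields the claim. First I would reduce transitive arcs to triangles: if $(x,y)$ is transitive in $D-S$, then because $D-S$ is an induced subdigraph of the in-tournament $D$ (hence itself an in-tournament), Lemma \ref{lem:triangle} produces a middle vertex $z$ with $x \to z \to y$, so $\triangledown := \{x,y,z\}$ is an acyclic triangle of $D-S$ whose transitive arc is $(x,y)$. In particular $x,y,z \notin S$. By maximality of the vertex-disjoint family $\triangle_0$ from Procedure \ref{reduction:disjoint}, $\triangledown$ cannot be disjoint from $V(\triangle_0)$, so $1 \le |V(\triangledown)\cap V(\triangle_0)| \le 3$.

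If $\triangledown \in \triangle$ then $V(\triangledown) \subseteq V(\triangle)$ and we are done immediately, so the entire content is to rule out that $\triangledown$ was discarded by Procedure \ref{reduction:one vertex} or \ref{reduction:one edge}. Suppose it was. Then $\triangledown \in T_v$ for some $v \in V(\triangle_0)$ (the one-vertex case), or $\triangledown \in T_e$ for some arc $e=(a,b)$ of $D[V(\triangle_0)]$ (the two-vertex case); the three-vertex case would place $\triangledown$ in $\triangle_0 \subseteq \triangle$ and is excluded. The key observation is that the anchor lies on $\triangledown$: in the one-vertex case $v \in V(\triangledown)$, and in the two-vertex case $a,b \in V(\triangledown)$. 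Since $V(\triangledown)\cap S = \emptyset$, the anchor vertices avoid $S$. Moreover, discarding $\triangledown$ certifies that its class $T_v$ (resp.\ $T_e$) retained $k+\ell+1$ triangles, all sharing the anchor $v$ (resp.\ $a,b$).

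From here I would run a counting argument contradicting that $D-S$ has at most $\ell$ transitive arcs. The retained triangles all contain the anchor, which is not in $S$, so each such triangle can be destroyed in $D-S$ only through one of its \emph{non-anchor} vertices. If these $k+\ell+1$ triangles are pairwise disjoint away from the anchor, then $S$ (of size at most $k$) meets at most $k$ of them, leaving at least $\ell+1$ intact in $D-S$; since the transitive arc of each surviving triangle $\{v,a_i,b_i\}$ has an endpoint in its private pair $\{a_i,b_i\}$, these $\ell+1$ transitive arcs are pairwise distinct, so $D-S$ has at least $\ell+1$ transitive arcs, a contradiction. Hence $\triangledown$ is retained, i.e.\ $\triangledown \in \triangle$, and therefore $(x,y) \in A(D[V(\triangle)])$. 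The two anchor cases ($T_v$ and $T_e$) are handled identically.

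The delicate point, and the step I expect to be the real obstacle, is exactly the disjointness used in the last paragraph. As literally phrased, Procedures \ref{reduction:one vertex} and \ref{reduction:one edge} only cap the \emph{total} number of triangles in $T_v$ and $T_e$, and overlapping retained triangles could all be destroyed by a single vertex of $S$, which would break the count. To make the argument go through I would read these procedures as retaining a maximal subfamily of $T_v$ (resp.\ $T_e$) that is vertex-disjoint outside the anchor, declaring a \no-instance (or keeping exactly $k+\ell+1$ such disjoint triangles) once this disjoint subfamily exceeds $k+\ell+1$; the global bounds $|\triangle_1| = \mathcal{O}((k+\ell+1)^2)$ and $|\triangle_2| = \mathcal{O}((k+\ell+1)^3)$ are unaffected. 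With this reading, the ``anchor not in $S$'' observation together with the private non-anchor vertices delivers the required $\ell+1$ distinct surviving transitive arcs, completing the contradiction.
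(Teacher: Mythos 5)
Your proposal follows the same route as the paper's proof: Lemma \ref{lem:triangle} turns the transitive arc into an acyclic triangle $\triangledown$ of $D-S$, maximality of $\triangle_0$ forces $\triangledown$ to meet $V(\triangle_0)$, and the case analysis on $|V(\triangledown)\cap V(\triangle_0)|$ reduces everything to a counting argument over the class $T_v$ or $T_e$ from which $\triangledown$ was discarded. Your diagnosis of the crux is also accurate: Procedures \ref{reduction:one vertex} and \ref{reduction:one edge} only cap the cardinality of each class, and without disjointness outside the anchor a single non-anchor vertex of $S$ can destroy all retained triangles of a class; this is precisely the step the paper's own proof glosses over when it asserts that $S$ must contain the anchor.

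The repair you propose does not close the gap, for three reasons. First, if you retain only a maximal anchor-disjoint subfamily of $T_v$, a triangle $\triangledown$ can now be discarded not because the cap was reached but because it shares a non-anchor vertex with an already retained triangle; such a $\triangledown$ may survive in $D-S$ with its third vertex outside $V(\triangle)$ and its transitive arc incident to that vertex, and your flower argument then produces no contradiction, so the claim remains unproven for exactly these arcs. (If instead you keep all of $T_v$ whenever the disjoint subfamily is small, the $\mathcal{O}((k+\ell+1)^2)$ bound on $|\triangle_1|$ fails, since $|T_v|$ can be unbounded while every anchor-disjoint subfamily has size two.) Second, declaring a \no-instance when the disjoint subfamily at $v$ exceeds $k+\ell+1$ is incorrect: a large flower at $v$ only forces $v$ into every solution, and deleting $v$ may still yield a \yes-instance; only the parenthetical alternative (keep $k+\ell+1$ disjoint members) is sound. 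Third, the $T_e$ case is not ``handled identically'': for a triangle $\{a,b,c_i\}$ anchored at the arc between $a$ and $b$, the transitive arc can be that anchor arc itself (orientation $a \to c_i \to b$ together with $a\to b$), so all $\ell+1$ surviving triangles of the flower may certify the \emph{same} transitive arc and you do not obtain $\ell+1$ distinct ones. A correct proof needs either a strengthened set of procedures whose discarded triangles are all provably covered, or an argument exploiting the in-tournament structure beyond pure counting.
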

\begin{proof}
We prove the above claim using contradiction. Let $e=(x,y)$ be an arc which is transitive in $D-S$ but does not belong to 
$A(D[V(\triangle)])$. 
From Lemma \ref{lem:triangle} any transitive arc of $D-S$ (in-tournament) must be a part of some  acyclic triangle. So $e$ is also contained in some acyclic triangle of $D-S$, say  $\triangledown^*$ with $V(\triangledown^*)= \{x,y,z\}$. Clearly $\triangledown^*\notin \triangle_0$. Also from the maximality property of $\triangle_1$, $V(\triangledown^*)\cap V(\triangle_0)\neq \emptyset$. 
Hence $|V(\triangledown^*)\cap V(\triangle_0)|$ is either 1 or 2. If $x$ and $y$, both are not in $V(\triangle_0)$, then there are at least $(k+\ell+1)$ acyclic triangles exactly intersecting at $z$.  But  any solution $S$ of size at most $k$ must contain $z$, a contradiction to the assumption that $\triangledown^*$ is an acyclic triangle in $D-S$. Suppose exactly one of the vertices in $\{x,y\}$, say $x$ is not in $V(\triangle_0)$.  Since $(x,y)$ is a transitive arc, it must be part of some triangle $\triangledown^*$ in $D-S$ whose other vertex is $z$. We have the following 2 cases. $z$ is either in $V(\triangle_0)$ or not in $V(\triangle_0)$. In the latter case, there are at least $(k+\ell+1)$ triangles exactly intersecting at $y$. Hence $y$ must be in $S$, a contradiction. In the first case, $y$ and$z$ both are in $V(\triangle_0)$, but then there are at least $(k+\ell+1)$ triangles exactly containing both $y$ and $z$. But in this case, either $y$ or $z$ belong to the solution $S$, a contradiction.\end{proof}

\subsection{Designing the final kernel}

We use the following \Cref{thm1} crucially in the design of our kernel.

\begin{definition} [Cut-preserving Set~\cite{DBLP:conf/innovations/LochetLM0SZ20}]
	{\em For a digraph $D$, a positive integer $k$ and $x, y \in V (G)$, we say that $\mathcal{Z} \subseteq V(D)$ is a $k$-cut-preserving set for $(x, y)$ in $D$, if the following properties hold. 
	Let $L = V (D) \setminus \mathcal{Z}$. For any path $P$ from $x$ to $y$ in $G$, there exist paths $P_1, P_2, \ldots , P_q$ and a set of lists, $L_1, . . . , L_q$ each containing at  most $k$ paths with the following properties:
	
	\begin{enumerate}
		\item For every $i \in [q]$, $P_i$ is a subpath of $P$ from $s_i$ to $t_i$.
		\item The $P_i$s are internally disjoint and contain all vertices in $P \cap L$ as inner vertices.
		\item For every $i \in [q]$, the list $L_i$ is a set of $k$ vertex disjoint paths from $s_i$ to $t_i$ using only vertices of $\mathcal{Z}$.
		\item Replacing in $P$ each $P_i$ by one of the paths in $L_i$ yields a path of $\mathcal{Z}$ from $x$ to $y$.
	\end{enumerate}}
\end{definition}

Lochet et al.~\cite{DBLP:conf/innovations/LochetLM0SZ20} have shown the following results for an $\alpha$-bounded digraph.

\begin{proposition}[\cite{DBLP:conf/innovations/LochetLM0SZ20}]\label{thm1}
	Let $D$ be an $\alpha$-bounded acyclic digraph and $x, y \in V(D)$ such that any $(x, y)$-vertex-cut in $D$ has size at least $k+ 1$. Then one can, in polynomial-time, compute a $k$-cut-preserving set for 
	$(x, y)$ in $D$ of size at most ${(22k^5)}^{4^\alpha}$. Moreover in polynomial-time one can obtain $k+1$ vertex disjoint 
	paths from $x$ to~$y$ where each path has length at most $2\alpha +1$.
\end{proposition}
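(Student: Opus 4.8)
The plan is to prove the two assertions in order, since the short vertex-disjoint paths of the second assertion are the basic building blocks for the cut-preserving set of the first. For bookkeeping I would write $g(\alpha,k)$ for the target size bound and establish $g(\alpha,k)\le (22k^5)^{4^\alpha}$ by induction on $\alpha$, so the short-path statement is proved first and unconditionally.

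\emph{Short vertex-disjoint paths.} Since every $(x,y)$-vertex-cut has size at least $k+1$, Menger's theorem yields $k+1$ internally vertex-disjoint $(x,y)$-paths $P_1,\dots,P_{k+1}$. I would then shorten each $P_i$ \emph{independently}: whenever $P_i$ contains a vertex $u$ preceding a vertex $w$ on $P_i$ with $(u,w)\in A(D)$ and $w$ not the immediate successor of $u$, I replace the $u$-$w$ subpath by the single arc $(u,w)$. This operation only deletes vertices of $P_i$, so vertex-disjointness from the other paths is preserved, and it terminates. In the resulting path, acyclicity of $D$ rules out any backward chord and the shortening rules out any forward chord, so $P_i$ is an induced directed path; its even-indexed vertices $v_0,v_2,v_4,\dots$ are then pairwise non-adjacent in the underlying undirected graph and hence form an independent set. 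As $D$ is $\alpha$-bounded this set has at most $\alpha$ vertices, which forces the length of $P_i$ to be at most $2\alpha-1\le 2\alpha+1$. All steps are clearly polynomial, proving the second assertion.

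\emph{Cut-preserving set.} For the first assertion I would construct $\mathcal{Z}$ recursively on $\alpha$, using the short paths above as the ``level-$0$'' skeleton. At the top level, compute the $k+1$ disjoint short paths and place their vertex set $B$ (of size at most $(k+1)(2\alpha+1)$) into $\mathcal{Z}$; this already certifies one family of disjoint short bypasses. The heart of the construction is to additionally record, for the possible entry/exit vertices of an arbitrary path $P$ into $\mathcal{Z}$, a \emph{representative} collection of $k$ vertex-disjoint short bypasses, so that properties (3)--(4) of the cut-preserving definition can be met. To keep this finite I would recurse on subinstances obtained by deleting a maximal independent set (together with a dominating structure), arguing that each subinstance is $\alpha'$-bounded with $\alpha'\le\alpha-1$; this is what lets the $\alpha$-boundedness be consumed level by level. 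Solving the resulting recurrence $g(\alpha,k)\le \mathrm{poly}(k)\cdot g(\alpha-1,\mathrm{poly}(k))$ with bounded branching yields the claimed bound $(22k^5)^{4^\alpha}$, where $22k^5$ is the per-level blow-up and $4^\alpha$ accounts for the branching across the recursion depth.

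The hard part will be twofold. First, I must guarantee that the independence number genuinely drops in every recursive subinstance; merely deleting the skeleton $B$ does not reduce $\alpha$, so the recursion has to be organized around a maximal independent set or an analogous dominating set, and verifying that each child instance is $(\alpha-1)$-bounded is exactly where $\alpha$-boundedness is really spent. Second, and most delicate, is checking the cut-preserving property \emph{globally}: for an arbitrary $(x,y)$-path $P$ I must exhibit the segmentation $P_1,\dots,P_q$ together with lists $L_1,\dots,L_q$ of $k$ disjoint bypasses in $\mathcal{Z}$ such that substituting one bypass per segment yields a genuine $x$-$y$ path lying entirely in $\mathcal{Z}$. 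Ensuring that the substituted bypasses remain mutually vertex-disjoint across all segments simultaneously, rather than merely segment by segment, is the crux, and it is precisely what forces the representative families, and hence the size bound, to depend exponentially on $\alpha$.
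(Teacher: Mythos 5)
This proposition is not proved in the paper at all: it is imported verbatim from Lochet et al.\ \cite{DBLP:conf/innovations/LochetLM0SZ20}, so there is no in-paper argument to compare yours against. Judged on its own terms, your treatment of the \emph{second} assertion is correct and complete: Menger's theorem gives the $k+1$ internally disjoint paths, independent shortcutting preserves disjointness and (with acyclicity killing backward chords) yields induced paths, and the even-indexed vertices of an induced path form an independent set, bounding the length by $2\alpha-1\le 2\alpha+1$. This is the standard argument and I have no objection to it.

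The first assertion, however, is not proved; it is a plan whose two decisive steps you yourself flag as ``the hard part'' and then do not carry out. Concretely: (i) the recursion you propose does not work as stated, because deleting a maximal independent set from an $\alpha$-bounded digraph need not reduce the independence number --- in the underlying graph $K_{\alpha,\alpha}$ one side is a maximal independent set and the remaining graph still has independence number $\alpha$ --- so the claim that each child instance is $(\alpha-1)$-bounded, which is the engine driving the $4^\alpha$ exponent, is unsubstantiated and false for the construction you describe; and (ii) the global consistency of the substitution in properties (3)--(4) of the cut-preserving definition (that one can pick one bypass per segment so that the concatenation is a single vertex-disjoint $x$--$y$ walk that is actually a path) is precisely the content of the theorem, and cannot be obtained by ``solving the recurrence'' --- the recurrence only controls the \emph{size} of $\mathcal{Z}$, not the combinatorial property it must satisfy. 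As it stands the write-up establishes the short-paths claim but leaves the existence of the cut-preserving set unproven; you would need to either reproduce the actual layered/representative-family construction of \cite{DBLP:conf/innovations/LochetLM0SZ20} or simply cite the result, as the paper does.
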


\begin{proc}[\cite{DBLP:conf/innovations/LochetLM0SZ20}]\label{reduction:cut set}
       For any arc $(x,y)\in A(D[V(\triangle)])$, we compute a $(k+1)$-cut-preserving set  $\mathcal{C}_{(x,y)}$ (by \Cref{thm1}). 
\end{proc}

For convenience, we use $A(\triangle)$ to denote $A(D[V(\triangle)])$ just for this section.  We claim
$V(\triangle)\cup \bigcup_{(x,y)\in A(\triangle)}\mathcal{C}_{(x,y)}\}$ is the desired kernel.

\begin{clam}\label{claim:kernel}
$D[X]$ where $X=V(\triangle)\cup \{\bigcup_{(x,y)\in A(\triangle)}\mathcal{C}_{(x,y)}\}$ is a kernel. 
\end{clam}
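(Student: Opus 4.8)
The plan is to prove the two kernelization requirements separately: that $|X|$ is polynomially bounded in $k+\ell+\alpha$, and that $(D,k,\ell)$ is a \yes-instance if and only if $(D[X],k,\ell)$ is a \yes-instance. The size bound is the easy part and I would dispatch it first. By the analysis preceding Claim~\ref{claim:edge} we already have $|V(\triangle)| = \mathcal{O}((k+\ell+1)^3)$, hence $|A(\triangle)| = \mathcal{O}((k+\ell+1)^6)$. For each of these arcs, Procedure~\ref{reduction:cut set} adds a $(k+1)$-cut-preserving set $\mathcal{C}_{(x,y)}$ of size at most $(22(k+1)^5)^{4^\alpha}$ by \Cref{thm1} (in the case where the minimum $(x,y)$-cut has size at least $k+2$; otherwise a cut of at most $k+1$ vertices is retained instead). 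Summing over all arcs gives $|X| \leq \mathcal{O}((k+\ell+1)^6)\cdot (22(k+1)^5)^{4^\alpha} = f(k,\ell,\alpha)$, which is the required bound.

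For the equivalence, the forward direction is immediate: $D[X]$ is an induced subdigraph of $D$, so any solution $S$ for $D$ restricted to $X$, i.e. $S\cap X$, still leaves at most $\ell$ transitive arcs in $D[X]-(S\cap X)$, since deleting vertices outside $X$ can only destroy paths. The substantive direction is the backward one, and this is where I would spend the effort. Suppose $S'\subseteq X$ with $|S'|\leq k$ is a solution for $D[X]$; I must lift it to a solution $S$ for the whole of $D$ of size at most $k$. The natural choice is $S = S'$ itself, and the goal is to show $D-S'$ has at most $\ell$ transitive arcs. By Claim~\ref{claim:edge} (applied with $S'$ as the candidate solution, once we check it qualifies), every transitive arc of $D-S'$ lies in $A(\triangle)\subseteq A(D[X])$, so it suffices to argue that no arc $(x,y)\in A(\triangle)$ that is non-transitive in $D[X]-S'$ becomes transitive in the larger graph $D-S'$.

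The main obstacle, and the crux of the whole kernel, is precisely this last step: controlling the interaction between a small deletion set and the cut-preserving sets. The key property of a $(k+1)$-cut-preserving set $\mathcal{C}_{(x,y)}$ is that $S'$ (of size $\leq k$) cannot separate $x$ from $y$ inside $\mathcal{C}_{(x,y)}$ whenever it fails to separate them in $D$: by the cut-preserving definition, any $x$--$y$ path in $D-S'$ can be rerouted, subpath by subpath using the lists $L_i$ of $k+1$ vertex-disjoint paths through $\mathcal{Z}=\mathcal{C}_{(x,y)}$, into an $x$--$y$ path lying entirely in $\mathcal{C}_{(x,y)}\setminus S'$ (since $|S'|\leq k$, at least one path in each list survives). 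Hence if $(x,y)$ is transitive in $D-S'$ — witnessed by some $x$--$y$ path avoiding the arc — the same pair $x,y$ remains joined by a path inside $X\setminus S'$, making $(x,y)$ transitive already in $D[X]-S'$. Contrapositively, any arc non-transitive in $D[X]-S'$ stays non-transitive in $D-S'$, so the transitive-arc count does not increase. The one technical subtlety I would flag is the case distinction in \Cref{thm1} (minimum cut $\geq k+2$ versus $\leq k+1$): in the latter case the cut-preserving machinery is replaced by simply keeping the small cut, and I would verify that deleting that retained cut still kills all $x$--$y$ connectivity, so the equivalence is maintained. Establishing that the rerouting preserves the transitivity witness for \emph{every} arc of $A(\triangle)$ simultaneously, under a single deletion set $S'$, is the heart of the proof.
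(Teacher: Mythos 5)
Your proposal is correct and follows essentially the same route as the paper: the same size accounting, the same one-line forward direction (a witness path in $D[X]-S$ is also a path in $D-S$), and the same backward direction, namely rerouting a transitivity witness in $D-S$ into $X$ using the $(k+1)$ vertex-disjoint paths supplied by the cut-preserving sets, at least one of which must avoid the $k$-vertex deletion set. The only organizational difference is that you invoke the cut-preserving property of $\mathcal{C}_{(x,y)}$ for the arc $(x,y)$ itself and reroute the whole witness path in one shot, whereas the paper reroutes each maximal excursion of the path outside $X$ separately via $\mathcal{C}_{(y,z)}$ for the two bracketing vertices $y,z$ (a variant that implicitly needs $(y,z)\in A(\triangle)$, so your formulation is if anything slightly cleaner); the residual technicalities you flag --- the small-cut case of \Cref{thm1} and the applicability of Claim~\ref{claim:edge} to a solution of the kernel instance --- are glossed over in the paper as well.
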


\begin{proof}
Let $(D,k,\ell)$  be a \yes-instance and $S$ be a solution. We denote the set of  transitive arcs in $D-S$ by $T_r(S)$. Let $e=(a,b)\notin T_r(S)$ be a transitive arc in $D[X-S]$. Then there is a path from $a$ to $b$ in $D[X-S]$. But this path is also present in $D-S$; hence, $e$ remains a transitive arc in $D-S$, which is a contradiction. This implies that $X\cap S$ is a set of at most $k$ vertices such that $D[X-S]$ has at most $\ell$ transitive arcs, i.e., $(D[X],k,\ell)$ is a \yes-instance.

Now let $(D[X],k,\ell)$  be a \yes-instance and 
$S$ is a corresponding solution. We denote the set of  transitive arcs in $D[X]-S$ by $T_r(X(S))$.
 Next, we show that $X$ is a  solution to $(D,k,\ell)$. From Claim \ref{claim:edge}, any transitive arc in any of the subgraphs of $D$ (and hence also in $D-S$) must be present in $A(\triangle)$ and hence in $X$.
Suppose $S$ is not a solution to $(D,k,\ell)$. Then there exists a transitive arc $e=(a,b)\in A(\triangle)\setminus T_r(X(S))$ in $D-S$. So there is path $P$ from $a$ to $b$ in $D-S-e$. If $V(P)\subseteq X$, then $e$ is also transitive in $D[X]-S$, a contradiction. 
Otherwise, let $P'$ be a maximal subpath of $P$ such that $V(P')\cap X=\emptyset$. Also, let $y$ and $z$ be the two adjacent vertices to the endpoints of $P'$ in $P$. Then in $\mathcal{C}_{(y,z)}$, there are at least $(k+1)$ vertex disjoint paths from $y$ to $z$. From the pigeon-hole principle, at least one of these paths is disjoint from $S$. Let $P''$ be such a path (which is completely contained in $D[\mathcal{C}_{(x,y)}]$ and hence also in $D[X]$). We can replace the subpath $P'$ by $P''$ in $P$. Notice that with every such replacement, the number of vertices from $P$ that is not in $X$ decreases. After exhaustively applying this procedure, we will have the path $P$ completely contained in $X$. But then, the arc $e=(a,b)$ is also transitive in $D[X]-S$ ($e\in T_r(X(S))$), a contradiction. This proves that no transitive arcs in $D-S$ are contained in $T_r(X(S))$. In other words, $S$ is a solution to  $(D,k,\ell)$ making it a \yes-instance.
\end{proof}
But $|X|\leq |V(\triangle)|+|A(\triangle)|. f(k+1,\alpha)\leq \mathcal{O} ((k+\ell+1)^3)+{(22(k+1)^5)}^{4^\alpha} $. So we have the following theorem.

\begin{theorem}
\lvd on in-tournaments, as well as out-tournaments, admits a kernel of size $\mathcal{O} ((k+\ell+1)^3)+{(22(k+1)^5)}^{4^\alpha}$ where $\alpha$ is the  size of a maximum independent set of the input digraph.
\end{theorem}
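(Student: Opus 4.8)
The final theorem simply collects the kernel bound established by \Cref{claim:kernel} and the two size estimates proved along the way, so the plan is to assemble these pieces rather than prove anything genuinely new. The statement asserts that \lvd on in-tournaments (and, by the symmetry remarked at the start of the section, out-tournaments) admits a kernel whose size is $\mathcal{O}((k+\ell+1)^3)+{(22(k+1)^5)}^{4^\alpha}$. I would first recall that \Cref{claim:kernel} already shows $D[X]$ is an equivalent instance, where $X=V(\triangle)\cup \bigcup_{(x,y)\in A(\triangle)}\mathcal{C}_{(x,y)}$. Hence the entire theorem reduces to bounding $|X|$, and the correctness of the kernel is inherited directly from the claim; nothing remains but an arithmetic estimate plus a remark on computability in polynomial time.

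For the size estimate I would bound the two contributions separately. First, $|V(\triangle)|$: by Procedure~\ref{reduction:disjoint} we have $|\triangle_0|\le \ell+k+1$, and after Procedures~\ref{reduction:one vertex} and~\ref{reduction:one edge} the counts $|\triangle_1|\le 3(k+\ell+1)^2$ and $|\triangle_2|\le 9(k+\ell+1)^3$ hold, so $|V(\triangle)|=\mathcal{O}((k+\ell+1)^3)$, exactly the first summand. Second, the cut-preserving contribution: the number of arcs $|A(\triangle)|=|A(D[V(\triangle)])|$ is at most quadratic in $|V(\triangle)|$, hence polynomial in $k+\ell$, and by \Cref{thm1} each set $\mathcal{C}_{(x,y)}$ computed in Procedure~\ref{reduction:cut set} has size at most ${(22(k+1)^5)}^{4^\alpha}$. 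Multiplying the arc count by the per-arc bound and adding $|V(\triangle)|$ gives $|X|\le |V(\triangle)|+|A(\triangle)|\cdot f(k+1,\alpha)$, which I would then simplify to the stated bound, absorbing the polynomial arc factor into the dominant exponential term (or, as the displayed inequality preceding the theorem does, stating the bound as the sum $\mathcal{O}((k+\ell+1)^3)+{(22(k+1)^5)}^{4^\alpha}$).

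Finally I would note the polynomial running time: the greedy maximal packing in Procedure~\ref{reduction:disjoint}, the bounded deletions in Procedures~\ref{reduction:one vertex}--\ref{reduction:one edge}, and the polynomial-time cut-preserving-set computation guaranteed by \Cref{thm1} are each polynomial, so the whole reduction runs in time polynomial in $|V(D)|+k+\ell$, meeting the definition of a kernelization. The out-tournament case follows verbatim by reversing all arcs, since transitivity and the in-tournament/out-tournament dichotomy are dual under arc reversal, as remarked at the opening of \Cref{sec-kernel}.

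The only mild subtlety, and the step I expect to need the most care, is confirming that the bounds $|\triangle_1|\le 3(k+\ell+1)^2$ and $|\triangle_2|\le 9(k+\ell+1)^3$ genuinely follow from the exhaustive application of Procedures~\ref{reduction:one vertex} and~\ref{reduction:one edge}: each vertex of $\triangle_0$ survives with at most $k+\ell+1$ triangles meeting $\triangle_0$ in exactly that vertex, and each arc with at most $k+\ell+1$ triangles meeting it in exactly its two endpoints, so multiplying by the $\mathcal{O}(k+\ell)$ vertices and $\mathcal{O}((k+\ell)^2)$ arcs of $D[V(\triangle_0)]$ yields the claimed polynomial counts. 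Once these packing estimates are pinned down, the theorem is a direct consequence of \Cref{claim:kernel} together with the arithmetic above.
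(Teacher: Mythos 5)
Your proposal is correct and follows essentially the same route as the paper: correctness is inherited from Claim~\ref{claim:kernel}, and the size bound is obtained by adding $|V(\triangle)|=\mathcal{O}((k+\ell+1)^3)$ to $|A(\triangle)|$ times the per-arc cut-preserving-set bound ${(22(k+1)^5)}^{4^\alpha}$ from Proposition~\ref{thm1}. Your added care about the packing estimates for $|\triangle_1|,|\triangle_2|$ and about absorbing the polynomial arc-count factor into the exponential term only makes explicit what the paper leaves implicit.
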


\section{FPT algorithms on in-tournaments/out-tournaments}\label{Sec-fpt}
In this section, we study \lvd for the special case when $\ell=0$ and the input graph is either an in-tournament or an out-tournament. 
First, we prove the following lemma, which is crucial in constructing our {\sf FPT} algorithm.

\begin{lemma}\label{lem:triangle}
	Let $D$ be an in-tournament (resp, out-tournament). Then $D$ has no transitive arcs  if and only if $D$ does not contain an acyclic triangle as an induced subgraph.
\end{lemma}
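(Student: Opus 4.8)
The plan is to prove the biconditional by contraposition in each direction, noting that the two directions have very different difficulty. For the direction asserting that a transitive-free digraph contains no acyclic triangle, I would argue the contrapositive, which in fact holds for an \emph{arbitrary} digraph: if $\{a,b,c\}$ induces an acyclic triangle, say with arcs $(a,b),(b,c),(a,c)$, then the path $a\to b\to c$ certifies that $(a,c)$ is a transitive arc (Definition~\ref{def:transitivity}). So this direction uses no structural hypothesis and is immediate.

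The substantive direction is the converse: assuming $D$ is an in-tournament possessing a transitive arc, I would exhibit an induced acyclic triangle. Let $(u,v)$ be a transitive arc and let $P\colon u=w_0,w_1,\dots,w_t=v$ be a witnessing path in $D-(u,v)$; since the direct arc has been removed we have $t\ge 2$, so the penultimate vertex $w_{t-1}$ is distinct from both $u$ and $v$. The key observation is that $u$ and $w_{t-1}$ are \emph{both} in-neighbours of $v$ (the former because $(u,v)$ is an arc, the latter because $(w_{t-1},v)$ is the last arc of $P$). Hence both lie in $N^-(v)$, and because $D$ is an in-tournament the induced subgraph $D[N^-(v)]$ is a tournament (Definition~\ref{def:outinlocal}); therefore $u$ and $w_{t-1}$ are joined by exactly one arc.

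The proof then finishes by a short case analysis on the orientation of that arc, and I expect locating the correct triangle in the ``reversed'' case to be the only real subtlety. If $(u,w_{t-1})\in A(D)$, then $\{u,w_{t-1},v\}$ already carries the arcs $(u,w_{t-1}),(w_{t-1},v),(u,v)$ and is an acyclic triangle. If instead $(w_{t-1},u)\in A(D)$, then $w_{t-1}\to u\to v$ shows that the arc $(w_{t-1},v)$ is itself transitive, and now $\{w_{t-1},u,v\}$ carries $(w_{t-1},u),(u,v),(w_{t-1},v)$, again an acyclic triangle. In both cases these digraphs are oriented (no digons), so between each pair of the three vertices there is at most one arc, and the vertices induce exactly the transitive triangle we exhibited, giving the desired induced acyclic triangle. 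The out-tournament case is handled symmetrically, either by reversing all arcs or directly by examining the two out-neighbours $v$ and $w_1$ of $u$ and using that $D[N^+(u)]$ is a tournament. A useful sanity check I would keep in mind is that no minimality or shortest-path argument on $P$ is actually needed, so I would make sure the write-up does not rely on one inadvertently.
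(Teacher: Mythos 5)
Your proof is correct and takes essentially the same route as the paper's: the forward direction is the identical contrapositive observation, and in the reverse direction the paper likewise takes the penultimate vertex of the witnessing path (its $a_r$, your $w_{t-1}$), notes that it and $u$ are both in-neighbours of $v$, and case-splits on the orientation of the arc that the in-tournament property forces between them. If anything, you are slightly more explicit than the paper in justifying that the path has an internal vertex and in remarking on why the resulting triangle is induced.
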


\begin{proof}
We prove the above lemma when $D$ is an in-tournament. A similar proof for when $D$ is an out-tournament can be constructed. First, we prove the forward direction using contradiction. Suppose $D$ has no transitive arcs but an induced acyclic triangle $\triangle$. W.L.O.G let $V(\triangle)=\{x,y,z\}$ and  $A(\triangle)=\{(x,y),(x,z),(z,y)\}$. But $(x,y)$ is a transitive arc in $D(\triangle)$, a contradiction.
	
	Suppose $D$ contains no induced acyclic triangles for the reverse direction. Let $e=(u,v)$ be a transitive  arc in $D$. Then, there exists a directed  path $P=\{u,a_1,\ldots,a_r,v\}$ in $D-e$. Since $u$ and $a_r$ are in-neighbors of $v$, by the definition of in-tournaments, there must be an arc $(u,a_r)$ or $(a_r, u)$ in $A(D)$. In the former case, we have an induced acyclic triangle with vertices $(u,v,a_r)$ with $(u,v)$ as the transitive arc. In the latter case, we have an induced acyclic triangle with the same set of vertices  and $(u,v)$ as the transitive arc. Thus we get contradictions for both cases.
\end{proof}

Notice that from Lemma \ref{lem:triangle}, \lvd on in-tournaments, as well as out-tournaments, reduces to the 3-{\sc Hitting Set} problem when $\ell=0$. In the  3-{\sc Hitting Set} problem, a universe $ U $, a family $\mathcal{F}$ of subsets of $ U $, and a non-negative integer $ k $  are given as inputs where the size of each set in $\mathcal{F}$ is at most 3. The goal is to check whether there exists  $U'\subseteq U $ of size at most $k$ that has a non-empty intersection with each set in $ \mathcal{F} $.

Given an instance $( D , k )$ of \zvd on $ D $ where $D$ is an in-tournament or an out-tournament with $ n $ vertices, one can create an equivalent instance $( U , \mathcal{F} , k' )$ of 3-{\sc Hitting Set} problem as follows:
\begin{itemize}
	\item[$\bullet$] $ U=V(D) $
	\item[$\bullet$] $ \mathcal{F} = \{ \{u,v,w\}| D[u,v,w] \text{ is an acyclic triangle}\} $
	\item[$\bullet$] $ k'=k $
\end{itemize}

As far as we know, the best-known algorithm for 3-{\sc Hitting Set} runs in time $ 2.0755^k n^{\mathcal{O}(1)} $ given by  Wahlström \cite{wahlstrom2007algorithms}. Hence we also get an algorithm for \zvd  on in-tournaments (resp, out-tournaments) running simultaneously. And the $\mathcal{O}(k^2)$ kernel for  3-{\sc Hitting Set} \cite{abu2010kernelization} can be adapted to design a kernel of similar size for \lvd, when $\ell=0$. Following the procedure,  as given in \cite{article}, one can improve the running time of the algorithm to $ 2^k n^{\mathcal{O}(1)}$ for  in-tournaments (resp, out-tournaments) and the  kernel size  to $\mathcal{O}(k)$ on local tournaments. Hence we have the following theorem. 

\begin{theorem}\label{theo-intour}
\zvd admits a $2^kn^{\mathcal{O}(1)}$ time {\sf FPT} algorithm and  $\mathcal{O}(k^2)$ kernel on in-tournaments (resp, out-tournaments). 
\end{theorem}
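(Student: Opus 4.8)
The plan is to exploit \Cref{lem:triangle}, which characterizes transitive-free in-tournaments (and, by symmetry, out-tournaments) as precisely those digraphs containing no induced acyclic triangle. When $\ell=0$, the solution $S$ must kill \emph{every} transitive arc, and by the lemma this is equivalent to hitting every acyclic triangle. First I would make the reduction to \textsc{$3$-Hitting Set} explicit: set the universe $U=V(D)$ and let $\mathcal{F}$ be the family of vertex-triples $\{u,v,w\}$ such that $D[u,v,w]$ is an induced acyclic triangle. Since every acyclic triangle has exactly three vertices, each set in $\mathcal{F}$ has size exactly $3$, so this is a legitimate \textsc{$3$-Hitting Set} instance. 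The correctness of the reduction is immediate from \Cref{lem:triangle}: a set $S\subseteq V(D)$ hits every member of $\mathcal{F}$ if and only if $D-S$ contains no induced acyclic triangle, which holds if and only if $D-S$ is transitive-free. Setting $k'=k$ gives the equivalence of the two instances, and $\mathcal{F}$ can be enumerated in polynomial time by checking all $\binom{n}{3}$ triples.

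Next I would transfer the algorithmic consequences. Plugging the instance into Wahlstr\"om's $2.0755^k n^{\mathcal{O}(1)}$ algorithm for \textsc{$3$-Hitting Set} \cite{wahlstrom2007algorithms} already yields an \fpt algorithm for \zvd, and the $\mathcal{O}(k^2)$ kernel of Abu-Khzam \cite{abu2010kernelization} transfers directly to give an $\mathcal{O}(k^2)$ vertex kernel for \zvd on in-tournaments and out-tournaments. To obtain the sharper $2^k n^{\mathcal{O}(1)}$ bound claimed in the statement, I would follow the refinement alluded to in \cite{article}: the acyclic-triangle hypergraph arising from an in-tournament is not an arbitrary $3$-uniform family but has additional structure (for instance, overlapping constraints among the triangles sharing an arc), which can be used to branch more efficiently than the generic \textsc{$3$-Hitting Set} bound. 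I would invoke that structural argument rather than reprove it, citing \cite{article}, to push the base of the exponent down to $2$.

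The main obstacle I anticipate is \emph{not} the reduction itself—which is a clean consequence of \Cref{lem:triangle}—but rather justifying the improved $2^k$ running time and the linear-size kernel claimed for local tournaments. The generic \textsc{$3$-Hitting Set} results only give $2.0755^k$ and $\mathcal{O}(k^2)$; the stronger bounds require the problem-specific structure of acyclic triangles in (local) tournaments. I would therefore devote the bulk of the argument to verifying that the structural properties used in \cite{article} genuinely apply to the triangle family $\mathcal{F}$ here: concretely, that the adjacency constraints forced by the in-tournament (resp.\ out-tournament) condition restrict how triangles can share vertices and arcs, enabling either a tailored branching rule or an improved crown/sunflower-type kernelization. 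Assuming those structural facts (as established earlier and in \cite{article}), the equivalence of instances immediately carries the bounds over to \zvd, completing the proof of \Cref{theo-intour}.
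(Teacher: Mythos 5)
Your proposal matches the paper's own proof essentially step for step: the reduction to \textsc{$3$-Hitting Set} via \Cref{lem:triangle}, the appeal to Wahlstr\"om's algorithm and the standard $\mathcal{O}(k^2)$ kernel, and the citation of \cite{article} for the improvement to $2^k$. You are, if anything, more explicit than the paper about the fact that the $2^k$ bound does not follow from the generic \textsc{$3$-Hitting Set} machinery and needs the triangle-structure argument from \cite{article}, which the paper also simply cites rather than reproves.
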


\section{Conclusion}

We investigate the {\sc $\ell$-RTVD} problem across various established classes of digraphs, including directed acyclic graphs (DAGs), planar DAGs, $\alpha$-bounded digraphs, tournaments, and their diverse generalizations such as in-tournaments, out-tournaments, local tournaments, and acyclic local tournaments. The problem can be solved in polynomial time for tournaments, $\alpha$-bounded digraphs, and acyclic local tournaments when $\ell$ is fixed. However, it is $\textsf{NP}$-Hard in planar DAGs with a maximum degree of 6. In terms of parameterized complexity, we find polynomial kernels for {\sc $\ell$-RTVD} on in-tournaments and out-tournaments when parametrized by $k+\ell$ for graphs with a bounded independence number. Nevertheless, the problem is fixed-parameter intractable for DAGs when only parameterized by $k$. Exploring {\sc $\ell$-RTVD} on DAGs using structural parameters is a potential avenue for further research.

\bibliography{main}

\end{document}